\documentclass[11pt]{article}

\date{\today}

\usepackage[letterpaper]{geometry}
\usepackage[utf8]{inputenc}
\usepackage{amsthm}
\usepackage{amsmath}
\usepackage{amsfonts}
\usepackage{amssymb}
\usepackage[hidelinks]{hyperref}
\usepackage{cleveref}
\usepackage{xcolor}
\usepackage{fullpage}
\usepackage{algorithm}
\usepackage{algpseudocode}
\usepackage{xspace}
\usepackage{multirow}
\usepackage{diagbox}
\usepackage[colorinlistoftodos,textsize=tiny,textwidth=2cm,color=green!50!gray]{todonotes}

\usepackage{thmtools, thm-restate} 

\usepackage{bm}

\usepackage{silence}

\WarningFilter{latexfont}{Font shape}
\WarningFilter{latexfont}{Some font shapes were not available, defaults substituted.}

\algnewcommand{\IIf}[1]{\State\algorithmicif\ #1\ \algorithmicthen}
\algnewcommand{\EndIIf}{\unskip\ \algorithmicend\ \algorithmicif}

\usepackage{derivative} 
\usepackage{mathtools} 
\usepackage{bbm} 
\usepackage{tikz}
\usepackage{forest}
\newtheorem{theorem}{Theorem}[section]
\newtheorem{corollary}[theorem]{Corollary}
\newtheorem{claim}[theorem]{Claim}
\newtheorem{lemma}[theorem]{Lemma}

\newtheorem{definition}[theorem]{Definition}
\newtheorem{remark}[theorem]{Remark}

\newtheorem{observation}[theorem]{Observation}

\ifdefined\DEBUG

    \newcommand{\aay}[1]{\textcolor{blue}{#1}}
    
    \def\rem#1{{\marginpar{\raggedright\scriptsize #1}}}
    \newcommand{\adir}[1]{\rem{\textcolor{red}{$\bullet$ #1}}}
    \newcommand{\euir}[1]{\rem{\textcolor{green}{$\bullet$ #1}}}
    \newcommand{\aayr}[1]{\rem{\textcolor{blue}{$\bullet$ #1}}}

    \newcommand{\amatya}[1]{\todo[color=blue!100!black!50]{A: #1}}
    \newcommand{\euiwoong}[1]{\todo[color=green!100!black!50]{E: #1}}
\else

    \newcommand{\aay}[1]{#1}
    \newcommand{\adir}[1]{}
    \newcommand{\euir}[1]{}
    \newcommand{\aayr}[1]{}
    \newcommand{\euiwoong}[1]{}
    \newcommand{\amatya}[1]{}
\fi

\newcommand{\sat} {\ensuremath{\mathsf{sat}}\xspace}
\newcommand{\unsat} {\mathsf{unsat}}

\newcommand{\maxcsps} {\textsc{Max-CSP}s\xspace}
\newcommand{\maxcsp} {\textsc{Max-CSP}\xspace}
\newcommand{\ug} {\textsc{Unique Games}\xspace}
\newcommand{\mincsp}{\textsc{Min-CSP}\xspace}
\newcommand{\minkcsp}{\textsc{Min-$k$-CSP}\xspace}
\newcommand{\minksat}{\textsc{Min-$k$-SAT}\xspace}
\newcommand{\kcsp}{$k$-\textsc{CSP}\xspace}
\newcommand{\kcsps}{$k$-\textsc{CSPs}\xspace}
\newcommand{\klin}{$k$-\textsc{LIN}\xspace}
\newcommand{\kand}{$k$-\textsc{AND}\xspace}
\newcommand{\krcsp}[2]{\textsc{$(#1,#2)$-CSP}\xspace}
\newcommand{\ncsp}[1]{\textsc{$#1$-CSP}\xspace}
\newcommand{\mincsps}{\textsc{Min-CSP}s\xspace}

\newcommand{\minuncut}{\textsc{Min-Uncut}\xspace}
\newcommand{\mintwo}{\textsc{Min-2-SAT}\xspace}
\newcommand{\maxtwo}{\textsc{Max-2-SAT}\xspace}
\newcommand{\cc} {\textsc{Correlation Clustering}\xspace}
\newcommand{\lra}{\textsc{Low Rank Approximation}\xspace}

\newcommand{\maxcut}{\ensuremath{\textsc{Max-Cut}}}

\newcommand{\nae}{\ensuremath{\textsc{NAE-$3$-SAT}}}

\newcommand{\nsat}[1]{\ensuremath{\textsc{$#1$-SAT}}\xspace}

\newcommand{\ksat}{\nsat{k}}

\newcommand{\threesat}{\nsat{3}}
\newcommand{\twosat}{\nsat{2}}

\newcommand{\csp}{\ensuremath{\textsc{CSP}}\xspace}

\newcommand{\pac}[1]{\ensuremath{\textsc{$(#1)$\mbox{-}PAC}}\xspace}

\newcommand{\itwocsp}[1]{\ensuremath{\textsc{$#1$\mbox{-}Induced\ncsp{2}}}\xspace}
\newcommand{\kitwocsp}{\itwocsp{k}}

\newcommand{\poly}{\mbox{\rm poly}}
\newcommand{\polylog}{\mbox{\rm  polylog}}

\newcommand{\sdp}{\ensuremath{\mathsf{sdp}}}
\newcommand{\eps}{\varepsilon}








\newcommand{\argmax}{\mathrm{argmax}}


\newcommand{\veca}{\ensuremath{\mathbf{a}}}

\newcommand{\vecell}{\ensuremath{\bm{\ell}}}

\newcommand{\vecv}{\ensuremath{\mathbf{v}}}

\newcommand{\cA}{\ensuremath{\mathcal{A}}}

\newcommand{\cC}{\ensuremath{\mathcal{C}}}
\newcommand{\calC}{\ensuremath{\mathcal{C}}}

\newcommand{\calD}{\ensuremath{\mathcal{D}}}

\newcommand{\R}{\ensuremath{\mathbb{R}}}

\newcommand{\cS}{\ensuremath{\mathcal{S}}}

\title{Min-CSPs on Complete Instances}
\author{Aditya Anand\footnote{Supported in part by NSF grant CCF-2236669.} \and Euiwoong Lee\footnote{Supported in part by NSF grant CCF-2236669 and Google.} \and Amatya Sharma\footnote{Supported in part by NSF grant CCF-2236669.}}
\date{University of Michigan, Ann Arbor}

\begin{document}

    \maketitle
    \begin{abstract}
Given a fixed arity $k \geq 2$, \minkcsp on complete instances is the problem whose input consists of a set of $n$ variables $V$ and one (nontrivial) constraint for every $k$-subset of variables (so there are $\binom{n}{k}$ constraints), and the goal is to find an assignment that minimizes the number of unsatisfied constraints. 
Unlike \textsc{Max-$k$-CSP} that admits a PTAS on more general dense or expanding instances, the approximability of \minkcsp has not been well understood. Moreover, for some CSPs including \minksat, there is an approximation-preserving reduction from general instances to dense and expanding instances, leaving complete instances as a unique family that may admit new algorithmic techniques. 

In this paper, we initiate the systematic study of \mincsps on complete instances. 
First, we present an $O(1)$-approximation algorithm for \mintwo on complete instances, the minimization version of \maxtwo. Since $O(1)$-approximation on dense or expanding instances refutes the Unique Games Conjecture, it shows a strict separation between complete and dense/expanding instances. 

Then we study the decision versions of CSPs,  
whose goal is to find an assignment that satisfies all constraints; an algorithm for the decision version is necessary for any nontrivial approximation for the minimization objective. Our second main result is a quasi-polynomial time algorithm for every Boolean $k$-CSP on complete instances, including $\ksat$. 
We complement this result by giving additional algorithmic and hardness results for CSPs with a larger alphabet, yielding a characterization of (arity, alphabet size) pairs that admit a quasi-polynomial time algorithm on complete instances.
    \end{abstract}
    
    \thispagestyle{empty}
    \newpage
    \clearpage
    \pagenumbering{arabic} 
    
    \section{Introduction}

Constraint Satisfaction Problems (CSPs) form arguably the most important class of computational problems.
Starting from the first NP-complete problem of \threesat, they also have played a crucial role in approximate optimization, resulting in (conditional) optimal approximabilities of fundamental problems including 
 \textsc{Max}-3\textsc{LIN}, \textsc{Max}-3\textsc{SAT}, \textsc{Max}-\textsc{Cut}, and \ug~\cite{Hastad01, Khot02, KKMO07}.
The study of CSPs has produced numerous techniques that are widely used for other optimization problems as well; famous examples include linear program (LP) and semidefinite program (SDP) rounding from the algorithms side, and
probabilistically checkable proofs (PCPs) and the Unique Games Conjecture (UGC) from the hardness side.

While the aforementioned results concern {\em general instances}, there also have been active studies on CSPs on {\em structured instances} where the input is promised to exhibit a certain structure.
For $k$-CSPs where each constraint depends on $k$ variables and the structure of the constraints can be represented as a $k$-uniform hypergraph $H = (V, E)$ with $n = |V|$ vertices, 
two of the most studied structured instances are 
{\em dense instances}, where we have $|E| = \Omega(n^k)$ and 
{\em (high-dimensional) expanding instances}, where $H$ exhibits some expansion property 
(which can be further generalized to low-threshold rank graphs and (certifiable) small-set expanders).

For \maxcsps whose goal is to maximize the number of satisfied constraints, the research on dense/expanding instances has produced beautiful algorithmic techniques, most notably based on {\em random sampling}~\cite{arora1995polynomial, 
bazgan2003polynomial, alon2003random, de2005tensor, mathieu2008yet, KS09, barak2011subsampling, yaroslavtsev2014going, manurangsi2015approximating, fotakis2016sub}, {\em convex hierarchies}~\cite{de2007linear, arora2008unique, BRS11, GS11, yoshida2014approximation, alev2019approximating, jeronimo2020unique, bafna2021playing}, and {\em regularity lemmas}~\cite{frieze1996regularity, coja2010efficient, oveis2013new, jeronimo2021near}.
The conclusion is: {\em every} \maxcsp admits a PTAS on dense/expanding instances via {\em any} of these tools. 
{\em Complete instances}, where $E = \binom{V}{k}$, is obviously a special case of both dense and expanding instances, so the existence of a PTAS is an immediate corollary for \maxcsp. 

In this paper, we initiate the systematic study of \mincsps on complete instances, where the goal is to minimize the number of constraints not satisfied by the output assignment. As an example consider the problem of \mintwo (also known as \mintwo \textsc{Deletion}). The input for this problem is a complete instance on variables $V$ and \twosat clauses $\cC$ such that for every pair of variables $v_1,v_2\in V$, there is a clause $(\ell_1\vee \ell_2)\in \cC$, for some literals $\ell_i \in \{v_i, \neg v_i\}$, $\forall i\in[2]$. The goal is to delete minimum number of clauses so that the remaining instance is satisfiable.

While the \mincsps are equivalent to their maximization counterparts in terms of exact optimization, minimization versions often exhibit more interesting characterizations of their approximabilities and demand a deeper understanding of algorithms and hardness techniques;
for general instances, while every \maxcsp (with a constant alphabet size) trivially admits an $\Omega(1)$-approximation, \mincsps with the Boolean alphabet already exhibit an interesting structural characterization that identifies the optimal approximability of every \mincsp as one of $\{ 1, O(1), \polylog(n), \poly(n), \infty \}$-approximations~\cite{khanna2001approximability}.

Compared to the almost complete literature on \maxcsps on general/structured instances and \mincsps on general instances, the study of \mincsps on structured instances is not as rich. Furthermore, we believe that a thorough investigation of structured \mincsps will (1) obtain a fine-grained understanding of instance structures, (2) establish strong connections between CSPs and problems arising in data science and machine learning, and (3) unify various algorithmic techniques and yield new ones. They are briefly elaborated in the following.

\paragraph{Fine-grained understanding of structures.}
Compared to \maxcsps where every problem admits a PTAS on the previously discussed instances, \mincsps will exhibit more interesting characterizations of their approximabilities depending on specific predicates and structures.
For instance, for \minuncut (the minimization version of \maxcut), extensions of the techniques used for \maxcsps yields an $O(1)$-approximation~\cite{GS11, KS09} on expanders and dense graphs.
However, for \mintwo (the minimization version of \maxtwo)\footnote{This problem is also known as \textsc{Min}-2\textsc{CNF Deletion}. Throughout the paper, we use the same name for both max and min versions, the only exception being $\minuncut$ and \textsc{Min}-\textsc{Undicut} (the min version of \textsc{Max}-\textsc{Dicut}).}, one can easily encode a general hard instance in dense graphs and expanders (see Claim~\ref{claim:hardness-dense}), implying that the problem in dense/expanding instances will be unlikely to admit an $O(1)$-approximation. 
Our first main result (Theorem~\ref{thm:2sat}) shows an $O(1)$-approximation for \mintwo on complete instances, formally separating complete and dense/expanding instances. 


\paragraph{Connections to data science and machine learning.}
It turns out that numerous optimization problems arising in data science, mostly notably on clustering, metric fitting, and low-rank approximation, can be cast as a \mincsp on complete (bipartite) instances; indeed, most known algorithms for these problems use techniques initially developed for CSPs. 

Well-known examples include \cc~\cite{CMSY15, CLN22, CLLN23, CCLLNV24} and \lra~\cite{ban2019ptas, cohen2023ptas}.\footnote{The connections from CSPs to \cc and \lra are elaborated in Section 1 of~\cite{CLN22} and Section 1.1 of~\cite{cohen2023ptas} respectively.}
Recently, minizing the Kamada-Kawai objective for Multidimensional Scaling has been studied from the approximation algorithms perspective, using hierarchy-based tools for CSPs~\cite{demaine2021multidimensional, bakshi2023quasi}.

While these connections between CSPs and the aforementioned problems in DS/ML (and more) remain at problem-specific levels, we believe that the comprehensive study of \mincsps will strengthen these connections and yield new insights.

\paragraph{Unifying and creating techniques.}
As mentioned before, PTASes for Max CSPs on dense or expanding instances can be achieved by three different techniques: random sampling, regularity lemmas, and convex hierarchies.
They are independently interesting methods that have applications in other areas, and it would be desirable to explore connections between them and have a unified understanding.
\maxcsps are not the perfect class of problems to study their relative powers and connections, since {\em any technique can do everything}.

The study of \mincsps and related problems in data science will provide a more interesting arena to do so. For many problems in this new space, their best-known approximation is achieved by only one method out of the three, and recovering the same guarantee by the other two methods seems to require a fundamentally better understanding of the techniques.
For example, a PTAS for \minuncut on dense graphs is currently only achieved by random sampling~\cite{KS09} while the best algorithm for \cc uses convex hierarchies~\cite{CCLLNV24}. Is this separation between techniques inherent, or will they eventually become equivalent again?


\subsection{Our Results}
\label{sec:results}
As previously mentioned, any \maxcsp on dense/expanding graphs admits a PTAS.
Such success for \maxcsps was partially transferred to certain \mincsps, most notably $O(1)$-approximation algorithms (sometimes PTASes) for \ug and \minuncut~\cite{bazgan2003polynomial, KS09, GS11, meot2023voting}. 
More generally, Karpinski and Schudy~\cite{KS09} presented a PTAS for every {\em fragile} CSP~\cite{KS09} on dense instances, where a CSP is fragile when changing the value of a variable always flips a clause containing it from satisfied to unsatisfied. 
If we restrict our attention to Boolean (each variable has two possible values) binary (each constraint contains two variables) CSPs, \minuncut, \textsc{Min}-\textsc{Undicut}, and \textsc{Min}-2-\textsc{AND} are fragile, essentially leaving only \mintwo. 

However, 
the following simple reduction shows that \mintwo on dense (and expanding) instances is as hard as general instances: given a general instance for \mintwo with $n$ variables and $m$ constraints, add $n' = \Omega(n/\eps)$ dummy variables, and for every pair $(u, v)$ of variables that contains at least one dummy, add a constraint $(u \vee v)$. Then the new instance has $\binom{n' + n}{2} - \binom{n}{2} + m \geq (1 - \eps)\binom{n' + n}{2}$ constraints, and every assignment of the original instance has the same value as the corresponding assignment of the new instance that gives True to every dummy variable, which implies that the optimal values of these two instances are the same. 
Since \mintwo on general instances does not admit an $O(1)$-approximation algorithm assuming the UGC~\cite{Khot02}, the same hardness holds for dense and expanding instances.

Our first main result is an $O(1)$-approximation algorithm for \mintwo, which separates complete instances from dense/expanding instances assuming the UGC.

\begin{restatable}{theorem}{2sat}
There is a polynomial-time $O(1)$-approximation algorithm \textup{\mintwo} on complete graphs. 
\label{thm:2sat}
\end{restatable}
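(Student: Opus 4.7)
Setup and a key local-optimality inequality. Let $x^*$ denote an optimal assignment with cost $\opt$, and for each variable $v$ let $d^*_v$ be the number of clauses incident to $v$ that $x^*$ leaves unsatisfied, so $\sum_v d^*_v = 2 \opt$. A crucial observation I plan to exploit is the following local-optimality inequality: if $V^*_v$ denotes the set of ``voters for $v$''---those $w$ such that the clause between $v$ and $w$, once $a_w = x^*_w$ is fixed, uniquely determines $a_v$---then flipping $x^*_v$ changes the cost by $|V^*_v| - 2 d^*_v$, which must be nonnegative by optimality; hence $|V^*_v| \geq 2 d^*_v$ for every $v$. This inequality drives the approximation guarantee of a majority-vote rule below.

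Algorithm. I enumerate all $2n$ choices of pivot $(v_0, b) \in V \times \{T, F\}$. For each, set $a_{v_0} := b$; for every $u \neq v_0$, if the clause between $v_0$ and $u$ is not satisfied by $a_{v_0}$, set $a_u$ to the unique value satisfying that clause, giving the ``forced'' set $L$. On the ``free'' complement $R$, set each $a_u$ ($u \in R$) by majority vote over those clauses $(u, w)$ with $w \in L$ which, under the forced $a_w$, require a particular value of $a_u$; return the best of the $2n$ candidates. For the analysis, pick $v_0$ with $d^*_{v_0} \leq 2 \opt / n$ and $b = x^*_{v_0}$; every clause at $v_0$ is then satisfied, and among $L$ the forced values disagree with $x^*$ on exactly $d^*_{v_0}$ vertices, call this error set $E_L$. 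Since $|E_L| \leq 2\opt/n$, flipping these relative to $x^*$ changes at most $|E_L| \cdot n = O(\opt)$ clauses' status on $L$-internal and $L$-$R$ clauses, so their total cost is $O(\opt)$ added to the $x^*$-cost.

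Majority-vote analysis and main obstacle. The heart of the argument is bounding the cost of the majority-vote phase on the $R$-internal and remaining $L$-$R$ clauses. Using $|V^*_u| \geq 2 d^*_u$ together with concentration over the effectively random partition of $V^*_u$ between $L$ and $R$, I expect to show that for every $u \in R$ with $|V^*_u \cap L|$ above a suitable threshold, the majority rule correctly recovers $x^*_u$; for the remaining ``low-voter'' $u \in R$, the total cost incurred on $u$'s clauses is bounded by $|V^*_u|$, which is summed against the inequality $|V^*_v| \geq 2 d^*_v$ to yield $O(\opt)$. The main obstacle I foresee is controlling two error sources simultaneously: the perturbation of the vote tallies caused by $E_L$, and potential cascading errors on low-voter variables that influence many $R$-internal clauses. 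I expect to handle these either by recursing the pivot procedure inside $R$, or by a direct charging argument against forbidden 4-variable sub-instances (since complete 2SAT on at most 3 variables is always satisfiable, 4 is the minimal unsatisfiable size), giving a bad-4-tuple LP lower bound on $\opt$ to absorb the remaining constants.
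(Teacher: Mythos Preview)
Your approach is genuinely different from the paper's, which solves the standard SDP for \mintwo, works in the directed implication graph, and applies CKR ball-growing; the $O(1)$ factor there comes from showing that completeness forces, for every small-cost edge, either many variables to lie ``close'' in the SDP metric (so an improved CKR charging applies) or the endpoint to carry $\Omega(n)$ SDP mass (so there are only $O(\sdp/n)$ such vertices). Your pivot-and-majority-vote scheme is purely combinatorial and would be more elementary if it worked.

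The majority-vote step has a real gap, though. The split of $V^*_u$ between $L$ and $R$ is \emph{not} ``effectively random'': whether $w$ lands in $L$ depends only on which literal of $v_0$ appears in the $(v_0,w)$ clause, and the adversary chooses those literals when building the instance. Two concrete failure modes follow. First, nothing lower-bounds $|L|$ for the correct pivot --- it can be $O(1)$ even with $d^*_{v_0}=0$, leaving essentially all of the instance in $R$ with no voters at all. Second, even when $|V^*_u\cap L|$ is large, the adversary can arrange that $L$ contains predominantly the $d^*_u$ ``wrong'' voters for $u$; the inequality $|V^*_u|\ge 2d^*_u$ controls the global vote margin, not the margin restricted to an adversarial subset. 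Your low-voter fallback does not rescue this: the number of clauses at $u$ left unsatisfied by the algorithm is governed by the algorithm's assignment $a$ on the neighbours of $u$, not by $x^*$, so it is not bounded by $|V^*_u|$ once other $R$-vertices are themselves misassigned. The two escape hatches you float --- recursing inside $R$, or a bad-$4$-tuple LP lower bound --- are each substantial missing pieces, and neither obviously works (there is no depth or error-accumulation control on the recursion, and unsatisfied clauses of $x^*$ need not be witnessed by locally unsatisfiable $4$-variable sub-instances). To close the gap you would need a structural lemma producing a pivot with small $d^*_{v_0}$, large $|L|$, \emph{and} representative $L$-restricted votes simultaneously; the paper sidesteps all of this by using the SDP value as its lower bound.
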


What about other Boolean CSPs? In order to have any nontrivial approximation algorithm for some \mincsp, it is necessary to have an algorithm for its {\em decision version}, which decides whether there is an assignment satisfying all the constraints; otherwise, it is impossible to decide whether the optimal value is zero or not. While \mintwo had classical algorithms for its decision version \twosat even on general instances, there are numerous CSPs whose decision version is NP-hard on general instances, most notably \ksat.\footnote{Throughout the paper, \kcsp or \ksat without the prefix \textsc{Max} or \textsc{Min} denotes the decision version.}  

To formally state the result, let $\Sigma = \{0, 1 \}$ be the {\em alphabet} of Boolean CSPs.
Let \kcsp be the computational problem whose input consists of the set of variables $V = \{ v_1, \dots, v_n \}$ and the set of constraints $\calC$ where each $C \in \calC$ is a subset of size $k$ (called $k$-set onwards) variables and comes with a predicate $P_C : \Sigma^C \to \{ \sat, \unsat \}$.\footnote{We will assume that each $k$-set has at most one constraint unless mentioned otherwise.}
The goal is to find an assignment $\alpha : V \to \Sigma$ such that $P_C(\alpha(C)) = \sat$ for every $C \in \mathcal{C}$, where 
$\alpha(C) := (\alpha(v_i))_{v_i \in C}$ be the restriction of $\alpha$ to $C$.
An instance $(V, \calC)$ of \kcsp is {\em complete} if every $k$-subset of $V$ corresponds to exactly one constraint $C \in \calC$ (so $|\calC| = \binom{n}{k})$), and for every $P_C$, there exists at least one $\sigma \in \Sigma^k$ such that $P(\sigma) = \unsat$. (Without the second restriction, every instance of general \kcsp can be reduced to a complete instance by putting the trivial predicate that accepts every string to each $k$-set that did not have a constraint.) This framework captures well-known CSPs including \kcsp, \kand, \klin on complete instances. 

Note that the above reduction from general to dense instances for \twosat immediately extends to \ksat, implying that \ksat and more generally \kcsp are hard on dense instances (see Claim~\ref{claim:hardness-dense}).
Our second main result presents a quasi-polynomial time algorithm for complete \kcsp.

\aayr{should we mention the all satisfying assignments part here?}
\begin{restatable}{theorem}{kcspassgn}
For any constant $k$,
there is an $n^{O(\log n)}$-time algorithm that decides whether a given complete instance for \textup{\kcsp} is satisfiable or not. 

\label{thm:kcspassgn}
\end{restatable}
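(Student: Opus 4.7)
The plan is to design a recursive branch-and-propagate algorithm that maintains a partial assignment $\alpha$ on a growing subset $S \subseteq V$. The core subroutine is unit propagation: for every unassigned $v \in V \setminus S$ and every $(k-1)$-subset $T \in \binom{S}{k-1}$, evaluate $P_{T \cup \{v\}}(\alpha|_T, \cdot)$ to obtain a unary restriction on $\alpha(v)$, and intersect across $T$ to obtain an allowed set $A_v \subseteq \{0,1\}$. If $A_v = \emptyset$ for some $v$, fail the branch; if $A_v$ is a singleton, extend $\alpha$ and iterate. When propagation saturates, the algorithm branches by enumerating over all pairs $(v, b) \in (V \setminus S) \times \{0, 1\}$ (branching factor $O(n)$), recursing on $\alpha \cup \{v \mapsto b\}$. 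The recursion depth is capped at $c \log n$; leaves with $S = V$ are verified against all constraints, while leaves reaching the depth cap with $S \neq V$ are declared \unsat\ on their branch.

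The total running time is $(2n)^{c \log n} \cdot \poly(n) = n^{O(\log n)}$, so correctness reduces to a shrinkage lemma: whenever the current $\alpha$ extends to some satisfying assignment $\alpha^*$, there exists $v \in V \setminus S$ such that setting $\alpha(v) := \alpha^*(v)$ and running propagation reduces the number of undetermined variables by at least a constant factor. Iterating the lemma along the path that picks such $(v, \alpha^*(v))$ at each branching step forces $S$ to grow to $V$ within depth $O(\log n)$, so the correct root-to-leaf path fits within the depth cap and recovers $\alpha^*$.

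The main obstacle is proving the shrinkage lemma. The natural attack is an averaging argument: for a random $v \in V \setminus S$ (with $\alpha(v)$ set to $\alpha^*(v)$), bound the expected fraction of undetermined $u$'s whose value gets forced by some constraint $P_{T'' \cup \{v, u\}}$ with $T'' \in \binom{S}{k-2}$ at $(\alpha^*(T''), \alpha^*(v))$. Completeness enters crucially: every such $k$-subset carries a non-trivial constraint, so collectively these constraints cannot all remain "generically satisfied." The technical difficulty is handling pathological constraint families where unit propagation fires slowly---e.g., a complete instance whose every clause is ``not all zero'' with $\alpha^* = \mathbf{1}$, where propagation only fires once $\alpha|_S$ contains at least $k - 1$ zeros. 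Handling such cases likely requires refining the algorithm: either enumerating a larger seed of $O(\log n)$ variables at the root (so that the shrinkage argument is only needed \emph{after} this initial enumeration), or augmenting each branching node with ``default'' candidate completions (such as the majority assignment on $V \setminus S$) to be verified directly. Finally, turning the decision procedure into an enumeration of all satisfying assignments---as alluded to in the remark preceding the theorem---requires tracking all successful leaves rather than halting at the first.
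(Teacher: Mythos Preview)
Your shrinkage lemma is false, and you have already supplied the counterexample: for the complete instance whose every clause is ``not all zero'' and $\alpha^* = \mathbf{1}$, assigning any subset $S$ to $\mathbf{1}$ triggers no propagation at all, so the $\alpha^*$-consistent branch hits the depth cap with $|S| = O(\log n)$ and is pruned. Neither proposed patch repairs this. Enlarging the seed to $O(\log n)$ variables does nothing here, since those variables still receive value $1$ and still force nothing. Testing a ``default'' completion such as the all-ones or majority extension happens to succeed on this toy instance, but there is no reason it should in general: the hard regime is precisely when, for \emph{every} $(k-1)$-tuple $\vecv$, the assignment $\alpha^*(\vecv)$ forces fewer than $\eps_k|V_U|$ variables, and in that regime $\alpha^*$ can be any of up to $O(n^{k-2})$ solutions of an auxiliary constraint system, with no reason to be near a constant or majority pattern. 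Your proposal offers no mechanism to locate $\alpha^*$ there.

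The paper fills the gap with an idea absent from your plan: \emph{arity reduction}. It branches on $(k-1)$-tuples rather than single variables and observes that when no tuple is good for $\alpha^*$, the failure itself carries information. Since $\sum_{\veca \in \{0,1\}^{k-1}} n_{\vecv,\veca} \geq |V_U|-k+1$ (each clause through $\vecv$ has an unsatisfying assignment), the maximizer $\veca' := \argmax_\veca n_{\vecv,\veca}$ satisfies $n_{\vecv,\veca'} \geq |V_U|/2^{k} > \eps_k|V_U|$, hence $\veca' \neq \alpha^*(\vecv)$. Recording the single forbidden pattern $\veca'$ on every $(k-1)$-subset yields a \emph{complete} $(k-1)$-CSP on $V_U$ that $\alpha^*$ must satisfy. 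One then recurses on arity: by induction on $k$ (base case $k=2$ via the $O(n^2)$-time enumerator for complete \ncsp{2}), list all $O(n^{k-2})$ solutions of this smaller-arity instance and test each against the original constraints. This inductive use is also why the paper proves the stronger enumeration guarantee rather than bare decision; your remark that enumeration ``requires tracking all successful leaves'' understates the issue, since in your tree the leaf corresponding to $\alpha^*$ need not exist at all.
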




What happens beyond the Boolean alphabet? For $r \geq 3$, let \krcsp{k}{r} be the \kcsp defined above with the only difference being the alphabet $\Sigma = \{0, 1, \dots, r - 1 \}$. The following theorems give a complete classification of \krcsp{k}{r} that admits a quasi-polynomial time algorithm (assuming $\mathbf{NP} \subsetneq \mathbf{QP}$); the only tractable cases are $(k, 2)$ for any $k$ and $(2, 3)$.

\begin{restatable}{theorem}{twothreecsp}
There is an $n^{O(\log n)}$-time algorithm that decides whether a given complete instance for \textup{\krcsp{2}{3}} is satisfiable or not. 
\label{thm:23csp}
\end{restatable}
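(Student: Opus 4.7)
The plan is to lift the branching-and-propagation approach behind Theorem~\ref{thm:kcspassgn} to the ternary alphabet $\{0,1,2\}$. I would maintain a partial assignment $\sigma$ on a growing set $V_\sigma \subseteq V$ together with, for every free variable $u \notin V_\sigma$, the list $L_u \subseteq \{0,1,2\}$ of values still consistent with the restricted unary constraints $P_{\{u,v\}}(\cdot,\sigma(v))$ for $v \in V_\sigma$. The algorithm alternates \emph{unit propagation}---commit $\sigma(u) := c$ whenever $L_u = \{c\}$, and backtrack whenever $L_u = \emptyset$---with \emph{branching}: enumerate all $3^{|S|} = n^{O(1)}$ extensions of $\sigma$ to a chosen set $S$ of $\Theta(\log n)$ free variables. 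The target is to run $O(\log n)$ such rounds, each halving the number of free variables, for a total running time of $n^{O(\log n)}$.

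The structural claim I would aim to prove is a dichotomy: for every complete $(2,3)$-CSP sub-instance, either (i) the instance admits a satisfying assignment that is constant on all but $O(\log n)$ variables, a case I would solve directly in $n^{O(\log n)}$ time by enumerating the exceptional set and its values, or (ii) for every satisfying extension $\alpha$ there is a \emph{collapsing} pair $(S,\sigma)$ with $|S| = \Theta(\log n)$ and $\sigma = \alpha|_S$, in the sense that unit propagation starting from $\sigma$ either derives a contradiction or forces values for at least half of the remaining free variables. I would prove (ii) by a greedy construction given a hypothetical $\alpha$: iteratively add to $S$ a free variable $v$ whose commitment $\sigma(v) := \alpha(v)$ eliminates the largest total mass $\sum_u |L_u|$ from the current lists. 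A counting argument exploiting completeness---every free variable participates in a binary constraint with $v$, and every constraint has at least one forbidden pair---should show that each greedy step multiplies the total list mass by a constant factor strictly less than $1$, so $\Theta(\log n)$ steps suffice to collapse the lists. Since $\alpha$ is unknown, at each round the algorithm enumerates all $\binom{n}{|S|} \cdot 3^{|S|} = n^{O(\log n)}$ candidate pairs $(S,\sigma)$ and keeps only the branches that survive propagation.

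The main obstacle I anticipate is the shrinking step on ``loose'' instances, where every binary constraint forbids only one or two of the nine pairs, so that committing a single variable may not force any list down to a singleton. In this regime the counting argument of (ii) must hand off to the constant-like case (i). To close this, I would use the special combinatorics of alphabet size $3$: any two distinct size-$2$ subsets of $\{0,1,2\}$ intersect in exactly one element, so after $O(\log n)$ greedy commitments the remaining free variables partition into $O(1)$ classes by their common list, and each class can be reduced to a \twosat-like instance on two Booleans per variable, solvable in polynomial time. Making this dichotomy fully rigorous---and verifying that the argument breaks down at $r = 4$ in a way consistent with the hardness direction of the alphabet classification---is where I expect the bulk of the technical work to lie.
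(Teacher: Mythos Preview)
Your dichotomy is where the argument breaks. Take $V=\{v_0,\dots,v_{n-1}\}$, set $\alpha(v_i)=i\bmod 3$, and for each pair $\{v_i,v_j\}$ forbid only the single assignment $(v_i,v_j)\mapsto(\alpha(v_i)+1,\alpha(v_j)+1)$ (arithmetic mod~$3$). This is a complete $(2,3)$-instance and $\alpha$ satisfies it. But committing any $v$ to its correct value $\alpha(v)$ eliminates \emph{nothing} from any list $L_u$, because the forbidden pair at $\{u,v\}$ has $v$-coordinate $\alpha(v)+1\neq\alpha(v)$. So your case~(ii) fails outright: no greedy step reduces list mass at all, let alone by a constant factor, regardless of how many variables you commit. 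Case~(i) also fails: any assignment equal to a constant $c$ on $V\setminus E$ with $|E|=O(\log n)$ is violated by the $\Theta(n^2)$ pairs $\{v_i,v_j\}\subseteq V\setminus E$ with $\alpha(v_i)=\alpha(v_j)=c-1$, whose forbidden assignment is exactly $(c,c)$. Your counting claim that ``every constraint has at least one forbidden pair'' forces multiplicative shrinkage is simply false---the forbidden pair need not involve the committed value. (Separately, $O(\log n)$ rounds each enumerating $\binom{n}{\Theta(\log n)}$ subsets gives $n^{O(\log^{2} n)}$, not $n^{O(\log n)}$; but this is secondary.)

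What you are missing is the same ``failure is information'' step that drives the Boolean \kcsp algorithm. For each $v$ and label $a\in\Sigma_v$, let $n_{v,a}$ count the $u$'s whose list would shrink if $v$ were committed to $a$. Branch on a \emph{single} variable and value whenever some $n_{v,a}\ge\varepsilon|V_U|$; this is only $O(n)$ branches per level and depth $O(\log n)$. If instead $n_{v,\alpha(v)}<\varepsilon|V_U|$ for every $v$, then since completeness gives $\sum_{a}n_{v,a}\ge |V_U|-1$, the label $a^\star:=\argmax_a n_{v,a}$ has $n_{v,a^\star}\ge(|V_U|-1)/3>\varepsilon|V_U|$, hence $a^\star\neq\alpha(v)$. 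So you may delete $a^\star$ from $\Sigma_v$ for \emph{every} unreduced $v$ simultaneously, collapsing the whole instance to alphabet size~$2$ in one shot---after which it is polynomial. In the counterexample above this removes exactly $\alpha(v)+1$ from each $\Sigma_v$, as it should. Your final paragraph was circling this reduction-to-size-$2$ endpoint, but the route there is a single argmax per variable, not $\Theta(\log n)$ commitments followed by an unproven list-partition argument.
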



\begin{theorem}
It is NP-hard to decide whether a given complete instance for \textup{\krcsp{2}{4}} is satisfiable or not. 
\label{thm:24csp}
\end{theorem}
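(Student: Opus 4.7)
The plan is to reduce from graph $3$-coloring, a classical NP-hard problem. The key idea is to exploit the extra alphabet symbol beyond Boolean: I designate one of the four values as a ``forbidden sink,'' and use the completeness requirement---every pair of variables must carry a nontrivial binary constraint---to push every satisfying assignment away from that value. This effectively shrinks the usable alphabet to three symbols and makes the CSP equivalent to graph $3$-coloring on the input.

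Concretely, given a graph $G = (V, E)$ with $|V| \geq 2$, I construct a complete $(2,4)$-CSP with variable set $V$ and alphabet $\Sigma = \{0, 1, 2, 3\}$. For every pair $\{u, v\} \subseteq V$, the predicate $P_{uv}$ is defined as follows: if $\{u, v\} \in E$, then $P_{uv}(a, b) = \sat$ iff $a, b \in \{0, 1, 2\}$ and $a \neq b$; if $\{u, v\} \notin E$, then $P_{uv}(a, b) = \sat$ iff $a, b \in \{0, 1, 2\}$. Both predicates reject $(3, 3)$, so every pair indeed carries a nontrivial constraint and the instance is a valid complete \krcsp{2}{4} instance.

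For correctness, a satisfying assignment cannot give any variable the value $3$: any variable participates in at least one constraint, and both predicate types forbid $3$ at either coordinate. Hence a satisfying assignment is exactly a function $V \to \{0, 1, 2\}$ that is proper on every edge, i.e., a proper $3$-coloring of $G$; conversely, any proper $3$-coloring yields a satisfying assignment. Since $3$-coloring is NP-hard, this polynomial-time reduction establishes Theorem~\ref{thm:24csp}. The only step that needs care is verifying that the non-edge predicate is nontrivial (it forbids seven of the sixteen tuples, including $(3, 3)$); this is what makes the ``forbid the fourth symbol'' trick compatible with the completeness requirement, and is also why the same construction does not transfer to \krcsp{2}{3}, consistent with Theorem~\ref{thm:23csp}.
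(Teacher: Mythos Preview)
Your reduction is correct: both predicate types reject any pair with a $3$ at either coordinate, so in any satisfying assignment every variable (having at least one neighbor since $|V|\ge 2$) avoids $3$, and the remaining constraints are precisely proper $3$-coloring on the edges. This cleanly establishes Theorem~\ref{thm:24csp} as stated.

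The paper takes a genuinely harder route. It proves NP-hardness not for arbitrary \krcsp{2}{4} but for the much more restricted language \pac{4,1}, where each binary constraint forbids \emph{exactly one} pair $(a,b)\in\Sigma^2$. Your edge predicate forbids ten pairs and your non-edge predicate forbids seven, so your construction does not live inside \pac{4,1}. To achieve this, the paper builds a randomized gadget on $t=O(1)$ copies of each vertex, with constraints inside each block that (with high probability over the gadget) force every satisfying assignment to have a dominant color in $\{1,2,3\}$ on each block, and cross-block constraints that rule out equal dominant colors on edges; a union bound over the $e^{O(t)}$ assignments then yields a deterministic gadget. What the paper buys is a sharper hardness statement (for a single-forbidden-pair language), which in particular shows that even when each constraint is as weak as possible, completeness does not help. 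What your approach buys is a one-paragraph, gadget-free proof of the theorem as stated, and it also makes transparent why this argument cannot drop to alphabet size $3$: with $|\Sigma|=3$ there is no spare symbol to sacrifice, in line with Theorem~\ref{thm:23csp}.
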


\begin{restatable}{theorem}{threethreecsp}
It is NP-hard to decide whether a given complete instance for \textup{\krcsp{3}{3}} is satisfiable or not. 
\label{thm:threethreecsp}
\end{restatable}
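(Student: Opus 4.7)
The plan is to give a polynomial-time reduction from \threesat to \krcsp{3}{3} on complete instances. Given a \threesat formula $\phi$ with variables $x_1, \dots, x_n$ and clauses $C_1, \dots, C_m$ (we may assume each clause uses three distinct variables), I would build a complete \krcsp{3}{3} instance on the variable set $V = U \cup D$, where $U = \{u_1, \dots, u_n\}$ represents $\phi$'s variables and $D = \{d_1, d_2, d_3\}$ is a small set of ``dummies''. The design idea is to use alphabet values $\{0,1\}$ for originals (encoding the Boolean truth of $x_i$ in $u_i$) and reserve the third alphabet symbol $2$ exclusively for the dummies.

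Next I would define the predicate $P_T$ on every 3-subset $T \subseteq V$. If $T = \{u_a, u_b, u_c\} \subseteq U$, then $P_T$ accepts an assignment $\beta$ iff $\beta(u_a), \beta(u_b), \beta(u_c) \in \{0,1\}$ and every clause of $\phi$ whose variable set equals $\{x_a, x_b, x_c\}$ is satisfied when each $\beta(u_i)$ is read as a truth value of $x_i$ (a possibly empty conjunction of clause checks). If $T$ contains at least one dummy, then $P_T$ accepts $\beta$ iff every dummy in $T$ equals $2$ and every original in $T$ lies in $\{0,1\}$. Each such predicate rejects at least one input (for instance, any that maps some $u_i \in T$ to $2$, or some $d_k \in T$ to a value other than $2$), so the non-triviality requirement of complete instances is automatically met.

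To verify equivalence, I would argue both directions. For the forward direction, given a satisfying assignment $\alpha^*$ of $\phi$, set $\beta(u_i) = \alpha^*(x_i)$ and $\beta(d_k) = 2$; every predicate is then satisfied by construction. For the backward direction, suppose $\beta$ satisfies the constructed complete instance: the dummy-containing predicates immediately force $\beta(d_k) = 2$ for all $k$ and $\beta(u_i) \in \{0,1\}$ for all $i$, after which the all-original predicate on each $\{u_a, u_b, u_c\}$ forces every clause $C_j$ whose variable set is $\{x_a, x_b, x_c\}$ to be satisfied by $\beta|_U$ viewed as Boolean. Hence $\phi$ is satisfiable iff the constructed instance is satisfiable.

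The only mildly subtle step is ensuring non-triviality on 3-subsets of $U$ whose three variables do not appear together in any clause of $\phi$; the uniform ``all originals in $T$ lie in $\{0,1\}$'' requirement handles this consistently with the intended assignment, which is the real role of the third alphabet symbol. The rest is routine bookkeeping, and since the reduction is polynomial-time and \threesat is NP-hard, NP-hardness of \krcsp{3}{3} on complete instances follows.
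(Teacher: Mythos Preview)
Your proposal is correct, and it takes a genuinely different route from the paper. The paper reduces from \threesat by creating $t = O(1)$ copies of each variable and designing a randomized gadget (shown to exist via a probabilistic argument and union bound) that is placed on each triple of copy-groups; the gadget forces each group to have a dominant label in $\{0,1\}$ and encodes the forbidden clause assignment across groups, while remaining triples get a constraint ruling out the all-$2$ assignment. Your argument is considerably more elementary: rather than copies plus a randomized consistency gadget, you simply adjoin three dummy variables and use the extra label $2$ directly as a marker---dummies must be $2$, originals must lie in $\{0,1\}$---which you enforce deterministically through the mixed triples, and then clause predicates sit verbatim on the corresponding original triples. Both reductions exploit the third alphabet symbol to manufacture non-triviality on triples that carry no clause, but yours avoids the probabilistic existence step entirely, produces a smaller instance ($n+3$ versus $\Theta(n)$ with a larger constant), and is self-contained. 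The paper's template, on the other hand, is reused essentially verbatim for its \krcsp{2}{4} and $6$-\textsc{PAC} hardness proofs, so its value is uniformity across several reductions rather than minimality for this one.
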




\paragraph{Open questions.}
There are many open questions following our results. Most importantly, the approximability of Boolean \mincsps on complete instances is wide open; it is not ruled out that \minkcsp for any constant $k$ admits a PTAS on complete instances, while we do not even have it for \mintwo! (We show in Theorem~\ref{thm:hardness-min} that the exact optimization is hard unless $\mathbf{NP} \subseteq \mathbf{BPP}$ for many \mincsps). 

It is also an interesting question to see whether our $n^{O(\log n)}$-time algorithms can be made polynomial-time or not. It is notable that $n^{\Theta(\log n)}$ is the optimal running time for having a constant approximation for \maxcsp with $k = 2$ and $r = \poly(n)$ on complete bipartite instances assuming the Exponential Time Hypothesis~\cite{aaronson2014multiple, manurangsi2016birthday}. 

Finally, a lot of research on CSPs has focused on the effect of the {\em constraint language} on computational tractability.
In our context, given $k$ and $\Sigma$, the constraint language $\Gamma = \{ P_1, \dots P_t \}$ is defined to be a finite set of predicates where each $P_i : \Sigma^k \to \{ \sat, \unsat \}$ has at least one input leading to $\unsat$. A complete instance of CSP($\Gamma$), for every $C \in \binom{V}{k}$, has a constraint with $P_C \in \Gamma$. This framework still captures well-known CSPs including \ksat, \kand, \klin. 
While our positive result Theorem~\ref{thm:kcspassgn} works for every such CSP with Boolean alphabet, our negative results like
Theorem~\ref{thm:24csp} and 
Theorem~\ref{thm:threethreecsp} still leave the possibility that many interesting constraint languages are indeed tractable on complete instances, calling for a more fine-grained characterization depending on $\Gamma$. 


\paragraph{Organization and additional results.}
Section~\ref{sec:2sat} presents an $O(1)$-approximation algorithm for \mintwo, proving Theorem~\ref{thm:2sat}. 
A quasi-polynomial time algorithm for \threesat and eventually \kcsp will be presented in Section~\ref{sec:3sat} and \ref{sec:kcsp} respectively. Section~\ref{sec:nae} contains polynomial-time algorithms for some complete CSPs that are NP-hard on general instances, including \nae.

Section~\ref{sec:23csp} discusses binary, non-Boolean \krcsp{2}{r} with $r \geq 3$ and proves Theorem~\ref{thm:23csp}. \Cref{sec:pac} presents additional algorithmic results for $5$-\textsc{Permutation Avoid Coloring }($5$-\textsc{PAC}), which is a generalization of $5$-\textsc{Coloring} where each edge can rule out any permutation between colorings of the two vertices (instead of just the identity permutation). 

Our hardness results, which include Theorems~\ref{thm:24csp} and~\ref{thm:threethreecsp}, are proved in Section~\ref{sec:hardness}. It also proves the NP-hardness of $6$-\textsc{PAC}, showing that five is the limit for \textsc{PAC}. 

    \section{\mintwo}
\label{sec:2sat}
In this section, we present an $O(1)$-approximation for \mintwo on complete instances, proving Theorem~\ref{thm:2sat}.
Let $\Sigma = \{ 0, 1 \}$ be the Boolean alphabet, where $1$ indicates True and $0$ denotes False. 
Let $X$ be the set of $n$ variables that can take a value from $\Sigma$.

For Boolean \ksat, we slightly simplify the notation and represent each clause $C$ as an unordered $k$-tuple of literals $(\ell_1, \dots, \ell_k)$, which denotes the \ksat clause $(\ell_1 \vee \dots \vee \ell_k)$.
So for $k = 2$, and for every pair of variables $( x, y ) \in \binom{X}{2}$, there exists exactly one constraint $C = (\ell_x, \ell_y)$ where $\ell_x \in \{ x, \neg x \}$ and $\ell_y \in \{ y, \neg y \}$. 
Let $L = X \cup \neg X$ be the set of $2n$ literals. (More generally, given $S \subseteq L$, let $\neg S := \{ \neg \ell : \ell \in S \}$.)

\paragraph{Relaxation.}
We use the standard SDP (vector) relaxation for \mintwo used in the $O(\sqrt{\log n})$-approximation on general instances~\cite{ACMM05}, which can be captured by a degree-4 Sum-of-Squares relaxation. The variables are $\{ v_{\ell} \in \R^{2n+1} : \ell \in L \} \cup \{ v_0 \}$ where $v_0$ is the special vector corresponding to $+1$ (True). All the vectors will be unit vectors. 
For two vectors $u, v \in \R^{2n+1}$, $d(u, v) := (1 + \langle v_0, u \rangle - \langle v_0, v \rangle - \langle u, v \rangle) / 4$ denotes the directed distance from $u$ to $v$. For instance, when $u, v = \pm v_0$, among four possibilities, it is $1$ if $u = v_0, v = -v_0$ and $0$ in the other cases. Also note that $d(u, v) = 
d(-v, -u)$.

\begin{align*}
\mbox{minimize } \quad & \sum_{(\ell, \ell') \in \calC} d(v_{\neg \ell}, v_{\ell'}) 
\\ 
 & \| v_0 \|_2^2 = 1 \\
 & \| v_{\ell} \|_2^2 = 1 
 && \mbox{ for every } \ell \in L \\
 & v_{\ell} = -v_{\neg \ell} \quad \Leftrightarrow\quad  \| v_{\ell} + v_{\neg \ell} \|_2^2 = 0 && \mbox{ for every } \ell \in L \\
 & \| v_{\ell_1} - v_{\ell_2} \|_2^2 + 
 \| v_{\ell_2} - v_{\ell_3} \|_2^2 
 \geq 
 \| v_{\ell_1} - v_{\ell_3} \|_2^2
 && \mbox{ for every } \ell_1, \ell_2, \ell_3 \in L \cup \{ 0 \}. 
\end{align*}
This is a valid relaxation because for given any assignment $\alpha : X \to \Sigma$, the SDP solution where we let $v_x = v_0$ if $\alpha(x) = 1$ and $v_x = -v_0$ otherwise yields the same objective function value as the number of 2SAT constraints unsatisfied by $\alpha$.

Given the optimal SDP solution $\{ v_{\ell} \}_{\ell}$, let $b_{\ell} := \langle v_0, v_{\ell} \rangle \in [-1, +1]$ indicate the {\em bias} of $\ell$ towards True. Given two literals $p, q \in L$, let $d(p, q) := d(v_p, v_q)$. Another interpretation of $d(p, q)$ is \[
d(p, q) = 
\frac{1 + \langle v_0, v_p \rangle - \langle v_0, v_q \rangle - \langle v_p, v_q \rangle}{4}
= \frac{\| v_p - v_q \|_2^2}{8} + \frac{b_p - b_q}{4},
\]
which is the usual $\ell_2^2$ distance between $v_p$ and $v_q$, adjusted by the difference between the biases. So it is easy to see the directed triangle inequality $d(p, q) + d(q, r) \geq d(p, r)$ for any $p, q, r \in L \cup \{0 \}$. 

\paragraph{Formulation as graph cuts.}
We also use the well-known interpretation of \mintwo as a special case of \textsc{Symmetric Directed Multicut}, which was used in the previous $\polylog(n)$-approximation algorithms for \mintwo on general instances~\cite{klein1997approximation, ACMM05}. Let us consider the directed {\em implication} graph $G = (L, E)$ where the $2n$ literals become the vertices and each clause $(\ell \vee \ell')$ creates two directed edges $(\neg \ell, \ell')$ and $(\neg \ell', \ell)$; it means that if $\neg \ell$ (resp. $\neg \ell'$) becomes true, in order to satisfy the clause, $\ell'$ (resp. $\ell$) must become true. It is a classical fact that the 2SAT instance is satisfied if and only if the implication graph satisfies the following {\em consistency property}: no pair $\ell$ and $\neg \ell$ are in the same strongly connected component. 
Therefore, \mintwo is equivalent to finding the smallest set of clauses such that deleting the edges created by them yields the consistency property. 
Up to a loss of factor of $2$, we can consider finding the smallest set of edges whose deletion yields the consistency property.\footnote{Our algorithm will ensure that we delete $(\neg \ell, \ell')$ if and only if we delete $(\neg \ell', \ell)$, so we do not lose this factor $2$.} 

\paragraph{Symmetry.}
For a vector $v$ and $\eps \geq 0$, let $B_{out}(v, \eps) := \{ \ell \in L : d(v, v_{\ell}) \leq \eps \}$ and 
$B_{in}(v, \eps) := \{ \ell \in L : d(v_{\ell}, v) \leq \eps \}$; $B_{out}(v, \eps)$ (resp. $B_{in}(v, \eps)$) is the set of literals whose vector has a small directed distance {\em from} (resp. {\em to}) $v$. 
Also, in the implication graph $G = (L, E)$ and a subset $S \subseteq L$, let $\partial^+(S)$ be the set of outgoing edges from $S$ and $\partial^-(S)$ be the set of incoming edges into $S$. 
The symmetric constraint $v_{\ell} = -v_{\neg \ell}$, together with the symmetry in the construction of $G$ yields the following nice properties.

\begin{lemma}
For any vector $v$, number $\eps \geq 0$, and $\ell, \ell' \in L$, we have $(\ell, \ell') \in \partial^+(B_{out}(v, \eps))$ if and only if 
$(\neg \ell', \neg \ell) \in \partial^-(B_{in}(-v, \eps))$. In particular, 
$B_{out}(v, \eps) = \neg B_{in}(-v, \eps)$. 
\end{lemma}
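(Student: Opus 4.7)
The plan is to decompose the lemma into two symmetries that both ultimately come from the constraint $v_{\ell} = -v_{\neg \ell}$: a symmetry of the directed distance $d$, and a symmetry of the implication graph $G$.

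First I would establish the ``in particular'' statement $B_{out}(v,\eps) = \neg B_{in}(-v,\eps)$. Starting from the formula $d(u,w) = (1 + \langle v_0,u\rangle - \langle v_0,w\rangle - \langle u,w\rangle)/4$ one checks directly by swapping $u \leftrightarrow -w$ that $d(u,w) = d(-w,-u)$. Applied with $u = v$ and $w = v_{\neg\ell} = -v_{\ell}$, this gives $d(v,v_{\neg\ell}) = d(v_{\ell},-v)$. Hence $\neg\ell \in B_{out}(v,\eps) \iff d(v,v_{\neg\ell}) \leq \eps \iff d(v_{\ell},-v) \leq \eps \iff \ell \in B_{in}(-v,\eps)$, which is exactly the desired set identity.

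Next I would record the graph-level symmetry: for any $a,b \in L$, $(a,b) \in E \iff (\neg b,\neg a) \in E$. This is immediate from the construction, since each clause $(\ell \vee \ell')$ contributes the pair of edges $(\neg\ell,\ell')$ and $(\neg\ell',\ell)$, and applying $\neg(\cdot)$ to the endpoints of one recovers the other.

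Finally I would combine the two symmetries to prove the edge statement. The edge $(\ell,\ell')$ lies in $\partial^+(B_{out}(v,\eps))$ precisely when $\ell \in B_{out}(v,\eps)$, $\ell' \notin B_{out}(v,\eps)$, and $(\ell,\ell') \in E$. Translating each conjunct via the two symmetries: $\ell \in B_{out}(v,\eps) \iff \neg\ell \in B_{in}(-v,\eps)$; $\ell' \notin B_{out}(v,\eps) \iff \neg\ell' \notin B_{in}(-v,\eps)$; and $(\ell,\ell') \in E \iff (\neg\ell',\neg\ell) \in E$. The conjunction of the three translated conditions is precisely the statement that $(\neg\ell',\neg\ell) \in \partial^-(B_{in}(-v,\eps))$. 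No step is really an obstacle here --- the only thing to be careful about is tracking the direction of $d$ correctly when applying $d(u,w) = d(-w,-u)$, since the definitions of $B_{out}$ and $B_{in}$ use $d$ in opposite orders.
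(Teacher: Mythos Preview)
Your proposal is correct and follows essentially the same approach as the paper: both arguments rest on the distance symmetry $d(u,w)=d(-w,-u)$ together with the implication-graph symmetry $(a,b)\in E \iff (\neg b,\neg a)\in E$, and both unpack $\partial^+(B_{out}(v,\eps))$ into its three defining conditions and translate each one. The only difference is organizational---you first derive the set identity $B_{out}(v,\eps)=\neg B_{in}(-v,\eps)$ and then use it for the edge statement, whereas the paper proves the edge statement directly---but the content is the same.
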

\begin{proof}
Suppose $(\ell, \ell') \in \partial^+(B_{out}(v, \eps))$, which implies that (1) $d(v, v_{\ell}) \leq \eps$, (2) $d(v, v_{\ell'}) > \eps$, and (3) $(\ell, \ell') \in E$.

The definition of $d(\cdot, \cdot)$ ensures $d(u, u') = d(-u', -u)$, so we have (1) $d(v_{\neg \ell}, -v) \leq \eps$ and (2) $d(v_{\neg \ell'}, -v) > \eps$. 
The construction of $G$ ensures (3) $(\neg \ell', \neg \ell) \in E$, so we have $(\neg \ell', \neg \ell) \in \partial^-(B_{in}(-v, \eps))$. The argument in the other direction is the same. 
\end{proof}

\paragraph{Algorithm.}
Our relatively simple Algorithm~\ref{alg:2sat} is presented below. After the preprocessing step that removes literals with large absolute bias, we perform the well-known {\em CKR rounding} due to Carlinescu, Karloff, and Rabani~\cite{calinescu2005approximation} adapted to directed metrics. 

\begin{algorithm}[!htb]
        \caption{$O(1)$-Approximation for \mintwo}
        \label{alg:2sat}
        \begin{algorithmic}[1]
            \State Draw a random $\theta \in [0.001, 0.002]$. 
            \Comment{Preprocessing begin}
            \State Let $B_+ := \{ \ell \in L: b_{\ell} \geq \theta \}$ and 
            $B_- := \neg B_+ = \{ \ell \in L: b_{\ell} \leq -\theta \}$. \State Delete all outgoing edges from $B_+$ and incoming edges into $B_-$. \label{alg2satstep:preprocessing}
            \State Remove $B_+$ and $B_-$ from $L$ and corresponding variables from $X$. 
            \Comment{Preprocessing end}
            \State 
            \State Randomly order $X$.\Comment{CKR rounding begin}\label{alg2satstep:random}
            \For{each $x \in X$ in this order}
                \State Let $\ell \in \{ x, \neg x \}$ be the literal such that $b_{\ell} \geq 0$. (If $b_{x} = b_{\neg x} = 0$, let $\ell \leftarrow x$.) \label{alg2satstep:upcenter}

                \State Sample $\gamma \in [0.1, 0.11]$. 

                \State Delete all outgoing edges from $B_{out}(v_{\ell}, \gamma)$.            \label{alg2satstep:outball}

                \State Delete all incoming edges into $B_{in}(v_{\neg \ell}, \gamma)$.\label{alg2satstep:inball}

                \State Remove $B_{out}(v_{\ell}, \gamma) \cup B_{in}(v_{\neg \ell}, \gamma)$ from $L$ and corresponding variables from $X$. 
            \EndFor \Comment{CKR rounding end}
        \end{algorithmic}
    \end{algorithm}


Here the word {\em deletion} applies to the edges of $G$ that we pay in the objective function, and the word {\em removal} applies to the edges or vertices of $G$ that are removed from the current instance. So the deletion of an edge $(p, q)$ implies its removal, but necessarily not vice versa. 

We first prove the feasibility of the algorithm by proving that the implication graph after all the deletions satisfies the consistency property as follows:

\begin{lemma}
    The implication graph $G$ after deleting the edges deleted during the \Cref{alg:2sat}, satisfies the consistency property.
\end{lemma}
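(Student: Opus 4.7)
The plan is to show that every edge-deletion performed by Algorithm~\ref{alg:2sat} leaves a graph that is in fact a DAG, from which consistency follows trivially since each strongly connected component is a singleton. The key observation is that each literal in $L$ is processed exactly once by the algorithm, and at that moment it is classified as one of two \emph{types}: outgoing-type (its removal step deletes every outgoing edge of $p$ in $G$) or incoming-type (its removal step deletes every incoming edge of $p$ in $G$). The outgoing-type literals are those in $B_+$ together with those captured by some $B_{out}(v_\ell,\gamma)$ during CKR; the incoming-type literals are those in $B_-$ together with those captured by some $B_{in}(v_{\neg\ell},\gamma)$.

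My first step would be to argue that this classification partitions $L$. Exhaustiveness is easy: literals with $|b_\ell|\geq \theta$ enter $B_+\cup B_-$ in preprocessing, and for the rest, each CKR iteration removes at least the pivot $\ell\in B_{out}(v_\ell,\gamma)$ and $\neg\ell\in B_{in}(v_{\neg\ell},\gamma)$, so the loop empties $L$. For disjointness, $B_+\cap B_-=\emptyset$ by the bias thresholds, and the only delicate case is showing no literal $p$ can lie in both $B_{out}(v_\ell,\gamma)$ and $B_{in}(v_{\neg\ell},\gamma)$ in the same iteration. If it did, the directed triangle inequality would give
\[
d(v_\ell,v_{\neg\ell}) \leq d(v_\ell,v_p)+d(v_p,v_{\neg\ell}) \leq 2\gamma \leq 0.22,
\]
but using $v_{\neg\ell}=-v_\ell$ a direct computation yields $d(v_\ell,v_{\neg\ell})=(1+b_\ell)/2 \geq 1/2$ since the pivot was chosen with $b_\ell\geq 0$, a contradiction.

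Having established the partition $L=O\sqcup I$ into outgoing- and incoming-type literals, the structure of the remaining graph is immediate: every $p\in O$ has lost all its outgoing edges, and every $p\in I$ has lost all its incoming edges. Consequently any surviving edge $(p,q)$ must satisfy $p\in I$ and $q\in O$; the remaining graph is bipartite from $I$ to $O$ with all edges flowing one way, hence acyclic. Every SCC is therefore a singleton, so no variable $x$ can have both $x$ and $\neg x$ in one SCC, which is precisely the consistency property. The only nontrivial ingredient in the whole argument is the mutual-exclusion step via directed triangle inequality, which is exactly why the algorithm picks the pivot with nonnegative bias and keeps $\gamma$ bounded away from the threshold $1/4$; everything else is bookkeeping about which edges each literal loses at its removal step.
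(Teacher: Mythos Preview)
Your argument has a genuine gap in the central DAG claim. You assert that ``every $p\in O$ has lost all its outgoing edges,'' but the algorithm does not delete all outgoing edges of each vertex in $B_{out}(v_\ell,\gamma)$; it deletes only the \emph{boundary} edges $\partial^+(B_{out}(v_\ell,\gamma))$, i.e., edges from the ball to its complement in the current instance. (The paper defines $\partial^+(S)$ exactly as ``the set of outgoing edges from $S$,'' and the cost analysis in Cases~1 and~3 explicitly works with $\partial^+$ and $\partial^-$.) Consequently, an edge $(p,q)$ with \emph{both} endpoints inside the same $B_{out}(v_\ell,\gamma)$ is never deleted, and several such edges can form a directed cycle. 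The same applies to $B_+$, $B_-$, and each $B_{in}$. So the graph after deletions is not a DAG in general, and the bipartite $I\to O$ picture is incorrect. There is also a secondary issue: even if one read ``outgoing edges'' as all edges with tail in the set, the deletions are performed in the \emph{current} instance, so edges from $p$ to literals already removed in earlier rounds are untouched; your phrase ``every outgoing edge of $p$ in $G$'' overlooks this.

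Your mutual-exclusion step is fine and is essentially the same computation the paper does (by the symmetry $B_{out}(v_\ell,\gamma)=\neg B_{in}(v_{\neg\ell},\gamma)$, disjointness of $B_{out}$ and $B_{in}$ is equivalent to the statement that $B_{out}$ contains no antipodal pair $\{p,\neg p\}$). What is missing is the correct structural conclusion: because each processed set $B$ is separated from the \emph{later} instance by deleting its boundary (either $\partial^+(B)$ or $\partial^-(B)$), any strongly connected component of the final graph is confined to a single such $B$; combined with ``no antipodal pair in any $B$,'' this gives consistency. That is the paper's route, and it is what your argument should reduce to once the DAG claim is dropped.
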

\begin{proof}
        At a high level, the algorithm picks a set $B$ of vertices (literals), deletes all the outgoing edges from $B$ or all the incoming edges into $B$, and removes $B$ from the instance. It is clear from the definition that neither $B_+$ nor $B_-$ can contain $p$ and $\neg p$ simultaneously for any literal $p \in L$. 
        In order to show that the same is true for  $B_{out}(v_{\ell}, \gamma)$ with $b_{\ell} \geq 0$ and $\gamma \leq 0.11$, note that 
        \[
        d(v_\ell, v_p) + 
        d(v_\ell, -v_p)
        = \frac{\| v_\ell - v_p \|_2^2 + \| v_\ell - (-v_p) \|_2^2}{8} + \frac{2b_\ell}{4}
        \geq \frac{\| v_\ell - v_p \|_2^2 + \| v_\ell + v_p \|_2^2}{8} 
        = \frac{1}{2},
        \]
        so it cannot be the case that both $d(v_\ell, v_p)$ and $d(v_\ell, v_{\neg p})$ are at most $\gamma \leq 0.11$. The same argument works for $B_{in}(v_{\neg \ell}, \gamma)$.
        Therefore, the original implication graph, after deleting the edges deleted during the algorithm, will have the consistency property. 
\end{proof}

\subsection{Analysis}
We analyze that the expected number of deleted edges is $O(\sdp)$, where $\sdp$ is the optimal value of the SDP 
relaxation. 

\paragraph{Case 1: Preprocessing.}
Let us bound the number of deleted edges in the preprocessing (Step~\ref{alg2satstep:preprocessing} of Algorithm~\ref{alg:2sat}). 
Fix an edge $(p, q)$. As $\theta \in [0.001, 0.002]$ is drawn from the interval of length $0.001$, the probability that $(p, q)$ belongs to $\partial^+(B_+)$ is at most $1000(b_p - b_q)^+$ (the same holds for $\partial^-(B_-)$), while it is contributing $d(p, q) \geq (b_p - b_q)^+/4$ to the SDP. (Let $a^+ := \max(a, 0)$.) Therefore, the expected total number of edges deleted in the preprocessing is at most $8000\sdp$. 

\paragraph{Case 2: Big edges.} We argue that we can only focus on the edges with small SDP contributions. 
Let $L$ and $X$ be the set of literals and variables after the preprocessing and $n' := |X|$.
Then $b_{\ell} \in [-0.002, 0.002]$ for every vertex $\ell \in L$. 
Let us consider a pair of variables $\{ x, y \} \in \binom{X}{2}$. Let $\ell_x \in \{ x, \neg x \}$ and $\ell_y \in \{ y, \neg y \}$ such that $(\ell_x, \ell_y) \in E$. (I.e., $(\neg \ell_x \vee \ell_y)$ is the clause for $\{ x, y \}$.)
Let $\eps_0 > 0$ be the constant to be determined. If $d(\ell_x, \ell_y) > \eps_0$, just deleting the edge is locally a $(1/\eps_0)$-approximation, so we can focus on edges with $d(\ell_x, \ell_y) \leq \eps_0$.



\paragraph{Case 3: Improved CKR analysis.}
Case 3 and Case 4 together bound the number of deleted edges in Step~\ref{alg2satstep:outball} and 
Step~\ref{alg2satstep:inball}. 
First, we recall the standard CKR analysis~\cite{calinescu2005approximation}, adapted to our situation with directed graphs with two vertices from the same variable {\em coupled} in the algorithm. 
Given a variable $z \in X$, let $\ell(z) \in \{ z, \neg z \}$ be the literal chosen as $\ell$ in Step~\ref{alg2satstep:upcenter}.  (So $\ell(z)$ and $\neg \ell(z)$ become the center of the {\em outball} in Step~\ref{alg2satstep:outball} and {\em inball} in Step~\ref{alg2satstep:inball} respectively).
Order $X = \{ z_1, \dots, z_{n'} \}$ in the increasing order of $\min(d(\ell(z_i), \ell_x), d(\ell_y, \neg \ell(z_i)))$ (ties broken arbitrarily). We are interested in the probability of the event $E_i$ that $(\ell_x, \ell_y)$ is deleted by $B_{out}(v_{\ell(z_i)}, \gamma)$ or $B_{in}(v_{\neg \ell(z_i)}, \gamma)$ in Step~\ref{alg2satstep:outball}. 

First, as $\gamma \in [0.1, 0.11]$ is sampled from an interval of length $0.01$, for fixed $i$, \[
\Pr[(\ell_x, \ell_y) \in \partial^+(B_{out}(v_{\ell(z_i)}, \gamma)) \mbox { or } (\ell_x, \ell_y) \in \partial^-(B_{in}(v_{\neg \ell(z_i)}, \gamma))] \leq 200 d(\ell_x, \ell_y).
\]
The crucial observation is that if the random order sampled in Step~\ref{alg2satstep:random} puts $z_j$ for some $j < i$ before $z_i$, then $E_i$ cannot happen at all; for any value of $\gamma$ that puts $(\ell_x, \ell_y) \in \partial^+(B_{out}(v_{\ell(z_i)}, \gamma))$ or 
$(\ell_x, \ell_y) \in \partial^-(B_{in}(v_{\neg \ell(z_i)}, \gamma))$, which implies $\ell_x \in B_{out}(v_{\ell(z_i)}, \gamma)$ or $\ell_y \in B_{in}(v_{\neg \ell(z_i)}, \gamma)$, the fact that $j < i$ implies $\ell_x \in B_{out}(v_{\ell(z_j)}, \gamma)$ or 
$\ell_y \in B_{in}(v_{\neg \ell(z_j)}, \gamma)$ as well, so either $B_{out}(v_{\ell(z_j)}, \gamma)$ or 
$B_{in}(v_{\neg \ell(z_j)}, \gamma)$
will remove $(\ell_x, \ell_y)$ from the instance. Therefore, $z_i$ must be placed before all $z_1, \dots, z_{i-1}$ in the random ordering in order for $E_i$ to have any chance, which implies $\Pr[E_i] \leq O(d(\ell_x, \ell_y)/i)$ and $\sum_i \Pr[E_i] \leq O(\log (n') \cdot d(\ell_x, \ell_y ))$. 

To exploit the fact that $G$ was constructed from a complete 2SAT instance, let
\[
Z' := \{ z \in X : d(\ell(z), \ell_x) < 0.1 - \eps_0 \mbox{ or }
d(\ell_y, \neg \ell(z)) < 0.1 - \eps_0 \}
\]
be the set of variables $z$ that, when we consider $z$ in the for loop of the algorithm, the edge $(\ell_x, \ell_y)$ is surely removed without being deleted; for instance, if $d(\ell(z), \ell_x) < 0.1 - \eps_0$, since $d(\ell_x, \ell_y) \leq \eps_0$ and $\gamma \geq 0.1$, both $\ell_x, \ell_y$ are in $B_{out}(\ell(z), \gamma)$. 

Note that $Z' = \{ z_1, \dots, z_{|Z'|} \}$ is a prefix set in the ordering defined by the distances. The above argument shows that when $z_i \in Z'$ (i.e., $i \leq |Z'|$), then $\Pr[E_i]$ is now zero instead of $O(d(\ell_x, \ell_y)/i)$. 
Therefore, if $|Z'| \geq \eps_1 n'$ for some $\eps_1 > 0$ to be determined, the analysis can be improved to be
\[
\sum_i \Pr[E_i] \leq \frac{d(\ell_x, \ell_y)}{|Z'| + 1} + \frac{d(\ell_x, \ell_y)}{|Z'| + 2} + \dots + \frac{d(\ell_x, \ell_y)}{n'} \leq O(\log(1 / \eps_1) \cdot d(\ell_x, \ell_y)). 
\]

\paragraph{Case 4: Big variables.}
Call $x$ a {\em big} variable if there exists an edge $(x, y)$ with $d(x, y) \leq \eps_0$ and $|Z'| < \eps_1 n'$. 
We show that the number of big vertices is $O(\sdp / n')$, which implies that the total number of 
edges $(x, y)$ with $d(x, y) \leq \eps_0$ and $|Z'| < \eps_1 n'$ (i.e., the ones not handled in the previous cases) is $O(\sdp)$. 

Fix big $x$ and edge $(x, y)$ with $d(x, y) \leq \eps_0$ and $|Z'| < \eps_1 n'$. 
Consider $z \notin Z'$, which implies $d(\ell(z), \ell_x) > 0.1 - \eps_0$ and $d(\ell_y, \neg \ell(z)) > 0.1 - \eps_0$. The fact that all $x, y, z \in X$ after the preprocessing means that $b_{\ell_x}, b_{\ell_x}, b_{\ell(z)}, b_{\neg \ell(z)}$ all differ additively by at most $0.004$. Therefore, by taking $\eps_0$ small enough, we can conclude that 
\[
d(\ell(z), \ell_x) > 0.1 - \eps_0
\quad \Rightarrow \quad 
\| v_{\ell(z)} - v_{\ell_x} \|_2^2 \geq \frac{(0.1 - \eps_0) - 0.004/4}{8} \geq 0.01
\]
and similarly $\|v_{\ell_y} - v_{\neg \ell(z)} \|_2^2 \geq 0.01$. Of course, $d(\ell_x, \ell_y) \leq \eps_0$ implies that $\| v_{\ell_x} - v_{\ell_y} \|_2^2 \leq 8\eps_0$, so 
\[
\| v_{\ell_x} - v_{\neg \ell(z)} \|_2^2 
\geq 
\| v_{\ell_y} - v_{\neg \ell(z)} \|_2^2 - \| v_{\ell_x} - v_{\ell_y} \|_2^2
\geq 0.01 - 8\eps_0.
\]
Similarly, $\| v_{\ell_y} - v_{\ell(z)} \|_2^2 \geq 0.01 - 8\eps_0$. Therefore, by taking $\eps_0$ small enough, $\| v_{\ell_x} - v_{\ell(z)}\|_2^2$ and $\| v_{\ell_x} - v_{\neg \ell(z)} \|_2^2$ are both at least $0.009$.
Together with the fact that $b_{\ell_x}, b_{\ell(z)}, b_{\neg \ell(z)} \leq [-0.002, 0.002]$, we conclude that all four quantities $d(\ell_x, \ell(z)), d(\ell_x, \neg \ell(z)), d(\ell(z), \ell_x), d(\neg \ell(z), \ell_x)$ are $\Omega(1)$, and since the 2SAT clause corresponding to $\{ x, z \}$, regardless of the literals, puts exactly one edge incident on $\ell_x$, this clause contributes $\Omega(1)$ to $\sdp$. 
Since $|Z'| < \eps_1 n'$, the total SDP contribution from the edges incident on $x$ or $\neg x$ is $\Omega(n')$. So the number of big vertices is $O(\sdp / n')$. 

\paragraph{Conclusion.} 
Therefore, the expected total number of edges deleted is at most the sum of (1) the expected number of edges deleted from the preprocessing step, (2) the number of edges $(\ell_x, \ell_y)$ with $d(\ell_x, \ell_y) \geq \eps_0$, (3) the sum of $O(\log(1/\eps_1) \cdot d(\ell_x, \ell_y))$ over all edges $(\ell_x, \ell_y)$ with $d(\ell_x, \ell_y) < \eps_0$ and $|Z'| \geq \eps_1 n$, and (4) the number of edges incident on the big variables. We argued that each is at most $O(\sdp)$, completing the analysis. Finally, while the algorithm is stated as a randomized algorithm, the standard method of conditional expectation yields a deterministic algorithm with the same guarantee~\cite{lee2019partitioning}.

    \section{\texorpdfstring{$\threesat$}{3SAT}}
\label{sec:3sat}
This section presents a quasi-polynomial time algorithm that decides whether a given complete \threesat instance is satisfiable or not. While it is only a special case of \Cref{thm:kcspassgn}, it contains most of the important ideas. In fact, our algorithm is slightly stronger and enumerates {\em all} the satisfying assignments for the instance. 

\subsection{High-level Plan and Structure of Solutions}

Given a complete instance $(V, \calC)$ for \threesat where each (unordered) triple of variables $(u, v, w)$ corresponds to exactly one clause $(\ell_u, \ell_v, \ell_w) \in \calC$ (where $\ell_u, \ell_v, \ell_w$ are literals for $u, v, w$ respectively), let $\alpha : V \to \{ 0, 1 \}$ be an arbitrary satisfying assignment. Our algorithm proceeds in rounds, and each round makes $\poly(n)$ guesses about $\alpha$. Ideally, one of the guesses will allow us to correctly identify the $\alpha$ values for $\Omega(n)$ variables so that the recursion will yield a quasi-polynomial time algorithm. 

In the case where all $\poly(n)$ guesses fail, we can produce a complete \twosat instance which is guaranteed to be satisfied by $\alpha$. Thus, it suffices to enumerate all the \twosat solutions for the produced instance and see whether they are good for the original \threesat. (Similarly, the enumerating algorithm for \threesat will be used for other CSPs.) 

Of course, the first question is: how many solutions are there? The following simple argument based on VC-dimension gives a clean answer for any complete \kcsp.

    \begin{lemma}\label{lem:satnumberassgn}
        Given a complete \textup{\kcsp} instance $(V, \cC)$, the number of satisfying assignments is at most $O(n^{k-1})$.
    \end{lemma}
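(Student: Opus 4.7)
The plan is to apply the Sauer--Shelah lemma after packaging the satisfying assignments as a set system of bounded VC dimension. Identify each Boolean assignment $\alpha : V \to \{0,1\}$ with the subset $S_\alpha := \{v \in V : \alpha(v) = 1\}$, and let $\cF := \{S_\alpha : \alpha \text{ satisfies every } C \in \cC\} \subseteq 2^V$. It suffices to bound $|\cF|$.

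The key step is to argue that no $k$-subset of $V$ is shattered by $\cF$, so the VC dimension is at most $k - 1$. Suppose for contradiction that some $T \in \binom{V}{k}$ is shattered. Then every one of the $2^k$ patterns in $\{0,1\}^T$ arises as $\alpha|_T$ for some $\alpha \in \cF$. By the completeness of the instance, there is a constraint $C_T \in \cC$ supported exactly on $T$, and by the nontriviality assumption built into the definition of a complete $\kcsp$, its predicate $P_{C_T}$ has at least one unsatisfying input $\sigma_T \in \{0,1\}^T$. Shattering produces some $\alpha \in \cF$ with $\alpha|_T = \sigma_T$, but then $P_{C_T}(\alpha|_T) = \unsat$, contradicting $\alpha \in \cF$. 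Hence no $k$-set is shattered.

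The Sauer--Shelah lemma then gives
\[
|\cF| \;\leq\; \sum_{i=0}^{k-1} \binom{n}{i} \;=\; O(n^{k-1}),
\]
where the implicit constant depends only on the fixed arity $k$. Since the map $\alpha \mapsto S_\alpha$ is a bijection between assignments and subsets of $V$, the number of satisfying assignments is bounded by the same quantity, proving the lemma.

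There is no real obstacle here; the only thing to get right is the use of the nontriviality of each predicate (guaranteed by the definition of a complete instance) to rule out full shattering on every $k$-set. For completeness, I would briefly note in the write-up that the argument only uses that the alphabet is $\{0,1\}$ through Sauer--Shelah, and that an analogous bound via the Natarajan-dimension version would extend the same $O(n^{k-1})$ bound (with a larger hidden constant) to any constant alphabet size, should it be needed later for $\krcsp{k}{r}$.
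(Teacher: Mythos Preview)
Your proof is correct and essentially identical to the paper's: both identify satisfying assignments with subsets of $V$, show that completeness plus the nontriviality of each predicate prevents any $k$-set from being shattered (so the VC dimension is at most $k-1$), and then invoke Sauer--Shelah. Your added remark about the Natarajan-dimension extension to larger alphabets is a nice bonus the paper does not mention.
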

    \begin{proof} 
        Let $\cA = \cA(V, \cC)$ be the set of satisfying assignments. 
        Consider the set system with elements $V$ and collection of sets $\cS$, where each satisfying assignment $\alpha_i \in \cA$ creates 
        $S_i := \{v_j \in V \ |\ \alpha_i(v_j) = 1 \} \in \cS$. (I.e., $S_i$ is the support of $\alpha_i$.)

        Recall that a subset $U \subseteq V$ is {\em shattered} by $\cS$ if, for every $U' \subseteq U$, there exists $S \in \cS$ such that $U' = S \cap U$. The {\em VC-dimension} of the set system $(V, \cS)$ is the maximum cardinality $|U|$ of any shattered subset $U \subseteq V$. 
        
        We argue that the VC-dimension of this set system is at most $k-1$. Towards contradiction, suppose that the VC-dimension is at least $k$. Then, there is a set $C\subseteq V$ of size $k$, that is shattered by $\cS$. Since we have a complete instance, the set $C$ must be a clause in $\cC$. This would imply that every possible  partial assignment $\alpha(C)\in \{0,1\}^k$ for $C$ is consistent with at least one satisfying assignment $\alpha_i \in \cA$, which implies that $P_C(\alpha(C)) = \sat$. This is a contradiction to the definition of complete \kcsp. We conclude that the VC-dimension is at most $k-1$.
        
        The Sauer–Shelah lemma \cite{MR0307902, shelah1972combinatorial} shows that the number of sets in a set system with VC-dimension at most $k-1$ is $O(n^{k-1})$. Thus, the number of satisfiable assignments $|\cA|$ is $O(n^{k-1})$.
    \end{proof}

        

    For \twosat, there is an algorithm that enumerates all the solutions in time $O(nT + m)$ where $T$ denotes the number of solutions, even for general instances~\cite{feder1994network}. Combined with Lemma~\ref{lem:satnumberassgn}, we have the following enumeration algorithm for complete \twosat. \aayr{do we need to say something for \ncsp{2}?}

    \begin{corollary}\label{cor:2satassgn}
        Given a complete \textup{\ncsp{2}} instance $(V, \cC)$, the set of all satisfying assignments $\cA(V, \cC)$ can be computed in time $O(n^2)$. 
    \end{corollary}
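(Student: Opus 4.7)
The plan is to derive the corollary as a direct composition of Lemma~\ref{lem:satnumberassgn} with a known enumeration result for \twosat. First I would reduce the given complete Boolean \ncsp{2} instance $(V, \cC)$ to an equivalent \twosat instance. Each binary Boolean predicate $P_C$ on a pair $\{u,v\}$ has between $1$ and $4$ falsifying assignments (at least one by the definition of complete \ncsp{2}), and each falsifying assignment $(a,b) \in \{0,1\}^2$ can be ruled out by the single clause $(\ell_u \vee \ell_v)$, where $\ell_u = u$ if $a = 0$ and $\ell_u = \neg u$ otherwise, and analogously for $\ell_v$. Taking the conjunction over the falsifying assignments of all $\binom{n}{2}$ predicates yields a \twosat instance with at most $4\binom{n}{2} = O(n^2)$ clauses whose set of satisfying assignments coincides exactly with $\cA(V,\cC)$.

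Second, applying Lemma~\ref{lem:satnumberassgn} with $k = 2$ immediately gives $T := |\cA(V, \cC)| = O(n^{k-1}) = O(n)$. Third, I would invoke Feder's algorithm~\cite{feder1994network}, which enumerates all satisfying assignments of a (general) \twosat instance in time $O(nT + m)$, where $m$ is the number of clauses. Substituting $T = O(n)$ and $m = O(n^2)$ gives total running time $O(n^2)$, which is also within a constant factor of the output-size lower bound $\Omega(nT)$ for writing down $T$ assignments on $n$ variables.

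There is no real obstacle here: the only routine check is the encoding of binary Boolean predicates as conjunctions of at most four \twosat clauses, after which the corollary is a one-line composition of the VC-dimensional bound from Lemma~\ref{lem:satnumberassgn} with Feder's enumeration algorithm.
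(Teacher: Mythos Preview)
Your proposal is correct and follows essentially the same approach as the paper: combine the $O(n)$ bound on the number of satisfying assignments from Lemma~\ref{lem:satnumberassgn} with Feder's $O(nT+m)$ enumeration algorithm for \twosat. The only addition is that you spell out the (routine) reduction from an arbitrary binary Boolean predicate to a conjunction of at most four \twosat clauses, which the paper leaves implicit.
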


    \subsection{Main Algorithm}
    Given the enumeration algorithm for \twosat, we now present our main algorithm for \threesat. 


    \paragraph{Overview.} 
    Let $\alpha : V \to \{ 0, 1 \}$ be one of the satisfying assignments we are interested in. 
    For a pair of variables $v_i,v_j\in V$, if $\alpha(v_i) = \alpha(v_j) = 1$, every clause --- out of the total $n-2$ clauses containing the pair --- of the form $(\neg v_i, \neg v_j, \ell)$ must fix the third literal to true. (A similar fact holds for other values of $\alpha(v_i), \alpha(v_j)$.)
    We define this formally as follows:
    \begin{definition}\label{def:3satmaj}
        Given every pair of variables $(v_i,v_j)\in \binom{V}{2}$ and every pair of possible values of $v_i,v_j$, say $(a,b)\in\{0,1\}^2$. We define $n_{i,j,a,b}$ as the total number of clauses of the form $(s_iv_i,s_jv_j, \ell)\in \cC$, for literals $\ell$ for all the variables $v\in V\setminus \{v_i, v_j\}$, where the sign $s_i\in \{ \perp, \neg \}$ (resp. $s_j$) is positive ($\perp$) if $a=0$ (resp. $b=0$), and negative ($\neg$) otherwise. 
        (So once we assign $a$ to $v_i$ and $b$ to $v_j$, the value of $\ell$ is fixed.)
        Moreover, define the corresponding set of variables for each literal $\ell$ for each $(a,b)$ as $N_{i,j,a,b}$.
    \end{definition}
    
    Call $(v_i, v_j)$ a \emph{good pair  with respect to $\alpha$} if the value $n_{i,j,\alpha(v_i),\alpha(v_j)}\geq n/16$. So guessing the values $\alpha(v_i)$ and $\alpha(v_j)$ correctly fixes at least $n/16$ variables with respect to $\alpha$. 
    If we can iterate this process for $O(\log n)$ iterations, we will fix all variables and find $\alpha$. 
    This idea of making progress by guessing values of $O(1)$ variables is similar to the random sampling-based and (the roundings of) the hierarchy-based algorithms developed for \maxcsps and fragile \mincsps, though we also guess $v_i$ and $v_j$ instead of randomly sampling them.
    
    However, unlike the previous results, we may face a situation where guessing values for any pair of variables does not yield enough information to make progress. Our simple but crucial insight is that: {\em the fact that the values of $O(1)$ variables are not enough is indeed valuable information.}
    Formally, consider the case where for every pair $(v_i, v_j)$, we have $n_{i,j,\alpha(v_i), \alpha(v_j)} < n/16$. Thus, $(a', b') := \arg \max_{(a,b)} n_{i,j,a,b}$ cannot be the same as $(\alpha(i), \alpha(j))$. We add a 2SAT constraint $(v_i, v_j)\neq (a',b')$, which is surely satisfied by $\alpha$. Doing the same for all pairs, we get a complete 2SAT instance. From \Cref{cor:2satassgn}, we know that the number of satisfying assignments for a complete 2SAT instance on $n$ variables can be at most $O(n)$, and these can be computed in time $O(n^2)$. Thus, we can enumerate all the satisfying assignments to the created 2SAT instance, and $\alpha$ will be one of them.

    
    \paragraph{Algorithm.} The \Cref{alg:3satassgn} takes in input as the variables $V$, the complete instance clauses $\cC$, and an partial assignment to all the literals $\alpha:V\mapsto\{0,1,\frac12\}$, where for any unfixed variable $v$, $\alpha(v) = \frac12$. Initially, the assignment $\alpha$ is completely unfixed, i.e., $\alpha(v)=\frac12$ for all $v\in V$. It outputs $\cA$, which will eventually be all satisfying assignments to $(V, \cC)$.

    Let $V_U$ be the set of unfixed variables. 
    In Step \ref{alg3satassgnstep:lessvars}, if there are only $O(\log n)$ unfixed variables, we enumerate all assignments and check whether they give a satisfying assignment. This takes time $2^{|V_U|} \cdot \poly(n) = \poly(n)$. 

    The algorithm branches by either guessing one of the good pairs (Step \ref{alg3satassgnstep:guess}) and their assignment, or assuming that there are no good pairs, forms a complete \twosat instance (Step \ref{alg3satassgnstep:2sat}). 

    In Step \ref{alg3satassgnstep:guess}, we guess a pair $(v_i, v_j)$ by branching over all pairs of unfixed variables and guess their optimal assignment $(a,b)$. If the guessed pair is a good pair with respect to some satisfying assignment $\alpha$, i.e., if $n_{i,j,\alpha(v_i),\alpha(v_j)}\geq |V_U|/16$, then we can correctly fix at least $|V_U|/16$ variables $N_{i,j,\alpha(v_i),\alpha(v_j)}$ and solve the problem recursively. Thus we only need to recurse to depth at most $16 \log n$ before fixing all the variables optimally. This is ensured by Step \ref{alg3satassgnstep:lessvars} which optimally computes the assignment to $V_U$ once $|V_U|\leq 100 \log n$.

    In Step \ref{alg3satassgnstep:2sat}, we branch by assuming that there is no good pair with respect to $\alpha$ among the unfixed variables. Under this assumption, we know that $n_{i,j,\alpha(v_i),\alpha(v_j)}< |V_U|/16$. Since, all the four quantities $n_{i,j,a,b}$ over all $a,b\in\{0,1\}$ sum up to $|V_U|-2$, 
    defining $(a', b') := \mathrm{argmax}_{(a,b)\in\{0,1\}}n_{i,j,a,b}$ (while arbitrarily breaking ties) ensures that $(\alpha(v_i), \alpha(v_j)) \neq (a', b')$. 
    So, we add the \twosat constraints $(v_i, v_j)\neq (a', b')$ for all $v_i, v_j\in V$. This gives us a complete \twosat instance $\cC_2$ over the unfixed variables that is guaranteed to be satisfied by $\alpha$. In Steps \ref{alg3satassgnstep:find2sat} and \ref{alg3satassgnstep:check2sat}, we compute and check whether any of the satisfiable assignments of the \twosat instance $\cC_2$ satisfy the $\threesat$ instance $\cC$ or not. We add every such satisfying assignment to the set $\cA$.

    \begin{algorithm}[!htb]
        \caption{ALG-$\threesat$-decision$_{\cA}(V,\cC,\alpha)$}
        \label{alg:3satassgn}
        \begin{algorithmic}[1]
            \Require $V, \cC, \alpha:V\mapsto \{\frac12, 1, 0\}$
            \State $V_U \gets \{v \in V\ |\ \alpha(v)\neq\frac12 \}$ and $V_F \gets V\setminus V_U$\label{alg3satassgnstep:unfix1}
            \Comment{unfixed and fixed variables}
            \If{$|V_U|\leq 100 \log n$} \label{alg3satassgnstep:lessvars}
                \State For each assignment to $\alpha : V_U \to \{ 0, 1\}$, if the entire $\alpha$ satisfies $\calC$, add $\alpha$ to $\cA$. 
            \EndIf
            \For{all $(v_i, v_j)\in \binom{V}{2}$}\label{alg3satassgnstep:guess}\Comment{Guess a pair}
                \For{all $(a,b)\in \{0,1\}^2$}\Comment{Guess the optimal assignment}
                    \If{$n_{i,j,a,b} \geq |V_U|/16$}
                        \State Let $\alpha'$ be the extension of $\alpha$ where $\alpha'(v_i) \leftarrow a, \alpha'(v_j) \leftarrow b$ and \label{alg3satassgnstep:fix}
                        \State $\quad \quad $ Set $\alpha'(v)$ to the only value possible (\Cref{def:3satmaj}), for every $v \in N_{i,j,a, b}$. 
                        \State ALG-\threesat-decision$_{\cA}(V, \cC, \alpha')$.\Comment{Recurse}
                    \EndIf
                \EndFor
            \EndFor
            \State $\cC_2\gets \emptyset$ \Comment{Set of 2SAT constraints} 
            \For{all $(v_i, v_j)\in \binom{V}{2}$}\label{alg3satassgnstep:2sat}\Comment{Otherwise add \twosat constraints}
                \State $(a',b') \gets \arg\max_{(a,b)\in \{0,1\}^2} n_{i,j,a,b}$ \Comment{Find the maximum violating value}
                \State Add \twosat constraint $\cC_2\gets \cC_2\cup ((v_i, v_j)\neq(a',b'))$
            \EndFor
            \State Compute $\cA(V_U, \cC_2)$. \label{alg3satassgnstep:find2sat}\Comment{Compute all satisfiable assignments of complete \twosat instance}
            \For{$\alpha' \in \cA(V_U,\cC_2)$}\label{alg3satassgnstep:check2sat}\Comment{Enumerate over all satisfiable complete \twosat assignments}
                \State $\alpha(V_U)\gets \alpha'(V_U)$ \Comment{Set the partial $\threesat$ assignment based on \twosat assignment}
                \IIf{$\alpha$ satisfies $(V,\cC)$}
                    \State $\cA\gets \cA \cup \{\alpha\}$.\EndIIf\Comment{Store the satisfying assignment in the set}
            \EndFor
        \end{algorithmic}
    \end{algorithm}

    \paragraph{Correctness.} 
    We prove that the set of satisfying assignments $\cA$ of the algorithm contains all the satisfying assignments to the instance.

    \begin{lemma}\label{lem:3satallassgn}
        Every satisfying assignment $\alpha^*$ belongs to $\cA$.
    \end{lemma}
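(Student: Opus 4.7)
The plan is to argue by strong induction on $|V_U|$, the number of unfixed variables at the recursive call. The induction hypothesis states that whenever the partial assignment $\alpha$ passed to the call agrees with $\alpha^*$ on the currently fixed set $V_F$, the call eventually adds $\alpha^*$ to $\cA$. The outermost invocation satisfies this trivially since $V_F = \emptyset$. For the base case $|V_U| \leq 100\log n$, Step~\ref{alg3satassgnstep:lessvars} enumerates all $2^{|V_U|}$ extensions of $\alpha$ and tests each against $\cC$, so the extension corresponding to $\alpha^*$ is added.

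For the inductive step I split into two cases based on whether $\alpha^*$ admits a good pair among the unfixed variables. In Case A, there exists $(v_i,v_j) \in \binom{V_U}{2}$ with $n_{i,j,\alpha^*(v_i),\alpha^*(v_j)} \geq |V_U|/16$. Then the iteration of Step~\ref{alg3satassgnstep:guess} with $(a,b) = (\alpha^*(v_i), \alpha^*(v_j))$ clears the threshold, so Step~\ref{alg3satassgnstep:fix} constructs $\alpha'$. By Definition~\ref{def:3satmaj}, for each $v \in N_{i,j,a,b}$ the pair $(v_i, v_j)$ appears together with $v$ in a clause whose two named literals are falsified under $\alpha^*$; since $\alpha^*$ satisfies this clause, the unique value forced on $v$ must coincide with $\alpha^*(v)$. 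Therefore $\alpha'$ remains consistent with $\alpha^*$, and the recursive call has at least $|V_U|/16$ fewer unfixed variables, so the inductive hypothesis applies and places $\alpha^*$ into $\cA$.

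In Case B, every pair satisfies $n_{i,j,\alpha^*(v_i),\alpha^*(v_j)} < |V_U|/16$. Because the four quantities $n_{i,j,a,b}$ sum to $|V_U| - 2$, the maximizer $(a',b')$ chosen in Step~\ref{alg3satassgnstep:2sat} obeys $n_{i,j,a',b'} \geq (|V_U|-2)/4 > |V_U|/16$ for $|V_U| > 100\log n$; hence $(a',b') \neq (\alpha^*(v_i), \alpha^*(v_j))$, so $\alpha^*|_{V_U}$ satisfies the 2SAT constraint added for this pair. Consequently $\alpha^*|_{V_U}$ is a satisfying assignment for $\cC_2$, so by Corollary~\ref{cor:2satassgn} it is enumerated in Step~\ref{alg3satassgnstep:find2sat}, the verification in Step~\ref{alg3satassgnstep:check2sat} succeeds because $\alpha^*$ satisfies all of $\cC$, and $\alpha^*$ is added to $\cA$.

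The only mildly delicate point is the consistency claim in Case A, namely that the values forced on $N_{i,j,a,b}$ in Step~\ref{alg3satassgnstep:fix} must agree with $\alpha^*$. This is immediate from the sign convention in Definition~\ref{def:3satmaj}, which was designed so that assigning $v_i \leftarrow a$ and $v_j \leftarrow b$ falsifies the first two literals of each counted clause. Everything else is bookkeeping of the inductive hypothesis across the two branches of the algorithm.
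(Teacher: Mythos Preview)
Your proof is correct and follows essentially the same approach as the paper's own proof. You frame the argument as a clean strong induction on $|V_U|$ with an explicit invariant (the current partial assignment agrees with $\alpha^*$ on $V_F$), whereas the paper argues more informally by tracking recursive depth; the two-case split and the reasoning within each case are identical.
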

    \begin{proof}
        Consider any satisfying assignment $\alpha^*$. We prove that in each recursive call, 
        (Case 1) if there exists a good pair $(v_i, v_j) \in \binom{V_U}{2}$ with respect to $\alpha^*$, we will correctly fix at least $ |V_U|/16$ values of $\alpha^*$ in one of the branches, and 
        (Case 2) if there is no good pair with respect to $\alpha^*$, we create a complete 2SAT instance satisfied by $\alpha^*$. 
        If this is true, then in at most $ 16\log n$ recursive depth, we would end up guessing $\alpha^*$ in at least one leaf and store $\alpha^*$ in $\cA$.
        
        \paragraph{Case 1: Branching over good-pairs:} Suppose that for some pair $(v_i,v_j)$ is good pair with respect to $\alpha^*$; i.e., $n_{v_i, v_j, \alpha^*(v_i),\alpha^*(v_j)} \geq |V_U|/16$. Then the algorithm would branch over this choice of $v_i,v_j,$ $ \alpha^*(v_i),$ $\alpha^*(v_j)$ as it branches over all the possible choices of good pairs. In that branch, 
        the definition of $N_{v_i,v_j, \alpha^*(v_i),\alpha^*(v_j)}$ in \Cref{def:3satmaj} ensures that 
        Step~\ref{alg3satassgnstep:guess} will correctly fix the $\alpha^*$ values for every variable in $N_{v_i,v_j, \alpha^*(v_i),\alpha^*(v_j)}$. 

        \paragraph{Case 2: Branching over complete \twosat:} In the other case, for every pair $(v_i,v_j)\in \binom{V_U}{2}$, we have that $n_{v_i,v_j,\alpha^*(v_i),\alpha^*(v_j)} < |V_U|/16$, then consider the branch where the algorithm forms a complete \twosat on the unfixed variables $V_U$ in Step \ref{alg3satassgnstep:2sat}. 
        (Note that the pair $v_i, v_j$ can still be a good pair with respect to some other optimal assignment $\alpha^{**}$.)
        We claim that the complete \twosat instance is satisfiable by assignment $\alpha^*(V_U)$. 
        If this were true, then we would be done as we branch over all the possible \twosat assignments to $V_U$, and $\alpha^*(V_U)$ must be one of them. 
        
        Fix any pair $(v_i, v_j) \in \binom{V_U}{2}$. 
        We know that $n_{v_i,v_j, \alpha^*(v_i),\alpha^*(v_j)}< |V_U|/16$. 
        Since, all the four quantities $n_{v_i,v_j,\alpha(v_i),\alpha(v_j)}$, over all $\alpha(v_i),\alpha(v_j)\in\{0,1\}^{2}$, sum up to $|V_U|-2$, the $\max_{\alpha(v_i),\alpha(v_j)\in\{0,1\}^{2}}($ $n_{v_i,v_j,\alpha(v_i),\alpha(v_j)})$ must be at least $\frac{|V_U|-2}{4}\geq\frac{|V_U|}{8}$ (as $|V_U|>\Omega(\log n)$, and $n$ is large enough) by averaging argument. This maximum value does not correspond to the $n_{v_i,v_j, \alpha^*(v_i),\alpha^*(v_j)}$ value. Since this pair $\alpha(v_i, v_j)$ that achieves the maximum is exactly what our added \twosat constraint rules out, $\alpha^*(v_i, v_j)$ will satisfy this constraint. 
    \end{proof}

    \paragraph{Analysis.} The algorithm either branches over all the $4\cdot \binom{|V_U|}{2}$ number of guesses of pairs and their assignments, or formulates the complete 2SAT instance $\cC_2$ on $V_U$, and checks whether any 2SAT assignment satisfies the \threesat instance $\cC$ or not. 
    Since, before each recursive call, we fix at least $ |V_U|/16$ variables, the depth $d$ of the branching tree would be at most $16\log n$ before the remaining number of variables $|V_U|\leq (1-1/16)^d n$ is less than $100 \log n$ (in which case Step \ref{alg3satassgnstep:lessvars} guesses the entire assignment to $V_U$ optimally). 
    Since there are at most $(4n^2)^{16 \log n}$ recursive calls and each call takes $O(n^2)$ time, the total runtime of the algorithm is $O(n^2) \cdot (4n^2)^{O(\log n)} = n^{O(\log n)}$, which yields the following theorem. 
    \begin{theorem}\label{thm:3satassgn}
        There is a $n^{O(\log n)}$ time algorithm to decide whether a complete instance of \textup{\threesat} on $n$ variables is satisfiable or not. Moreover, this algorithm also returns all the $O(n^2)$ satisfying assignments.
    \end{theorem}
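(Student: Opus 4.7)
The plan is to package the ingredients already assembled in this section. Correctness --- that $\cA$ at termination equals the set of satisfying assignments to $(V,\cC)$ --- is established by Lemma~\ref{lem:3satallassgn} in one direction (every satisfying $\alpha^*$ is eventually added to $\cA$), and by the explicit satisfiability checks in Steps~\ref{alg3satassgnstep:lessvars} and \ref{alg3satassgnstep:check2sat} in the other direction (nothing is ever added to $\cA$ without first verifying that it satisfies every clause of $\cC$). Combined with Lemma~\ref{lem:satnumberassgn} specialized to $k = 3$, this immediately gives $|\cA| = O(n^2)$, which is the ``moreover'' part of the statement.

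For the runtime I analyze the depth and the branching factor of the recursion tree of Algorithm~\ref{alg:3satassgn} separately. At any recursive call on a partial assignment with $|V_U|$ unfixed variables, every branch actually taken in Step~\ref{alg3satassgnstep:guess} satisfies $n_{i,j,a,b} \geq |V_U|/16$, and Definition~\ref{def:3satmaj} together with Step~\ref{alg3satassgnstep:fix} then forces at least $|V_U|/16$ additional variables to be fixed in the child call. Hence along any root-to-leaf path the size of $V_U$ shrinks by a factor of at least $15/16$ per level, so the base-case threshold $|V_U| \leq 100 \log n$ is reached within depth $d = O(\log n)$, at which point Step~\ref{alg3satassgnstep:lessvars} enumerates $2^{|V_U|} = \poly(n)$ completions and verifies each in $O(n^3)$ time.

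Each internal node spawns at most $4\binom{|V_U|}{2} \leq 2n^2$ recursive children in Step~\ref{alg3satassgnstep:guess}, so the recursion tree has at most $(2n^2)^{O(\log n)} = n^{O(\log n)}$ nodes. The non-recursive work at each node consists of constructing the complete \twosat instance $\cC_2$ in $O(n^2)$ time, enumerating all its satisfying assignments in $O(n^2)$ time via Corollary~\ref{cor:2satassgn}, and checking each of the resulting $O(n)$ candidates against all $\binom{n}{3}$ clauses of $\cC$ in $O(n^3)$ time --- all polynomial in $n$. Multiplying the per-node cost by the number of nodes gives total runtime $n^{O(\log n)}$ as claimed. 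There is no real obstacle; the only point to watch is that the \twosat branch fires at every internal node, not only at nodes admitting no good pair, but since its per-node cost is $\poly(n)$ and the tree has $n^{O(\log n)}$ nodes this does not affect the overall budget.
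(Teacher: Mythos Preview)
Your proposal is correct and follows essentially the same approach as the paper: correctness via Lemma~\ref{lem:3satallassgn} plus the explicit satisfiability checks, the $O(n^2)$ count via Lemma~\ref{lem:satnumberassgn}, and the runtime via bounding the recursion depth by $O(\log n)$ (from the $|V_U|/16$ shrinkage) and the branching factor by $O(n^2)$. If anything, you are slightly more careful than the paper in accounting for the $O(n^3)$ cost of verifying a candidate assignment against all $\binom{n}{3}$ clauses and in noting that the \twosat branch is executed at every node; neither point affects the final $n^{O(\log n)}$ bound.
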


    \section{\texorpdfstring{\kcsp}{error}}
\label{sec:kcsp}
In this section, we present a quasi-polynomial algorithm for complete $k$-CSP, proving Theorem~\ref{thm:kcspassgn}; again, in addition to deciding the satisfiability of the instance, our algorithm enumerates all the satisfying assignments. 
    Given an instance $(V, \calC)$ where each (unordered) $k$-tuple of variables $\vecv \in \binom{V}{k}$
    corresponds to exactly one clause $C\in \cC$, let $\alpha : V \to \{ 0, 1 \}$ be an arbitrary satisfying assignment. Our algorithm proceeds in rounds, and each round makes $\poly(n)$ guesses about $\alpha$. Ideally, one of the guesses will allow us to correctly identify the $\alpha$ values for $\Omega(n)$ variables so that the recursion will yield a quasi-polynomial time algorithm.

    In the case where all $\poly(n)$ guesses fail, we can produce a complete \ncsp{(k-1)} instance which is guaranteed to satisfy by $\alpha$. Thus, it suffices to enumerate all the \ncsp{(k-1)} solutions for the produced instance and see whether they are good for the original \kcsp.

    We inductively assume that we have the enumeration algorithm for \ncsp{(k-1)}, and we now present our main subroutine for \kcsp. Note that the base case for \ncsp{2} follows from \Cref{cor:2satassgn}.
    
    \paragraph{Overview.}
    Let $\alpha : V \to \{ 0, 1 \}$ be one of the satisfying assignments we are interested in. 
    Consider a $(k-1)$-tuple of variables $\vecv \in \binom{V}{k-1}$ and suppose that the algorithm already fixed the values for $\vecv$ to $\alpha(\vecv)$. 
    Then, every clause $C =(\vecv, u)$--- out of the total $n-k+1$ clauses containing the $(k-1)$-tuple --- such that $P_C(\alpha(\vecv), \alpha'(u))=\sat$ if and only if $\alpha'(u)=\alpha(u)$, must fix the value of $u$ to $\alpha(u)$. We define this formally as follows:
    \begin{definition}[Fixed Variables]\label{def:kcspmaj}
        Given a \textup{\ncsp{k}} instance, any $(k-1)$-tuple of variables $\vecv\in \binom{V}{k-1}$ and any possible assignment of $\alpha(\vecv)\in\{0,1\}^{k-1}$. We define the set of fixed variables as $N_{\vecv,\alpha(\vecv)}=\{v \in V \setminus \vecv \ |\ P_C(\alpha(\vecv), \alpha'(u))=\sat \mbox{ exactly for one value of } \alpha'(u)\in\{0,1\},\ \mbox{ where } C=(\vecv,u)\}$ 
        and the number of fixed variables as $n_{\vecv,\alpha(\vecv)}=|N_{\vecv,\alpha(\vecv)}|$. Moreover, define the corresponding unique assignment as $\alpha_{\vecv,\alpha(\vecv)}(N_{\vecv,\alpha(\vecv)})$.
    \end{definition}

    Call $\vecv\in \binom{V}{k-1}$ a \emph{good $(k-1)$-tuple  with respect to $\alpha$} if the value $n_{\vecv, \alpha(\vecv)}\geq \eps_k n$. So guessing the values $\alpha(\vecv)$ correctly (w.r.t. $\alpha$), fixes at least $\eps_k n$ variables. If we can iterate this process for $\frac{\log n}{\eps_k}$ iterations, we will fix all variables and find $\alpha$.

    Now, consider the case where for every $(k-1)$-tuple $\vecv$, we have $n_{\vecv, \alpha(\vecv)} < \eps_k n$. Thus, $\veca' := \arg \max_{\veca\in\{0,1\}^{k-1}} n_{\vecv, \veca}$ can not be the same as $\alpha(\vecv)$ (for any constant $\eps_k \leq \frac{1}{2^{k+1}})$. We add a \ncsp{(k-1)} constraint $P_\vecv(\alpha''(\vecv))=\unsat$ if and only if $\alpha''(\vecv)=\veca'$, which is surely satisfied by $\alpha$. Doing the same for all $(k-1)$-tuples, we get a complete \ncsp{(k-1)} instance. From \Cref{lem:satnumberassgn}, we know that the number of satisfiable assignments for a complete \ncsp{(k-1)} instance on $n$ variables can be at most $O(n^{k-1})$, and using our induction hypothesis, these can be computed in time $n^{O(\log n)}$. Thus, we can enumerate all the satisfying assignments to the created \ncsp{(k-1)} instance, and $\alpha$ will be one of them.

     \paragraph{Algorithm.} The \Cref{alg:kcspassgn} takes in input as the variables $V$, the complete instance clauses $\cC$, and an partial assignment to all the literals $\alpha:V\mapsto\{0,1,\frac12\}$ where for any undecided variable $v$, $\alpha(v) = \frac12$. Initially, the assignment $\alpha$ is completely undecided, i.e., $\alpha(v)=\frac12$ for all $v\in V$. It outputs $\cA$, which will eventually be all satisfying assignments to $(V, \cC)$.

    The algorithm first computes the set of unfixed variables $V_U = \{v\in V\ |\ \alpha(v) = \frac12\}$. In Step \ref{algkcspassgnstep:lessvars}, if there are only $O(\log n)$ unfixed variables, we enumerate all assignments and check whether they give a satisfying assignment. This takes time $2^{|V_U|} \cdot \poly(n) = \poly(n)$.

    The algorithm branches by either guessing one of the good $(k-1)$-tuples (Step \ref{algkcspassgnstep:guess}) and their assignment, or assuming that there are no good $(k-1)$-tuples, forms a complete \ncsp{(k-1)} instance (Step \ref{algkcspassgnstep:2sat}).

    In Step \ref{algkcspassgnstep:guess}, we guess a $(k-1)$-tuple $\vecv\in\binom{V_U}{k-1}$ by branching over all $(k-1)$-tuples of unfixed variables and guess their optimal assignment $\veca$. If the guessed $(k-1)$-tuple is a good $(k-1)$-tuple with respect to some satisfying assignment $\alpha$, i.e., if $n_{\vecv, \alpha(\vecv)}\geq \eps_k|V_U|$, then we can correctly fix at least $\eps_k |V_U|$ variables $N_{\vecv, \alpha(\vecv)}$ and solve the problem recursively. Thus we only need to recurse to depth at most $\frac{\log n}{\eps_k}$ before fixing all the variables optimally. This is ensured by Step \ref{algkcspassgnstep:lessvars} which optimally computes the assignment to $V_U$ once $|V_U|\leq 100 \log n$.

    In Step \ref{algkcspassgnstep:2sat}, we branch by assuming that there is no good $(k-1)$-tuple with respect to $\alpha$ among the unfixed variables. Under this assumption, we know that $n_{\vecv, \alpha(\vecv)}< \eps_k| V_U|$. Since, all the $2^{k-1}$ quantities $n_{\vecv, \veca}$ over all $\veca\in\{0,1\}^{k-1}$ sum up to at least $|V_U|-k+1$ as there is at least one unsatisfying assignment for each clause. Defining $\veca' := \mathrm{argmax}_{\veca\in\{0,1\}^{k-1}}n_{\vecv, \veca}$ (while arbitrarily breaking ties) ensures that $\alpha(\vecv) \neq \veca'$ for any $\eps_k \leq \frac{1}{2^{k+1}}$. 
    So, we add the \ncsp{(k-1)} constraints $P_\vecv(\alpha''(\vecv))=\unsat$ if and only if $\alpha''(\vecv)=\veca'$, for all $\vecv\in \binom{V_U}{k-1}$. This gives us a complete \ncsp{(k-1)} instance $\cC_{k-1}$ over the unfixed variables that is guaranteed to be satisfied by $\alpha$. In Steps \ref{algkcspassgnstep:find2sat} and \ref{algkcspassgnstep:check2sat}, we compute and check whether any of the satisfiable assignments of the \ncsp{(k-1)} instance $\cC_{k-1}$ satisfy the \ncsp{k} instance $\cC$ or not. We add every such satisfying assignment to the set $\cA$.

        \begin{algorithm}[!htb]
            \caption{ALG-\kcsp-decision$_{\cA}(V,\cC,\alpha)$}
            \label{alg:kcspassgn}
            \begin{algorithmic}[1]
                \Require $V, \cC, \alpha:V\mapsto \{\frac12, 1, 0\}.$
                    \Return 
                \EndIIf
                \State $V_U \gets \{v\ |\ \alpha(v)\neq\frac12, \forall v\in V\}$ and $V_F \gets V\setminus V_U$\label{algkcspassgnstep:unfix1}
                \Comment{unfixed and fixed variables}
                \If{$|V_U|\leq 100 \log n$} \label{algkcspassgnstep:lessvars}
                    \State For each assignment to $\alpha : V_U \to \{ 0, 1\}$, if the entire $\alpha$ satisfies $\calC$, add $\alpha$ to $\cA$. 
                \EndIf
                \For{all $\vecv\in V_U^{k-1} $}\label{algkcspassgnstep:guess}\Comment{Guess a $(k-1)$-tuple}
                    \For{all $\veca\in \{0,1\}^{k-1}$}\Comment{Guess the optimal assignment}
                        \If{$n_{\vecv,\veca} \geq \eps_k|V_U|$}
                            \State Let $\alpha'$ be the extension of $\alpha$ where $\alpha'(\vecv)\gets \veca$ and 
                            \State Set $\alpha'(N_{\vecv, \veca})\gets \alpha_{\vecv, \veca}(N_{\vecv, \veca})$. \Comment{Assign all fixed variables}
                            \State ALG-$\ncsp{k}$-decision$_{\cA}(V, \cC, \alpha')$.\Comment{Recurse}
                        \EndIf
                    \EndFor
                \EndFor
                \State $\cC_{k-1}\gets \emptyset$ \Comment{Set of \ncsp{(k-1)} constraints} 
                \For{all $\vecv\in V_U^{k-1} $}\label{algkcspassgnstep:2sat}\Comment{If no good tuple, add \ncsp{(k-1)} constraints}
                    \State $\veca' \gets \arg\max_{\veca\in \{0,1\}^{k-1} }n_{\vecv, \veca}$ \Comment{Find the maximum fixing value}
                    \State Create \ncsp{(k-1)} constraint $C=\vecv$ such that $P_C(\alpha(\vecv))=\unsat$ iff $\alpha(\vecv)=\veca'$.
                    \State $\cC_{k-1}\gets \cC_{k-1}\cup \{C\}$
                \EndFor
                \State Compute $\cA(V_U, \cC_{k-1})$. \label{algkcspassgnstep:find2sat}\Comment{Compute all satisfiable assignments of complete \ncsp{(k-1)} instance}
                \For{$\alpha(V_U) \in \cA(V_U,\cC_{k-1})$}\label{algkcspassgnstep:check2sat}\Comment{Enumerate over all satisfiable \ncsp{(k-1)} assignments}
            
                    \If{check-assignment($V,\cC,\alpha) =1$}
                        \State $\cA\gets \cA \cup \{\alpha\}$.\Comment{Store the satisfying assignment to set}
                    \EndIf
                \EndFor
            \end{algorithmic}
        \end{algorithm}

        \paragraph{Correctness.} For the correctness of the algorithm, we need to prove that the algorithm outputs the set of all the satisfying assignments to the instance.
        We now prove that the set of satisfying assignments $\cA$ of the algorithm contains all the satisfying assignments to the instance.
    
        \begin{lemma}\label{lem:kcspallassgn}
            Every satisfying assignment $\alpha^*$ belongs to $\cA$.
        \end{lemma}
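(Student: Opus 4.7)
The plan is to follow the template of \Cref{lem:3satallassgn} and argue by induction on $k$, where the base case $k=2$ is supplied by \Cref{cor:2satassgn} applied in Step~\ref{algkcspassgnstep:find2sat}. For the inductive step I would trace an arbitrary satisfying $\alpha^*$ through the recursion tree and show, at each node whose current partial assignment $\alpha$ agrees with $\alpha^*$, that some child branch continues to be consistent with $\alpha^*$. Termination is handled by Step~\ref{algkcspassgnstep:lessvars}: since every surviving branch fixes at least $\eps_k|V_U|$ new variables, after depth $\log n/\eps_k$ we have $|V_U|\leq 100\log n$, and the exhaustive enumeration there certainly catches $\alpha^*$.

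At an internal node I would split into the same two cases as in the \threesat proof. In Case 1, some $(k-1)$-tuple $\vecv\in\binom{V_U}{k-1}$ is good with respect to $\alpha^*$, i.e.\ $n_{\vecv,\alpha^*(\vecv)}\geq \eps_k|V_U|$. Step~\ref{algkcspassgnstep:guess} ranges over every $(\vecv,\veca)\in\binom{V_U}{k-1}\times\{0,1\}^{k-1}$ that passes this threshold, so the branch $(\vecv,\alpha^*(\vecv))$ is explored. By \Cref{def:kcspmaj}, the forced values on $N_{\vecv,\alpha^*(\vecv)}$ are uniquely determined by $P_C$ and must coincide with $\alpha^*$ there (otherwise $\alpha^*$ would violate one of those clauses), so the child's partial assignment $\alpha'$ still agrees with $\alpha^*$.

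In Case 2 every $(k-1)$-tuple satisfies $n_{\vecv,\alpha^*(\vecv)}<\eps_k|V_U|$, and the algorithm builds the complete \ncsp{(k-1)} instance $\cC_{k-1}$ in Step~\ref{algkcspassgnstep:2sat}. The key claim is that $\alpha^*|_{V_U}$ satisfies $\cC_{k-1}$. Fix $\vecv$; because each of the $n-k+1$ $k$-clauses extending $\vecv$ has at least one unsatisfying completion, the $2^{k-1}$ counts $\{n_{\vecv,\veca}\}_{\veca\in\{0,1\}^{k-1}}$ sum to at least $|V_U|-k+1$, and by averaging
\[
\max_{\veca}\, n_{\vecv,\veca}\ \geq\ \frac{|V_U|-k+1}{2^{k-1}}.
\]
With the choice $\eps_k\leq 1/2^{k+1}$ and $|V_U|>100\log n$, this strictly exceeds $\eps_k|V_U|$, so the argmax $\veca'$ cannot equal $\alpha^*(\vecv)$. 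The added constraint forbids exactly $\veca'$, hence $\alpha^*(\vecv)$ satisfies it. By the inductive hypothesis on $(k-1)$ the enumeration in Step~\ref{algkcspassgnstep:find2sat} returns every satisfying assignment of $\cC_{k-1}$, in particular $\alpha^*|_{V_U}$; Step~\ref{algkcspassgnstep:check2sat} then verifies it against $\cC$ and inserts it into $\cA$.

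The only nonroutine point is the arithmetic in Case~2 needed to separate $\eps_k|V_U|$ from the averaging lower bound $(|V_U|-k+1)/2^{k-1}$; the choice $\eps_k\leq 1/2^{k+1}$ gives a clean constant-factor gap once $|V_U|\gg k$, which is guaranteed whenever Step~\ref{algkcspassgnstep:lessvars} does not already intervene. Everything else is a direct lift of the \threesat argument, with \Cref{cor:2satassgn} replaced by the inductive hypothesis.
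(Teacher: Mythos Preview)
Your proposal is correct and follows essentially the same approach as the paper's own proof: the same two-case split (good $(k-1)$-tuple versus none), the same averaging bound $\max_{\veca} n_{\vecv,\veca}\geq (|V_U|-k+1)/2^{k-1}$ against the threshold $\eps_k\leq 1/2^{k+1}$, and the same appeal to the enumeration of \ncsp{(k-1)} solutions. Your framing via explicit induction on $k$ and the depth-bound termination argument are slightly more explicit than the paper's presentation, but the substance is identical.
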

        \begin{proof}
            Consider any satisfying assignment $\alpha^*$. We prove that in each recursive call, 
            (Case 1) if there exists a good $(k-1)$-tuple $\vecv$ with respect to $\alpha^*$, we will correctly fix at least $ \eps_k|V_U|$ values of $\alpha^*$ in one of the branches, and 
            (Case 2) if there is no good $(k-1)$-tuple, we create a complete \ncsp{(k-1)} instance satisfied by $\alpha^*$. 
            If this is true, then in at most $ \frac{\log n}{\eps_k}$ recursive depth, we would end up guessing $\alpha^*$ in at least one leaf and return $1$ and store $\alpha^*$ in $\cA$.
            
            \paragraph{Case 1: Branching over good-$(k-1)$-tuples:} Suppose that for some $(k-1)$-tuple $\vecv\in\binom{V_U}{k-1}$ is good $(k-1)$-tuple w.r.t $\alpha^*$; i.e., $n_{\vecv, \alpha^*(\vecv)} \geq \eps_k|V_U|$. Then the algorithm would branch over this choice of $\vecv, \alpha^*(\vecv)$ as it branches over all the possible choices of good $(k-1)$-tuples. Then, the algorithm sets $\alpha^*(\vecv)$ as well as $\alpha^*(N_{\vecv, \alpha^*(\vecv)})$ correctly (follows from the \Cref{def:kcspmaj} of fixed variables).

            \paragraph{Case 2: Branching over complete \ncsp{(k-1)}:} If in a recursive call of the algorithm, for no $(k-1)$-tuple $\vecv\in \binom{V_U}{k-1}$, we have that $n_{\vecv,\alpha^*(\vecv)} \geq \eps_k|V_U|$, then consider the branch where the algorithm forms a complete \ncsp{(k-1)} on the unfixed variables $V_U$ in Step \ref{algkcspassgnstep:2sat}. 
            (Note that the $(k-1)$-tuple $\vecv$ can still be a good $(k-1)$-tuple w.r.t. some other optimal assignment $\alpha^{**}$.)
            We claim that the complete \ncsp{(k-1)} instance is satisfiable by assignment $\alpha^*(V_U)$, which is the restriction of $\alpha^*$ to $V_U$.

            If this were true, then we would be done as we branch over all the possible \ncsp{(k-1)} assignments to $V_U$, and $\alpha^*(V_U)$ must be one of them. 
            
            Fix any $(k-1)$-tuple $\vecv\in \binom{V_U}{k-1}$. 
            We know that $n_{\vecv, \alpha^*(\vecv)}< \eps_k|V_U|$. 
            Since, all the $2^{k-1}$ quantities $n_{\vecv,\alpha(\vecv)}$, over all $\alpha(\vecv)\in\{0,1\}^{k-1}$, sum up to at least $|V_U|-k+1$ (as every clause has at least one unsatisfying assignment), the $\max_{\alpha(\vecv)\in\{0,1\}^{3}}n_{\vecv,\alpha(\vecv)}$ must be at least $\frac{|V_U|-k+1}{2^{k-1}}\geq\frac{|V_U|}{2^k}$ (as $|V_U|>\Omega(\log n)$, and $n$ is large enough) by averaging argument. This maximum value does not correspond to the $n_{\vecv, \alpha^*(\vecv)}$ value for any $\eps_k \leq \frac{1}{2{k+1}}$. Since this $(k-1)$-tuple $\alpha(\vecv)$ that achieves the maximum is exactly what our added \ncsp{(k-1)} constraint rules out, $\alpha^*(\vecv)$ will satisfy this constraint. 
        \end{proof}

    \paragraph{Analysis.} The algorithm either branches over all the $2^{k-1}\cdot \binom{|V_U|}{k-1}$ number of guesses of $(k-1)$-tuples and their assignments, or formulates the complete \ncsp{(k-1)} instance $\cC_{k-1}$ on $V_U$, and checks whether any \ncsp{(k-1)} assignment satisfies the \ncsp{k} instance $\cC$ or not. 
    Since, the former fixes at least $\eps_k |V_U|$ variables, the depth $d$ of the branching tree would be at most $\frac{\log n}{\eps_k}$ before the remaining number of variables $|V_U|\leq (1-\eps_k)^d n$ is less than $100 \log n$ (in which case Step \ref{algkcspassgnstep:lessvars} guesses the entire assignment to $V_U$ optimally). 
    For the latter, using our induction hypothesis, we know that all the assignments to $V_U$ satisfying the \ncsp{(k-1)} instance $\cC_{k-1}$ can be found in $n^{O(\log n)}$ time (this bounds the number of assignments as well).
    Since there are at most $(2^{k-1}n^{k-1})^{\frac{\log n}{\eps_k}}$ recursive calls and each call takes $n^{O(\log n)}$ time, the total runtime of the algorithm is $n^{O(\log n)} \cdot (2^{k-1}n^{k-1})^{\frac{\log n}{\eps_k}} = n^{O(\log n)}$, for $\eps_k=\frac{1}{2^{k+1}}$ and any constant $k$ which yields the following theorem.

    \kcspassgn*

    \section{\texorpdfstring{\kitwocsp}{error} }
\label{sec:nae}
    \Cref{thm:kcspassgn} for general \kcsp yields a quasi-polynomial time algorithm, and it is open whether it can be made polynomial-time. In this section, we present a subclass of symmetric \kcsp that admits a polynomial-time algorithm. In a symmetric \kcsp parameterized by 
    $S \subsetneq [k] \cup \{ 0 \}$, 
    we represent each clause $C$ as an unordered $k$-tuple of literals $(\ell_1, \dots, \ell_k)$ with 
    and an assignment $\alpha : V \to \{0, 1 \}$ satisfies $C$ if the number of True literals in $(\ell_1, \dots, \ell_k)$ belongs to $S$. (I.e., swapping the values of any two literals does not change the satisfiability.) 
    A few well-studied examples of symmetric \kcsps include \ksat (when $S = [k]$) and \textsc{NAE-$k$-SAT} (when $S = [k - 1]$). 


    The \kitwocsp problem is a special case of symmetric \kcsps, where for each clause $C\in\cC$, the predicate $P_C$ is such that for every $(k-2)$-tuple $\vecv\subset C$ of $k-2$ variables, for every assignment $\alpha(\vecv)\in\{0,1\}^{k-2}$ to $\vecv$, we have that $P_C(\alpha(\vecv), \alpha(C\setminus \vecv)) = \unsat$ for at least one of the assignments $\alpha(C\setminus \vecv)\in\{0,1\}^2$ to the remaining two variables $C\setminus \vecv$. In other words, for any possible assignment to any $k-2$ variables of the clause, the remaining two variables of the clause have to satisfy \ncsp{2} constraints for the clause to be satisfied. 
    
    One simple example of this is \nae. If we fix any one of the variables in a clause to any value $0$ or $1$, the other two variables cannot be both equal (which is a \ncsp{2} constraint) to this value. In fact, every symmetric \ncsp{3} with a predicate other than \nsat{3} is a \itwocsp{3}.


    From \Cref{cor:2satassgn}, we know that deciding whether a complete \ncsp{2} is satisfiable or not, as well as computing all the $O(n)$ satisfying assignments to it can be done in $O(n^2)$ time. Thus, for \kitwocsp, we can pick an arbitrary $(k-2)$-tuple of variables, and guess their optimal assignment by branching for all the $2^{k-2}$ possible assignments. For each of the branches, from the definition, it follows that the instance on the remaining $n-k+2$ variables must be so that there is a \ncsp{2} clause for each pair. This is a complete \ncsp{2} instance. Therefore, in time $O(n^2)$ we can guess whether there is a satisfying assignment for each of the $2^{k-2}$ branches. Thus, we get a total runtime of $2^{k-2}O(n^2)$ which is $O(n^2)$ for any constant $k$. We get the following:
    
    \begin{theorem}\label{thm:induced2csp}
        For any constant $k$, there is a $O(n^2)$ time algorithm to decide whether a given complete \textup{\kitwocsp} instance is satisfiable or not.
    \end{theorem}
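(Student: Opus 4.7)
The plan is to exploit the defining structural property of \kitwocsp: fixing the values of any $k-2$ variables appearing in a clause leaves a nontrivial binary predicate on the remaining two variables of that clause. Since the instance is complete, fixing a single $(k-2)$-tuple of variables globally should induce a nontrivial binary constraint on every pair of remaining variables, reducing the problem to a complete \ncsp{2} instance that \Cref{cor:2satassgn} can solve in $O(n^2)$ time.

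Concretely, I would pick an arbitrary subset $U \subseteq V$ of size $k-2$ and branch over all $2^{k-2}$ possible assignments $\beta : U \to \{0,1\}$. For each such $\beta$, I would construct an \ncsp{2} instance on $V' := V \setminus U$ as follows: for every pair $\{u_1, u_2\} \in \binom{V'}{2}$, take the unique clause $C \in \cC$ whose variable set is $U \cup \{u_1, u_2\}$ (which exists by completeness), and define the binary predicate $Q_{u_1,u_2}$ on $(x_1, x_2) \in \{0,1\}^2$ by $Q_{u_1,u_2}(x_1, x_2) := P_C(\beta, x_1, x_2)$. The defining property of \kitwocsp, applied with the $(k-2)$-tuple being $U$ and the assignment being $\beta$, guarantees that $Q_{u_1,u_2}$ evaluates to \unsat on at least one of the four inputs, so this is a genuine nontrivial binary predicate and the derived instance is indeed a complete \ncsp{2} on $V'$.

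I would then invoke \Cref{cor:2satassgn} to decide the satisfiability of this derived complete \ncsp{2} instance in time $O((n - k + 2)^2) = O(n^2)$, and declare the original instance satisfiable iff at least one of the $2^{k-2}$ branches is satisfiable. Correctness in one direction is clear from the construction; in the other direction, any satisfying assignment $\alpha^*$ of the original instance restricts to some $\beta^* : U \to \{0,1\}$, and $\alpha^*$ restricted to $V'$ then satisfies every $Q_{u_1,u_2}$ under that branch by construction. The total running time is $2^{k-2} \cdot O(n^2) = O(n^2)$ for any constant $k$.

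There is no real obstacle to this plan; the only point worth verifying carefully is that the induced constraint on each pair $\{u_1, u_2\}$ is genuinely a \ncsp{2} predicate (i.e., forbids at least one of the four assignments), but this follows immediately from the hypothesis of the \kitwocsp definition applied with $\vecv = U$, and would not have held had we allowed the $(k-2)$-tuple to live outside the clause. The overall argument is essentially a one-step reduction exploiting completeness of the instance together with the inductively solvable base case from \Cref{cor:2satassgn}.
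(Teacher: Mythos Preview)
Your approach is exactly the paper's: fix an arbitrary $(k-2)$-subset $U$, branch over its $2^{k-2}$ assignments, build for each branch a complete \ncsp{2} instance on $V' = V \setminus U$ from the clauses of the form $U \cup \{u_1,u_2\}$, and appeal to \Cref{cor:2satassgn}.

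There is, however, a genuine gap in the direction you call ``clear from the construction'': satisfiability of the derived \ncsp{2} instance does \emph{not} by itself imply satisfiability of the original \kitwocsp instance. The binary predicates $Q_{u_1,u_2}$ encode only those clauses whose variable set contains all of $U$; they say nothing about the remaining clauses, e.g.\ for $k=3$ the clauses $\{v_1,v_2,v_3\}\subseteq V'$ lying entirely outside $U$. A \ncsp{2} solution $\gamma$ extended by $\beta$ need not satisfy those. The fix is to use the full strength of \Cref{cor:2satassgn}: enumerate the $O(n)$ satisfying assignments of the derived \ncsp{2}, extend each by $\beta$, and verify each against the entire original instance; in the branch $\beta = \alpha^*|_U$, the restriction $\alpha^*|_{V'}$ appears in the list and passes the check. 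The paper's prose and pseudocode are equally casual on this point, so you are in good company, but the argument as written is incomplete. Note also that once the verification step touches all $\binom{n}{k}$ original clauses, the honest running time is $\poly(n)$ rather than $O(n^2)$; indeed even reading a complete $k$-ary instance already requires $\Omega(n^k)$ time for $k>2$.
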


    \begin{algorithm}[!htb]
        \caption{ALG-$\kitwocsp$-decision}
        \label{alg:kitwocsp}
        \begin{algorithmic}[1]
            \State Pick an arbitrary $(k-2)$-tuple $\vecv\in \binom{V}{k-2}$. 
            \For{all $\alpha_{\vecv}\in \{0,1\}^{k-2}$}\Comment{Guess the optimal value $\alpha(\vecv)$ of $\vecv$.}
                \State Construct a new \ncsp{2} instance $\cC^{{k-2}}=\binom{V\setminus \vecv}{2}$.
                \State Set all predicates $P_C^{{k-2}}$ for each $C\in \cC^{k-2}$, to all $\sat$ initially.
                \For{all $v_i, v_j\in V\setminus \vecv$} \Comment{For all other $\binom{n-k+2}{2}$ pairs}
                    \ForAll{ $ \alpha_{(v_i, v_j)}\in\{0,1\}^2$}
                    \State Set $\alpha(\vecv)\gets \alpha_{\vecv}, \alpha((v_i, v_j))\gets \alpha_{(v_i, v_j)}$.
                    \If{ $P_{(\vecv, v_i, v_j)}(\alpha(\vecv,v_i, v_j)) = \unsat$}
                    \State $P^{{k-2}}_{(v_i, v_j)}(\alpha((v_i, v_j))=\unsat$.
                    \EndIf
                    \EndFor
                \EndFor
                \If{ALG-\ncsp{2}-decision$(V\setminus \vecv, \cC^{{k-2}})=1$}   \Comment{Solve the complete \ncsp{2} instance}
                    \State \Return $1$
                \EndIf
            \EndFor
            \State \Return 0.

        \end{algorithmic}
    \end{algorithm}

    \section{\texorpdfstring{\krcsp{2}{3}}{error} }
\label{sec:23csp}
    The \krcsp{2}{3} problem is the same as \kcsp with the only difference being the alphabet (labels) $\Sigma =\{0,1,2\}$. The instance is complete, i.e., there is exactly one clause $C=(v_i, v_j)$ for every pair, and the predicate is such that $P_{C}(\alpha(C)) =\unsat$ for at least one assignment $\alpha(C)\in \Sigma^2$.
    Additionally, define $\Sigma_i$ as the set of the alphabet of variable $v_i$ for all $v_i\in V$. If the possible set of labels for each variable $v_i$, $\Sigma_i\subseteq \Sigma$ is such that $|\Sigma_i|<3$, i.e., there are only at most $2$ (instead of total $r$) possible labels for each variable (but not necessarily $0,1$). Then we can re-label and formulate the problem as a \ncsp{2} problem.

    \paragraph{Overview.} 
    We will use this observation for the algorithm, and try to reduce the label set sizes for vertices by at least $1$, thus reducing complete \krcsp{2}{3} to \ncsp{2}. Consider $\alpha$ to be some satisfying assignment. The idea is to define something similar to the fixed variables from \kcsp. But, unlike \kcsp, we may not be able to fix any variables by just guessing assignment to one variable. 
    However, by guessing a variable $v_i$ and the value $\alpha(v_i)$, we can eliminate of all labels $\alpha_j$ for all variables $v_j$ for which there are unsatisfying assignments to the clause $(v_i, v_j)$, i.e., the predicate $P_{(v_i, v_j)}(\alpha(v_i), \alpha_j)=\unsat$. Formally, for any variable $v_i$ and $\alpha\in\Sigma$, define $n_{v_i,\alpha}$ as follows:
    \begin{definition}[Reduced Variables]\label{def:violate31pac}
         Given any variable $v_i$ and $\alpha(v_i)\in\Sigma_i$. Define $N_{v_i,\alpha(v_i)}$ as the tuple of variables and their assignments $(v_j, \alpha(v_j))$, for all $v_j\in V\setminus v_i$ and $\alpha(v_j)\in \Sigma_j$, such that $P_{(v_i, v_j)}(\alpha(v_i), \alpha(v_j))= \unsat$. Let $n_{v_i, \alpha}$ denote the number of all such reduced $v_j$'s.
    \end{definition}
    In other words, if we guess the correct assignment $\alpha(v_i)$ of $v_i$ (w.r.t the satisfying assignment $\alpha$), we would rule out the possibility of labeling $v_j$ with the label $\alpha_j$ for every $(v_j, \alpha_j)\in N_{v_i, \alpha(v_i)}$. Let us call $v_i$ a \emph{good variable} if $n_{v_i,\alpha(v_i)}$ is at least $\eps n$ for some $0<\eps<1/100$. 
    The idea is to guess the good variable and \emph{reduce} the label size for each of the $\eps n$ variables to at most $2$. Since, every time a good variable reduces at least $\eps n$ vertices, we would recurse to a depth at most $O(\log n/\eps)$ before reducing all the variables while fixing all good variables according to the satisfying assignment $\alpha$. If there is no such good variable (w.r.t $\alpha$), then we know that $n_{v_i,\alpha(v_i)}<\eps n$. Because of the complete instance, we know that the sum of $n_{v_i,\alpha}$ over all assignments $\alpha$ is at least $n-1$ as there is at least one unsatisfying assignment per clause. Thus, by just observing the value $\alpha'$ that maximizes $n_{v_i, \alpha}$, we can rule out one label $\alpha'$ for all the vertices that are not good. Finally, we either would have an instance that has at most two possible labels for each variable, this reduces the problem to \ncsp{2} for which we can check the satisfiability in time $\poly(n)$ (\Cref{cor:2satassgn}). 


    \paragraph{Algorithm.} The entire instance $V, \cC, \{\Sigma_i\}_{v_i\in V}$ is global. The algorithm takes in input as the complete \krcsp{2}{3} instance on un-reduced variables $V_U, \cC_U$. Initially, $V_U =V$, $\cC_U=\cC$.

    At any point of time during the algorithm, we always assume that the clauses are such that the predicates are always restricted to the current $\Sigma$. I.e., for each clause $C$, the predicate only maps $P_C:\Sigma_i\times \Sigma_j \mapsto \{\sat, \unsat\}$, while simply ignoring (deleting) all other mappings defined originally $P_C:\Sigma^2\mapsto \{\sat, \unsat\}$.

    In Step \ref{alg31pac:lessvars}, the algorithm first checks if the number of un-reduced variables $V_U$ is less than $O(\log n)$. If it is, then it guesses the assignment to the variables $V_U$ in polynomial time. Note that reducing the label set to just one value is the same as assigning them the one value. This ensures that all the variables $V$ have label sets of size at most $2$ and we get a \ncsp{2} instance. It checks the satisfiability of the instance and concludes accordingly.

    In Step \ref{alg31pac:reduce}, the algorithm then branches assuming that there is one good variable and then guesses the variable and its satisfying assignment by branching overall the possible possibilities of vertices $v_i$ and assignments $\alpha(v_i)$ for which $n_{v_i, \alpha(v_i)}$ is at least $\eps |V_U|$. Then, we perform the following Reduce procedure: 
    \paragraph{Reduce$(V_U, \cC_U, v_i, \alpha(v_i))$:}
    Given $v_i$ and its assignment $\alpha(v_i)$, consider all the tuples of reduced variables and their corresponding labels $N_{v_i, \alpha(v_i)}$. For every $(v_j, \alpha_j)\in N_{v_i, \alpha(v_i)}$, remove the $\alpha_j$ from the label set $\Sigma_j$, remove the vertex $v_j$ from $V_U$. Note that since all the unsatisfying labels are removed, all the clauses $(v_i, v_j)$ are trivially satisfied for any assignment to $v_j$'s. The procedure returns the remaining instance, i.e., the set of un-reduced vertices $V_U$, and clauses $C_U$ between the $V_U$. 
    
    
    Otherwise, in Step \ref{alg31pac:nogood}, it branches assuming that there is no good variable in $V_U$. For all the variables in $v_i\in V_U$, it then reduces them by removing the label $\alpha' = \mathrm{argmax}_{\alpha\in \Sigma_i}n_{v_i,\alpha}$ from $\Sigma_i$. Once all the variables are reduced, we get a \ncsp{2} instance on $V$. In Step \ref{alg31pac:check}, it checks whether this \ncsp{2} instance is satisfiable or not. 

    \begin{algorithm}[!htb]
        \caption{ALG-\krcsp{2}{3}-decision$_{V,\cC, \{\Sigma_i\}_{v_i\in V}}(V_U, \cC_U)$}
        \label{alg:pac31}
        \begin{algorithmic}[1]
            \IIf{for any $v_i\in V, \Sigma_i = \emptyset$}\ \Return $0$. \EndIIf \Comment{No possible labels for $v_i$.}
            \If{$|V_U|\leq 100 \log n$}\label{alg31pac:lessvars}
                    \ForAll{ $\alpha(V_U)\in\{\Sigma_i\}_{v_i\in V_U}$}
                    \State \aay{Make copy $\Sigma'_i\gets \Sigma_i$ for all $v_i\in V$.} 
                    \State For all $v_i\in V_U$, set $\Sigma'_i\gets \{\alpha(v_i)\}$.
                    
                    \State \Return $1$ iff the entire \ncsp{2} instance $(V, \cC, \{\Sigma'_i\}_{v_i\in V})$ is satisfiable.
                    
                    \EndFor
                \State \Return $0$.
            \EndIf
            \For{$v_i\in V_U$}\label{alg31pac:goodpair} \Comment{Guess a good variable}
                \For{all $\alpha(v_i)\in \Sigma_i$} \Comment{Guess the satisfying assignment $\alpha(v_i)$.}
                    \If{$n_{v_i,\alpha(v_i)}\geq \eps |V_U|$}
                        \State Make $\Sigma'_i\gets \Sigma_i$ for all $v_i\in V$ a copy of labels to reset later wards. 
                        \State $\Sigma_i\gets \{\alpha(v_i)\}$.
                        \State $V'_U, C'_U\gets$ Reduce$(V_U, \cC_U, v_i, \alpha(v_i))$.\label{alg31pac:reduce}
                        \State Return $1$ if ALG-\krcsp{2}{3}-decision$(V'_U,\cC'_U)$
                        \State Reset $\Sigma_i\gets \Sigma'_i$ for all $v_i\in V$ to original value before recursion. 
                    \EndIf
                \EndFor
            \EndFor
            \For{$v_i\in V_U$}\label{alg31pac:nogood}\Comment{Branch assuming no good variable}
                \State $\alpha' \gets \arg \max_{\alpha\in\Sigma_i}n_{v_i,\alpha}$. \Comment{Find max unsat constraint}
                \State Reduce $\Sigma_i \gets \Sigma_i \setminus\{\alpha'\}$.\Comment{Reduce variable}
            \EndFor
            \State \Return $1$ iff \ncsp{2} instance $(V,\cC, \{\Sigma_i\}_{v_i\in V})$ is satisfiable.\label{alg31pac:check}
        \end{algorithmic}
    \end{algorithm}

        \paragraph{Correctness.} Now, we proceed to prove the correctness of the algorithm. 
        Consider any satisfying assignment $\alpha$. If the number of un-reduced variables $V_U$ is $O(\log n)$, then in Step \ref{alg31pac:lessvars} we guess the assignment to these correctly w.r.t $\alpha$. Every variable has at most two labels, thus the instance is satisfiable if and only if the resulting \ncsp{2} instance is.

        If there are any good vertices in $V_U$, then in Step \ref{alg31pac:goodpair}, we correctly guess some good vertex's assignment, say $v_i$, $\alpha(v_i)$, reduce the corresponding vertices in $N_{v_i, \alpha(v_i)}$, and then recurse. As we prove below in \Cref{lem:23cspcomplete3red}, this creates an instance that is satisfiable if and only if the $\alpha$ is a satisfying assignment.

        \begin{lemma}\label{lem:23cspcomplete3red}
            Given a complete instance $V_U,\cC_U$, if we correctly fix variable $v_i\in V_U$ to $\alpha(v_i)$ and reduce the other variables, say $V_R$, defined by $N_{v_i, \alpha(v_i)}$, then the instance on remaining un-fixed on vertices $V'=V_U\setminus \{v_i\}$ is satisfiable if and only if the original instance on vertices $V_U$ is satisfied with $\alpha$. Moreover, the instance on the un-reduced vertices $V'_U=V\setminus (\{v_i\}\cup V_R)$ is a complete instance.
        \end{lemma}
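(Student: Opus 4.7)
The plan is to separately establish the two claims of the lemma: (i) the satisfiability equivalence on $V' = V_U\setminus\{v_i\}$, and (ii) the completeness of the sub-instance on the un-reduced vertices $V'_U = V_U \setminus (\{v_i\}\cup V_R)$. Throughout, the key observation to keep in mind is that Reduce removes from $\Sigma_j$ exactly the labels $\alpha_j$ appearing in pairs $(v_j,\alpha_j)\in N_{v_i,\alpha(v_i)}$, and moves such $v_j$ from $V_U$ into $V_R$; all other label sets and variable memberships are untouched.

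For the forward direction of (i), I would assume $\alpha$ satisfies the original instance on $V_U$ and verify that $\alpha|_{V'}$ witnesses satisfiability of the reduced instance. For any $v_j \in V_R$, every label $\alpha_j$ removed from $\Sigma_j$ satisfies $P_{(v_i,v_j)}(\alpha(v_i),\alpha_j) = \unsat$ by the definition of $N_{v_i,\alpha(v_i)}$; since $\alpha$ satisfies the clause $(v_i,v_j)$, the value $\alpha(v_j)$ cannot have been removed, so it lies in the reduced $\Sigma_j$. For $v_j \notin V_R$ the label set is unchanged, hence $\alpha(v_j)$ is trivially feasible. All clauses internal to $V'$ keep their original predicates, so they remain satisfied by $\alpha|_{V'}$.

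For the backward direction of (i), I would let $\beta$ be any satisfying assignment of the reduced instance on $V'$ and extend it by $\beta(v_i) := \alpha(v_i)$. Clauses internal to $V'$ are unchanged and hence still satisfied. For a clause $(v_i, v_j)$ with $v_j\in V'$, I split into two cases: if $v_j \in V_R$, then Reduce removed every label of $\Sigma_j$ producing $\unsat$ with $\alpha(v_i)$, so $\beta(v_j)$ necessarily satisfies the clause; if $v_j \notin V_R$, then by the contrapositive of the definition of $N_{v_i,\alpha(v_i)}$, no label in $\Sigma_j$ produces $\unsat$ with $\alpha(v_i)$, so any value of $\beta(v_j)$ works.

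For (ii), consider any pair $v_j, v_k \in V'_U$. Since neither lies in $V_R$, their label sets are unaltered by Reduce. The completeness of the parent instance $(V_U,\cC_U)$ guarantees that $P_{(v_j,v_k)}$ has at least one unsatisfying assignment in $\Sigma_j\times\Sigma_k$, which survives the reduction; hence every pair in $V'_U$ still carries a non-trivial constraint and the sub-instance is complete. The main obstacle is the case split in the backward direction of (i), since it rests on reading $N_{v_i,\alpha(v_i)}$ as the complete record of incompatible labels, so that a vertex's absence from $V_R$ certifies full compatibility of its entire $\Sigma_j$ with $\alpha(v_i)$; once this is made explicit, the rest is routine.
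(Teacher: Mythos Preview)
Your proof is correct and follows essentially the same approach as the paper's: both arguments hinge on the observation that after Reduce, every clause incident to $v_i$ is automatically satisfied (since all labels incompatible with $\alpha(v_i)$ have been stripped), so satisfiability on $V_U$ with $v_i\mapsto\alpha(v_i)$ is equivalent to satisfiability of the restricted instance on $V'$, and completeness on $V'_U$ holds because no labels or constraints among those vertices are touched. You simply unpack the two directions of the equivalence and the case split for $v_j\in V_R$ versus $v_j\notin V_R$ more explicitly than the paper does.
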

        \begin{proof}
            From the definition of $N_{v_i, \alpha(v_i)}$, we know that after assigning $\alpha(v_i)$ to $v_i$, the only clauses that can be unsatisfied correspond to variables (and their unsatisfying labels) from $N_{v_i, \alpha(v_i)}$. Since we ensure in the Reduce procedure that we remove all the unsatisfying labels for these clauses containing $v_i$, we can safely assume that all the corresponding clauses associated with $v_i$ are always satisfied (as there are no unsatisfying pairs now). Thus, the instance on variables $V_U$ is satisfiable if and only if the instance on the remaining variables $V'$ is. No constraints between any two variables $v_j, v_k\in V'_U$ are removed, so the instance is complete.
        \end{proof}

        Otherwise, if there are no good vertices, then as argued earlier we can correctly remove the label $\alpha'=\arg\max_{\alpha\in\Sigma_i}n_{v_i, \alpha}$ for all $v_i\in V_U$. Again, the correctness follows from the following claim:

        \begin{lemma}\label{lem:23cspifandonlyif}
            Given there are no good variables w.r.t $\alpha$, then $\alpha$ is a satisfying assignment to \textup{\krcsp{2}{3}} instance $(V_U,\cC, \{\Sigma_i\}_{v_i\in V})$ if and only if the \textup{\ncsp{2}} instance $(V_U,\cC, \{\Sigma'_i\}_{v_i\in V}))$ is satisfiable, where $\Sigma'_i = \Sigma_i\setminus \{\argmax_{\alpha\in\Sigma_i}n_{v_i, \alpha}\}$.
        \end{lemma}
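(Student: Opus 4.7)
The plan is to show that under the no-good-variable hypothesis, the satisfying assignment $\alpha$ itself never assigns any variable $v_i$ the label $\alpha'_i := \argmax_{a \in \Sigma_i} n_{v_i, a}$ that is removed by the reduction; in other words, $\alpha(v_i) \in \Sigma'_i$ for every $v_i \in V_U$. Once this is established, both directions of the lemma follow immediately: since the \ncsp{2} instance has the same predicates as the original \krcsp{2}{3} instance (only the per-variable label sets shrink), the same $\alpha$ satisfies the reduced instance, witnessing its satisfiability; conversely, any satisfying assignment for the $\Sigma'$-restricted instance, a fortiori, satisfies the original instance on the strictly larger alphabets $\Sigma_i \supseteq \Sigma'_i$ with identical constraints.

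The key technical step is the label-avoidance claim. Since $(V_U, \cC)$ is a complete instance, $v_i$ appears in $|V_U| - 1$ clauses, and every such clause contributes at least one unsatisfying pair to the quantities $\{n_{v_i, a}\}_{a \in \Sigma_i}$ by the non-triviality hypothesis of complete CSP instances. Summing over the alphabet of $v_i$ gives
\[
\sum_{a \in \Sigma_i} n_{v_i, a} \;\geq\; |V_U| - 1.
\]
Since $|\Sigma_i| \leq 3$, the pigeonhole principle implies $n_{v_i, \alpha'_i} = \max_{a \in \Sigma_i} n_{v_i, a} \geq (|V_U| - 1)/3$. On the other hand, the no-good-variable hypothesis asserts $n_{v_i, \alpha(v_i)} < \eps |V_U|$ for every $v_i \in V_U$. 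Choosing $\eps < 1/100$ and recalling that $|V_U| > 100 \log n$ (otherwise \Cref{alg31pac:lessvars} of \Cref{alg:pac31} handles the case directly) yields $n_{v_i, \alpha(v_i)} < (|V_U|-1)/3 \leq n_{v_i, \alpha'_i}$, so $\alpha(v_i) \neq \alpha'_i$ and hence $\alpha(v_i) \in \Sigma'_i$ as required.

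The only subtle point worth double-checking is the lower bound $\sum_a n_{v_i, a} \geq |V_U| - 1$: for every clause $C = (v_i, v_j)$, the definition of a complete \krcsp{2}{3} instance guarantees the existence of at least one $(a, b) \in \Sigma_i \times \Sigma_j$ with $P_C(a, b) = \unsat$, and this unsatisfying pair contributes at least $1$ to $n_{v_i, a}$ (hence at least $1$ to the sum); running this across all $|V_U|-1$ clauses incident to $v_i$ delivers the bound. Apart from this bookkeeping, the argument is a straightforward averaging followed by substitution; I do not anticipate any further obstacles.
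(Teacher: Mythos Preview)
Your proposal is correct and follows essentially the same approach as the paper: both argue that the reverse direction is trivial (any assignment respecting the smaller alphabets $\Sigma'_i$ a fortiori satisfies the original instance), and for the forward direction both use the averaging bound $\sum_{a\in\Sigma_i} n_{v_i,a}\ge |V_U|-1$ together with $|\Sigma_i|\le 3$ to conclude $\max_a n_{v_i,a}\ge (|V_U|-1)/3 > \eps|V_U| > n_{v_i,\alpha(v_i)}$, whence $\alpha(v_i)\neq \alpha'_i$. Your write-up is slightly more careful in spelling out why each clause contributes at least one to the sum, but the underlying argument is identical.
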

        \begin{proof}
            Clearly, one way is trivially true. I.e., every satisfying assignment of the \ncsp{2} instance is a satisfying assignment of the \krcsp{2}{3} instance (the former just restricts the labels to a subset of labels of the later, with the same clauses). We now prove that the other way is true as well.

            Given there are no good variables w.r.t $\alpha$, then the \krcsp{2}{3} instance $(V_U,\cC, \{\Sigma_i\}_{v_i\in V})$, we have that for all $v_i\in V_U$, the value $n_{v_i, \alpha(v_i)} < \eps|V_U|$, where $\alpha$ is a satisfying. Since, this is a complete instance $\sum_{\alpha\in \Sigma_i} n_{v_i,\alpha} \geq |V_U|-1$ for all $v_i\in V_U$, and $\max_{\alpha\in \Sigma_i} n_{v_i,\alpha} \geq \frac{n}{3}$. Therefore, we have that $\alpha(v_i)\neq  \arg \max_s{\alpha\in \Sigma_i} n_{v_i,\alpha}$ for all $v_i$. Thus, $\alpha$ satisfies the \ncsp{2} instance.
        \end{proof}

    \paragraph{Analysis.} Every step of the algorithm takes time $n^{O(1)}$. We just need to bound the recursion depth. Since, the before each recursive call, Step \ref{alg31pac:goodpair} reduces at least $\eps |V_U|$ un-reduced variables, the depth $d$ of the branching tree would be at most $\frac{\log n}{\eps}$ before the remaining number of variables $|V_U|\leq (1-\eps)^d n$ is less than $100 \log n$ (in which case Step \ref{alg31pac:lessvars} guesses the entire assignment to $V_U$ correctly according to the satisfying assignment $\alpha$). Thus, the runtime of the algorithm is $n^{O(\log n/\eps)}$. Setting $\eps=1/100$, we get:
    \twothreecsp*

\section{\texorpdfstring{\textsc{PAC}}{error}}\label{sec:pac}
    In the problem of $(r,\ell)$-\textsc{Permutation-Avoiding-Coloring} (\pac{r,\ell)}, we are given a graph on $n$ vertices $V$, edges $\cC\subseteq \binom{V}{2}$, colors (labels) for each vertex defined as $\Sigma=\{0,1,\ldots, r-1\}$ and $1\leq \ell\leq r$. Each edge $(v_i, v_j) \in \cC$ comes along with a matching $\pi_{ij}\subseteq \Sigma\times \Sigma$ of size $\ell$ (and clearly at most $r$), i.e., if $(\alpha_i, \alpha_j)\in \pi_{ij}$, then $(\alpha'_i, \alpha_j)\notin \pi_{ij}$ for any $\alpha'_i\neq \alpha_i \in \Sigma, \alpha_j\in \Sigma$. The goal is to color the graph with coloring $\alpha:V\mapsto \Sigma$ so that for no edge $(v_i, v_j)$ are such that $(\alpha(v_i), \alpha(v_j))\in \pi_{ij}$.  If there is a constraint for every pair of vertices, i.e., if $\cC=\binom{V}{2}$, then we call the instance \emph{complete}. Moreover, if for a complete instance, we relax the condition that each matching is of size exactly $\ell$ to at least $\ell$, then we call the instance \emph{over-complete}.

    The problem $\pac{r,r}$, also simply called $r$-\textsc{PAC}, is a generalization of $r$-\textsc{Coloring} where each edge can rule out any permutation between colorings of the two vertices (instead of just the identity permutation). 

    Note that the problem of complete \pac{r,\ell'} reduces to over-complete \pac{r, \ell}, for all $\ell'\geq \ell$. Moreover, it is easy to see that \pac{r, \ell} is a special case of \krcsp{2}{r}, where for every clause $C=(v_i, v_j)\in\cC$, we have the set of all unsatisfying assignments $\pi_{ij}=\{ (\alpha_i, \alpha_j)\in\Sigma^2\ |\ P_C(\alpha_i, \alpha_j)=\unsat\}$ 
    such that they form a matching in $\Sigma\times \Sigma$. Additionally, $|\pi_{ij}| = \ell$. Similarly, a complete (or over-complete) instance of \pac{r, \ell} reduces to a complete instance of \krcsp{2}{r} as there are are exactly $\ell \geq 1$ (or at least $\ell \geq 1$ for over-complete) $\unsat$ constraints for each pair $v_i, v_j\in V$. 
    This is captured by the following observation:

    \begin{observation}\label{obs:pactocsp}
        For any $1\leq \ell\leq r, r\geq 2$, over-complete \textup{\pac{r,\ell}} reduces to complete \textup{\krcsp{2}{r}}. And, for every $\ell'\geq \ell$, complete \textup{\pac{r,\ell'}} reduces to over-complete \textup{\pac{r,\ell}}.
    \end{observation}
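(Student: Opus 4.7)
The plan is to verify both reductions by observing that they are essentially identity maps on the instances (the same variable set, and the same set of satisfying colorings), with simple sanity checks that the resulting instance satisfies the structural conditions required by the target model. No nontrivial gadget or padding is needed.

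For the first reduction, I start from an over-complete \pac{r,\ell} instance $(V, \binom{V}{2}, \{\pi_{ij}\})$ and build a \krcsp{2}{r} instance on the same variables. For each pair $(v_i, v_j) \in \binom{V}{2}$, I introduce one clause $C=(v_i,v_j)$ with predicate $P_C : \Sigma \times \Sigma \to \{\sat,\unsat\}$ defined by $P_C(\alpha_i,\alpha_j) = \unsat$ iff $(\alpha_i,\alpha_j) \in \pi_{ij}$. The two things to check are: (i) that this is a \emph{complete} \krcsp{2}{r} instance in the sense required by the paper, i.e., every predicate has at least one $\unsat$ input; and (ii) that the satisfying assignments agree. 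For (i), since $|\pi_{ij}| \geq \ell \geq 1$, each $P_C$ has at least one $\unsat$ input. For (ii), a coloring $\alpha$ is valid for the \pac instance iff $(\alpha(v_i), \alpha(v_j)) \notin \pi_{ij}$ for every pair, which is exactly the condition $P_C(\alpha(v_i), \alpha(v_j)) = \sat$ for every clause $C$ of the constructed \krcsp{2}{r} instance.

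For the second reduction, given a complete \pac{r,\ell'} instance (so every matching $\pi_{ij}$ has size exactly $\ell'$), I apply the identity map: the same instance is an over-complete \pac{r,\ell} instance because the defining requirement is only $|\pi_{ij}| \geq \ell$, and we have $|\pi_{ij}| = \ell' \geq \ell$. The set of valid colorings is unchanged.

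There is no genuine obstacle here; the observation follows from unwinding the definitions. The only care needed is to match the quantifiers on the matching/predicate sizes correctly, namely ``exactly $\ell$'' for complete \pac, ``at least $\ell$'' for over-complete \pac, and ``at least one $\unsat$ input'' for complete \krcsp{2}{r}, and to check that these line up as claimed in both directions.
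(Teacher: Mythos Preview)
Your proposal is correct and matches the paper's own reasoning, which it gives in the text immediately preceding the observation rather than as a separate proof. Both you and the paper treat the reductions as identity maps and simply verify the structural conditions: that $|\pi_{ij}|\ge \ell\ge 1$ guarantees at least one $\unsat$ input per pair (for the first reduction), and that size exactly $\ell'$ trivially implies size at least $\ell$ when $\ell'\ge\ell$ (for the second).
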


    Using \Cref{obs:pactocsp} and \Cref{thm:23csp}, we get that the problem of over-complete (and therefore complete as well) $\pac{3,\ell}$, for any valid $\ell$, has a $n^{O(\log n)}$ decision algorithm.

    Now, we use similar ideas of reducing the label set size to obtain quasi-polynomial time algorithms to decide whether an over-complete instance of \pac{4,3} (and therefore, complete \pac{4,4}), and complete instance of \pac{5,5} is satisfiable or not.

    \subsection{\texorpdfstring{\pac{4,3}}{error}}

        \paragraph{Overview.} Similar to \krcsp{2}{3}, we define individual label sets to each vertex $v_i\in V$ as $\Sigma_i$. Initially, all the $\Sigma_i =\Sigma$. The idea is to fix some good vertices and reduce vertices, similar to that of  \krcsp{2}{3}. For any vertex $v_i$ and $\alpha(v_i)\in\Sigma$, define $n_{v_i,\alpha(v_i)}$ as follows:
    \begin{definition}[Reduced vertices]\label{def:violate43pac}
         For any vertex $v_i$ and $\alpha(v_i)\in\Sigma$, define $n_{v_i,\alpha(v_i)}$ as the number of vertices $v_j\in V\setminus \{v_i\}$ such that $(\alpha(v_i), \alpha(v_j))\in \pi_{ij}$ for any values of $\alpha(v_j)\in \Sigma$. Let $N_{v_i, \alpha(v_i)}$ denote the set of all such tuples $(v_j, \alpha(v_j))$.
    \end{definition}
        In other words, if we set $v_i$ to $\alpha(v_i)$, we would rule out the possibility of labeling $v_j$ with the label(s) $\alpha_j$ for every $(\alpha_i,\alpha_j)\in\pi_{ij}$. Let us call $v_i$ a \emph{good vertex} if $n_{v_i,\alpha^*}$ is at least $\eps n$ for some $0<\eps<1/100$, where $\alpha^*$ is some satisfying assignment. The idea is to guess the good vertex optimally (w.r.t this satisfying assignment $\alpha^*$) and \emph{reduce} the label set size for each of the $\eps n$ vertices to $3$. 
        Since, every time a good vertex reduces at least $\eps n$ vertices, we would recurse at to depth at most $O(\log n/\eps)$ before finding all the satisfying assignments. If there is no such good vertex, then we know that $n_{v_i,\alpha^*}<\eps n$. Because of the over-complete instance, we know that the sum of $n_{v_i,\alpha}$ over all assignments $\alpha$ is at least $n-1$. Thus, by just observing the value $\alpha'$ that maximizes $n_{v_i, \alpha}$, we can rule out one label $\alpha'$ for all the vertices that are not good. This ensures that the label set size is reduced by one for each vertex. Moreover, we will have at least $1$ unsatisfying assignment between each pair (as deleting one label each from a vertex in a pair, deletes at most $2$ unsatisfying assignments). This reduces the problem to deciding the satisfiability of over-complete \pac{3,1}, which takes $n^{O(\log n)}$ time. 

        However, while reducing the vertices it might happen that we have to reduce the labels of some vertices to less than three as well. This could give us an over-complete \pac{3,1} instance on the vertices with exactly $3$ labels along with extra general \ncsp{2} constraints to satisfy the vertices with at most $2$ labels. We can always assume that there are no vertices with $0$ labels, otherwise the instance is clearly unsatisfiable. This is not a problem as the \Cref{alg:pac31} works as is on such instances, and reduces the entire thing to a general \ncsp{2} instance. 

        \paragraph{Algorithm.} The \Cref{alg:pac43} takes in input as the instance $V, \cC, \{\Sigma_i\}_{v_i\in V}$. Initially, $\Sigma_i = \Sigma$ for all $i\in V$. At any point in time during the algorithm, we always assume that every clause $(v_i, v_j)\in \cC$ is such that the $\pi_{ij}$ is always restricted to the current $\Sigma$. I.e., for each clause $(v_i, v_j)\in \cC$, the matching is only on $\pi_{ij}\subseteq \Sigma_i\times \Sigma_j$, while simply ignoring (deleting) all other mappings defined originally $\pi_{ij}\subseteq \Sigma^2$.

        Similar to the \Cref{alg:pac31}, we use a Reduce subroutine that given a variable $v_i$ and its guessed optimal assignment $\alpha(v_i)$, reduces the labels of all the vertices $v_j$ with any unsatisfying assignments $(\alpha(v_i), \alpha_j)$ to clauses $(v_i, v_j)$ by deleting the corresponding label $\alpha_j$ from $\Sigma_j$. Intuitively, we do this to ensure that there are no constraints to be taken care of once a variable a fixed to some specific value so that we can forget about the fixed variable and iterate on the remaining instance.  Formally, we define the procedure as follows:
        
            \paragraph{Reduce$(V, \cC, \{\Sigma_i\}_{v_i\in V}, v_i, \alpha(v_i))$:} Given $v_i$ and its optimally guessed assignment $\alpha(v_i)$, consider all the tuples of reduced variables and their corresponding labels $N_{v_i, \alpha(v_i)}$. For every $(v_j, \alpha_j)\in N_{v_i, \alpha(v_i)}$, remove the $\alpha_j$ from the label set $\Sigma_j$.

        The algorithm first computes the set of un-reduced vertices $V_U$ with exactly $4$ possible labels, i.e., $\Sigma_i=\Sigma$ for all $v_i\in V_U$. In Step \ref{alg31pac:lessvars}, the algorithm checks if the number of un-reduced vertices $V_U$ is less than $O(\log n)$. If it is, then it guesses the assignment to the vertices $V_U$ in polynomial time. Note that reducing the label set to just one value is the same as assigning them the one value. 
        Then, we reduce the instance based on the assignment of the variables in $V_U$ using the Reduce procedure.
        This ensures that all the vertices $V$ have label sets of size at most $3$ and we get an over complete \pac{3,1} instance on $V_3$ with additional \ncsp{2} constraints given by $V_2\cup V_1$. It checks the satisfiability of the instance and concludes accordingly. We formally define this subroutine as follows:

        \paragraph{Check-\krcsp{2}{3}$(V, \cC, \{\Sigma_i\}_{v_i\in V})$:} The procedure takes in the input set of the vertices with all the label sets of size at most $3$. Then, it first checks if there are any vertices with an empty label set. If so, it returns that the instance is unsatisfiable. If not, it computes the partitions $V_3, V_2$ of the vertex set $V$, containing vertices of label set size three and at most two respectively. Then, it checks the satisfiability of the over-complete \pac{3,1} on $V_3$ along with the \ncsp{2} constraints from $V_2$ using the \Cref{alg:pac31}. Finally, it returns $1$ if the instance is satisfiable and $0$ otherwise.

        In Step \ref{alg43pac:goodpair}, the algorithm then branches assuming that there is at least one good vertex among the un-reduced vertices in $V_U$ and then guesses the vertex and its satisfying assignment by branching over all the possibilities of vertices $v_i$ and assignments $\alpha(v_i)$ for which $n_{v_i, \alpha(v_i)}$ is at least $\eps |V_U|$. Then, it reduces the label sets of all the vertices in tuples from $N_{v_i, \alpha(v_i)}$ using the Reduce procedure. Then, it recurses with at least $\eps|V_U|$ more vertices that are reduced. Note, that the the number of un-reduced vertices decreases by $\eps |V_U|$ in each recursive call, so we recurse at most $\log n/\eps$ times. 
    
        Otherwise, in Step \ref{alg43pac:nogood}, it branches assuming that there is no good vertex in $V_U$. For all the vertices in $v_i\in V_U$, it then removes the label $\alpha' = \max_{\alpha\in \Sigma_i}n_{v_i,\alpha}$ from $\Sigma_i$. Once all the vertices have label set size at most $3$, as before, we call the procedure Check-\krcsp{2}{3}, to check the satisfiability of the over-compelete \pac{3,1} instance along with general \ncsp{2} constraints.

        \begin{algorithm}[!htb]
            \caption{ALG-\pac{4,3}-decision$(V,\cC,\{\Sigma_i\}_{v_i\in V})$}
            \label{alg:pac43}
            \begin{algorithmic}[1]
                \State Compute $V_U \gets \{v\in v\ |\ |\Sigma_v|=4 \}$.
                \If{$|V_U|\leq 100 \log n$}\label{alg43pac:lessvars}
                    \ForAll{$v_i\in V_U$}\Comment{Iteratively guess and reduce for all vertices in $V_U$.}
                    \State Guess the optimal value $\alpha(v_i)$ of $v_i$. $\Sigma_i\gets \{\alpha(v_i)\}$.
                    \State Reduce$(V, \cC,\{\Sigma_i\}_{v_i\in V},v_i, \alpha(v_i))$.
                    \EndFor
                    \State \Return $1$ iff Check-\krcsp{2}{3}$(V\setminus V_U, \cC, \{\Sigma_i\}_{v_i\in V})$.
                \EndIf
                \For{$v_i\in V_U$}\label{alg43pac:goodpair} \Comment{Guess a good vertex}
                    \For{all $\alpha(v_i)\in \Sigma_i$} \Comment{Guess the satisfying assignment $\alpha(v_i)$.}
                        \If{$n_{v_i,\alpha(v_i)}\geq \eps |V_U|$}
                            \State Create $\Sigma'$ copy of $\Sigma$.
                            \State $\Sigma'_i\gets \{\alpha(v_i)\}$.
                            \State Reduce$(V, \cC,\{\Sigma'_i\}_{v_i\in V},v_i, \alpha(v_i))$.
                            \State \Return $1$ if ALG-\pac{4,3}-decision$(V\setminus \{v_i\},\cC, \{\Sigma'_i\}_{v_i\in V})$
                        \EndIf
                    \EndFor
                \EndFor
                \For{$v_i\in V_U$}\label{alg43pac:nogood}\Comment{Branch assuming no good vertex}
                    \State $\alpha' \gets \arg \max_{\alpha\in\Sigma_i}n_{v_i,\alpha}$. \Comment{Find max unsat assignment}
                    \State Set $\Sigma_i \gets \Sigma_i \setminus\{\alpha'\}$.\Comment{Reduce vertex}
                \EndFor
                \State \Return $1$ iff Check-\krcsp{2}{3}$(V, \cC, \{\Sigma_i\}_{v_i\in V})$.\label{alg43pac:check}
            \end{algorithmic}
        \end{algorithm}

        \paragraph{Correctness. }

        We first prove two helper lemmas used to establish the correctness of the algorithm. In the first lemma, we prove that if we fix a variable $v_i$ optimally to $\alpha(v_i)\in \Sigma_i$, i.e., reduce the label set $\Sigma_i=\{\alpha(v_i)\}$, then we can forget about the vertex. I.e., the instance on the remaining vertices without $v_i$ is satisfiable if and only if the original instance including $v_i$ is.
        \begin{lemma}\label{lem:43pacthree}
            Given instance $V,\cC$, and any satisfying assignment $\alpha$, if we fix variable $v_i\in V_U$ correctly according to $\alpha(v_i)$ and reduce the other variables corresponding to $N_{v_i, \alpha(v_i)}$, then the instance on the remaining unfixed vertices $V'=V\setminus \{v_i\}$ is satisfiable if and only if the $\alpha$ is a satisfying assignment to the original instance on vertices $V$.
        \end{lemma}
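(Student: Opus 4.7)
The plan is to prove both implications directly by tracking what the Reduce procedure does, using \Cref{def:violate43pac} as the key bridge between the label-set modification and the matching $\pi_{ij}$ that defines satisfaction.

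For the forward direction, suppose $\alpha$ is a satisfying assignment of the original instance on $V$. I would show that the restriction $\alpha|_{V'}$ satisfies the reduced instance on $V'$. Consider any $v_j \in V'$. Since $\alpha$ satisfies the edge $(v_i, v_j)$, we have $(\alpha(v_i), \alpha(v_j)) \notin \pi_{ij}$, so by \Cref{def:violate43pac} the pair $(v_j, \alpha(v_j))$ does not belong to $N_{v_i, \alpha(v_i)}$. Therefore the call Reduce$(V, \cC, \{\Sigma_i\}_{v_i\in V}, v_i, \alpha(v_i))$ does \emph{not} remove the label $\alpha(v_j)$ from $\Sigma_j$, which means $\alpha|_{V'}$ is a valid labeling with respect to the reduced label sets. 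Every edge $(v_j, v_k)$ with $v_j, v_k \in V'$ has the same matching $\pi_{jk}$ as before (Reduce only touches label sets, not matchings between pairs that do not involve $v_i$), and $\alpha$ already satisfies it, so $\alpha|_{V'}$ satisfies the reduced instance on $V'$.

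For the backward direction, take any satisfying assignment $\alpha'$ of the reduced instance on $V'$ and extend it to $V$ by setting $\alpha'(v_i) := \alpha(v_i)$. Edges $(v_j, v_k)$ with both endpoints in $V'$ are satisfied by $\alpha'$ by hypothesis. For an edge $(v_i, v_j)$ with $v_j \in V'$, the fact that $\alpha'(v_j) \in \Sigma_j$ after Reduce means that $(v_j, \alpha'(v_j))$ was not removed, hence $(v_j, \alpha'(v_j)) \notin N_{v_i, \alpha(v_i)}$, which by \Cref{def:violate43pac} says precisely that $(\alpha(v_i), \alpha'(v_j)) \notin \pi_{ij}$. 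Thus every edge incident to $v_i$ is satisfied, so the extended $\alpha'$ satisfies the original instance on $V$.

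The main thing to be careful about is the scope of Reduce: it \emph{only} deletes labels from $\Sigma_j$ for those $v_j$ with $(v_j, \alpha(v_j)) \in N_{v_i, \alpha(v_i)}$, and never alters any matching $\pi_{jk}$ for $j, k \neq i$. Once this is made explicit, both directions are essentially a one-line unpacking of \Cref{def:violate43pac}, so I do not anticipate any real obstacle; this lemma is purely the bookkeeping statement that projecting away a correctly-fixed variable preserves satisfiability in both directions.
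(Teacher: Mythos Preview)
Your proposal is correct and follows essentially the same approach as the paper's proof: both use \Cref{def:violate43pac} to argue that after Reduce removes every label $\alpha_j$ with $(\alpha(v_i),\alpha_j)\in\pi_{ij}$, all clauses incident to $v_i$ are trivially satisfied, so satisfiability on $V$ and on $V'$ coincide. Your write-up is simply more explicit than the paper's (which compresses both directions into a single sentence), but there is no substantive difference in strategy.
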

        \begin{proof}
            From the definition of $N_{v_i, \alpha(v_i)}$, we know that after assigning $\alpha(v_i)$ to $v_i$, the only clauses that can be unsatisfied correspond to variables (and their unsatisfying labels) from $N_{v_i, \alpha(v_i)}$. Since we ensure in the Reduce procedure that we remove all the unsatisfying labels for these clauses containing $v_i$, we can safely assume that all the corresponding clauses associated with $v_i$ are always satisfied (as there are no unsatisfying pairs now). Thus, the instance on variables $V$ is satisfiable if and only if the instance on the remaining variables $V'$ is.
        \end{proof}

        Next, in the second lemma, we prove that after any point in the algorithm, if we have label sets of all the vertices of size at most $3$, then the resulting instance is an over-complete instance on the vertices with exactly $3$ possible labels (with some additional \ncsp{2} constraints from the other vertices).
        \begin{lemma}\label{lem:43pacpac31}
            Given an instance $V,\cC$ such that all the vertices have label set size at most $3$, then the instance on $V_3=\{v_i\in V\ |\ |\Sigma_i|=3\}$ is an over-complete \textup{\pac{3,1}} instance.
        \end{lemma}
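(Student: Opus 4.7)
The plan is a short counting argument about matchings on a $4$-element alphabet. Fix any pair $v_i, v_j \in V_3$; by definition of $V_3$ we have $|\Sigma_i|=|\Sigma_j|=3$, so each of $v_i, v_j$ has had exactly one label removed from its label set (since the global alphabet has $|\Sigma|=4$). The input is a \pac{4,3} instance, so the original constraint $\pi_{ij}\subseteq \Sigma\times \Sigma$ was a matching of size $3$ (or at least $3$ in the over-complete case). Per the convention stated just before the lemma, the effective constraint between $v_i$ and $v_j$ at this point in the algorithm is the restriction $\pi_{ij}\cap(\Sigma_i\times \Sigma_j)$, and I need to show this restriction is a non-empty matching.

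The key observation is the defining property of a matching: each label of $\Sigma$ is the first coordinate of at most one pair in $\pi_{ij}$, and similarly each label is the second coordinate of at most one pair. Removing one label from $\Sigma_i$ can therefore delete at most one pair from $\pi_{ij}$ (the unique pair, if any, whose first coordinate is the removed label), and likewise for removing one label from $\Sigma_j$. Consequently the restricted matching $\pi_{ij}\cap(\Sigma_i\times \Sigma_j)$ has size at least $3-1-1 = 1$. Restricting a matching to a sub-product $\Sigma_i\times \Sigma_j$ yields a matching on that sub-product, so the surviving constraint is still a matching, now of size at least $1$ on $\Sigma_i\times \Sigma_j$ with $|\Sigma_i|=|\Sigma_j|=3$.

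Since the argument above works for every pair $v_i, v_j \in V_3$, the instance induced on $V_3$ is complete and every pair carries a matching of size at least one on the corresponding $3$-element label sets, which is exactly the definition of an over-complete \pac{3,1} instance. The only mild subtlety is to make sure that the only reductions that affect $\pi_{ij}$ are reductions to $\Sigma_i$ or $\Sigma_j$ (so that the $-1-1$ accounting above is tight); this follows immediately from the fact that $\pi_{ij}$ only mentions labels of $v_i$ and $v_j$. There is no real obstacle beyond this bookkeeping.
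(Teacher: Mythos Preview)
Your proposal is correct and follows essentially the same approach as the paper's proof: both argue that each vertex in $V_3$ has had exactly one label removed, so at most two pairs of the original (size $\geq 3$) matching $\pi_{ij}$ are deleted, leaving at least one. Your version is simply more explicit about why the matching property guarantees that removing one label from one side deletes at most one pair.
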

        \begin{proof}
            The proof is simple. Every vertex that has three labels has to be reduced exactly once. Thus, for every pair $v_i, v_j \in V_3$, we can delete at most two unsatisfying constraints from $\pi_{ij}$, leaving behind at least one constraint in $\pi_{ij}$ restricted to the new label sets.
        \end{proof}

        Now, we proceed to prove the correctness of the algorithm. Consider any satisfying assignment $\alpha$. If the number of un-reduced variables $V_U$ is $O(\log n)$, then in Step \ref{alg43pac:lessvars} we guess the assignment to these correctly w.r.t $\alpha$ and use the Reduce procedure. As we prove in \Cref{lem:43pacthree}, this creates an instance that is satisfiable if and only if the original instance is. Moreover, as we guess the assignment of all the vertices in $V_U$ correctly w.r.t. the satisfying assignment $\alpha$, we know that the remaining instance has label sets of all vertices of size at most three. As we prove in \Cref{lem:43pacpac31}, this instance is an over-complete \pac{3,1} instance with some additional vertices with \ncsp{2} constraints, so we can use the Check-\ncsp{2,3} procedure to check its satisfiability. 

        If there are any good vertices in $V_U$, then in Step \ref{alg43pac:goodpair}, we correctly guess some good vertex's assignment, say $v_i$, $\alpha(v_i)$, reduce the corresponding vertices in $N_{v_i, \alpha(v_i)}$, and then recurse. Again, as we prove in \Cref{lem:43pacthree}, this creates an instance that is satisfiable if and only if the $\alpha$ is a satisfiable assignment.

        Otherwise, if there are no good vertices, then as argued earlier we can correctly remove the label $\alpha'=\arg\max_{\alpha\in\Sigma_i}n_{v_i, \alpha}$ for all $v_i\in V_U$. Again, as we prove in \Cref{lem:43pacpac31}, this instance is an over-complete \pac{3,1} instance with some additional vertices with \ncsp{2} constraints, so we can use the Check-\ncsp{2,3} procedure to check its satisfiability. This concludes the correctness of the algorithm.

        \paragraph{Analysis.} Every recursive call of the algorithm takes time at most $n^{O(1)}$. We just need to bound the recursion depth. Since, the before each recursive call, Step \ref{alg43pac:goodpair} reduces at least $\eps |V_U|$ un-reduced variables, the depth $d$ of the branching tree would be at most $\frac{\log n}{\eps}$ before the remaining number of variables $|V_U|\leq (1-\eps)^d n$ is less than $100 \log n$ (in which case Step \ref{alg43pac:lessvars} guesses the entire assignment to $V_U$ correctly according to the satisfying assignment $\alpha$). From \Cref{thm:23csp}, we know that each call to the procedure Check-\krcsp{2}{3}, that checks the satisfiability of over-complete \pac{3,1} (with additional \ncsp{2} constraints) takes time $n^{O(\log n)}$. Thus, the runtime of the algorithm is $n^{O(\log n/\eps)}$. Setting $\eps=1/100$, we get:
    
        \begin{theorem}
            There is an $n^{O(\log n)}$ time algorithm to decide whether a over-complete instance of \textup{\pac{4,3}} is satisfiable or not.
        \end{theorem}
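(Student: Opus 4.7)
The plan is to follow the same template as Algorithms \ref{alg:pac31} and \ref{alg:pac43} for complete \krcsp{2}{3}, but now reducing \textit{two} label-set sizes at a time: first from $4$ to $3$ (the main algorithm), then from $3$ to $2$ (via Theorem \ref{thm:23csp}). Let $\alpha$ denote an arbitrary satisfying assignment I am trying to recover. I will maintain, for every variable $v_i$, a surviving label set $\Sigma_i \subseteq \Sigma$ and define $V_U := \{ v_i : |\Sigma_i| = 4 \}$ as the \textit{un-reduced} variables. For each $v_i$ and $a \in \Sigma_i$, let $n_{v_i,a}$ count the variables $v_j \in V \setminus \{v_i\}$ for which some label of $\Sigma_j$ is eliminated when $v_i$ is set to $a$ (using the matching $\pi_{ij}$), and call $v_i$ \emph{good with respect to $\alpha$} if $n_{v_i,\alpha(v_i)} \ge \eps|V_U|$ for $\eps = 1/100$.

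The algorithm branches into two exhaustive cases at every recursive call. In the good-variable case, I guess $v_i \in V_U$ and $\alpha(v_i) \in \Sigma_i$ (at most $4|V_U|$ possibilities), invoke the Reduce subroutine to delete every label from every $\Sigma_j$ that is forbidden by $\pi_{ij}$ once $v_i = \alpha(v_i)$, and recurse on the shrunk instance. Lemma \ref{lem:43pacthree} guarantees that after this reduction, the vertex $v_i$ can be forgotten entirely, since no surviving constraint involving $v_i$ can be violated. Because $n_{v_i,\alpha(v_i)} \ge \eps|V_U|$, the size of $V_U$ drops by a factor of $(1-\eps)$, so the recursion depth is at most $O(\log n / \eps)$.

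In the no-good-variable case, for every $v_i \in V_U$ I remove $\alpha' = \arg\max_a n_{v_i,a}$ from $\Sigma_i$. Over-completeness forces $\sum_{a \in \Sigma} n_{v_i,a} \ge |V_U| - 1$, so by averaging $n_{v_i,\alpha'} \ge (|V_U|-1)/4 \ge \eps|V_U|$ for large enough $|V_U|$, whence $\alpha' \ne \alpha(v_i)$ and this removal is safe. After this step every vertex in $V_U$ has exactly three surviving labels, and Lemma \ref{lem:43pacpac31} shows the sub-instance on these vertices is over-complete \pac{3,1}; the remaining vertices (with $|\Sigma_i|\le 2$) contribute only general \ncsp{2} constraints. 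I then hand the whole instance to a Check-\krcsp{2}{3} subroutine, which is just Algorithm \ref{alg:pac31} unchanged (its logic handles extraneous \ncsp{2} clauses transparently since it only requires that every pair carry \emph{some} constraint under its current label sets).

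The main obstacle is verifying that the hybrid instance passed to Check-\krcsp{2}{3} is genuinely covered by Theorem \ref{thm:23csp}: one must check that after any sequence of Reduce calls, each pair of three-labelled vertices still carries at least one \unsat tuple under the restricted label sets, so that the complete-\krcsp{2}{3} hypothesis holds. This follows because deleting one label from each of two vertices removes at most two entries of $\pi_{ij}$ from an initial matching of size three, leaving at least one forbidden pair intact. Once this is in place, the runtime is $(4n)^{O(\log n/\eps)}$ branchings times $n^{O(\log n)}$ per Check-\krcsp{2}{3} call, i.e.\ $n^{O(\log n)}$ overall with $\eps = 1/100$, proving the theorem.
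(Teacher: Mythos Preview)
Your proposal is correct and follows essentially the same approach as the paper: the same branching into the good-variable case (guess, Reduce, recurse with $|V_U|$ shrinking by a $(1-\eps)$ factor) versus the no-good-variable case (delete the $\arg\max$ label from each $v_i\in V_U$ and hand off to Check-\krcsp{2}{3}), the same invocation of Lemmas~\ref{lem:43pacthree} and~\ref{lem:43pacpac31}, and the same $n^{O(\log n)}$ accounting. The only cosmetic omission is the explicit base case $|V_U|\le O(\log n)$ handled by brute force in Step~\ref{alg43pac:lessvars}, but your recursion-depth bound implicitly subsumes it.
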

        \begin{corollary}
            There is an $n^{O(\log n)}$ time algorithm to decide satisfiability of complete-\textup{\pac{4,3}}, complete-\textup{\pac{4,4}}. 
        \end{corollary}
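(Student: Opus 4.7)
The plan is to derive this corollary as a direct consequence of Observation \ref{obs:pactocsp} together with the preceding theorem that handles over-complete \pac{4,3}. The key point is that both complete-\pac{4,3} and complete-\pac{4,4} are special cases of over-complete \pac{4,3}, for which we already possess an $n^{O(\log n)}$-time decision algorithm, so no new algorithmic work is required.

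First, I would handle complete-\pac{4,3}. Since the definition of over-complete merely relaxes the matching-size requirement from ``exactly $\ell$'' to ``at least $\ell$'', every complete \pac{4,3} instance is, without any modification, already an over-complete \pac{4,3} instance. Hence we can invoke the preceding theorem on the instance as-is.

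Second, for complete-\pac{4,4}, I would invoke the second clause of Observation \ref{obs:pactocsp} with $r=4$, $\ell'=4$, and $\ell=3$: because $4 \geq 3$, any complete \pac{4,4} instance, whose forbidden-matchings $\pi_{ij}$ each have size exactly $4$, automatically satisfies the at-least-$3$ condition, and so it is an over-complete \pac{4,3} instance. Running the over-complete \pac{4,3} algorithm on this reinterpreted instance yields the answer.

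Both reductions are syntactic identifications of size at most $O(n^2)$ (just relabeling the instance class), so the running time is dominated by the single call to the over-complete \pac{4,3} algorithm, giving the claimed $n^{O(\log n)}$ bound. There is essentially no obstacle: the nontrivial work was already carried out in the preceding theorem, and the corollary merely records the broader scope of its applicability via the reductions in Observation \ref{obs:pactocsp}.
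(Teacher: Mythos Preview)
Your proposal is correct and matches the paper's approach exactly: the paper states this corollary immediately after the over-complete \pac{4,3} theorem without any explicit proof, relying on precisely the reductions you identify via Observation~\ref{obs:pactocsp}. Both complete-\pac{4,3} and complete-\pac{4,4} are over-complete \pac{4,3} instances by definition, so the corollary follows directly.
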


    \subsection{\texorpdfstring{\pac{5,5}}{error}}

    We briefly describe the algorithm for the complete \pac{5,5} problem, where there are exactly $5$ unsatisfying constraints, i.e., the matching $\pi_{ij}$ is such that $|\pi_{ij}|=5$. The intuition is that once we guess the label for any arbitrary vertex correctly according to a satisfying assignment $\alpha$, we can rule out exactly one label for all other vertices, thus reducing to over-complete \pac{4,3}.

    \paragraph{Algorithm.} The Algorithm is fairly simple. Let $\alpha$ be any satisfying assignment. Consider any arbitrary vertex $v_i\in V$, and guess the value $\alpha(v_i)$ of $v_i$ by branching over all possible values of $\Sigma$. Remove the label $\alpha_j$ such that $(\alpha(v_i), \alpha_j)\in \pi_{ij}$ for every other $v_j \in V\setminus \{v_i\}$. Since, $|\pi_{ij}|=5$, we know that there is exactly one such label $\alpha_j$ for each $v_j$. Now, check the satisfiability of the over-complete \pac{4,3} instance on $V\setminus \{v_i\}$ and conclude accordingly.

    The correctness of the algorithm follows trivially. Every time we remove one label from each vertex, for each pair $v_j, v_k\in V\setminus \{v_i\}$ we could delete at most two unsatisfying constraints from the total of five. Thus, every pair has at least $3$ constraints. This reduces to a complete \pac{4,3} instance, and we can check the satisfiability of this instance in time $n^{O(\log n)}$, giving us the following result:
    
    \begin{theorem}
        There is an $n^{O(\log n)}$ time algorithm to decide whether a complete instance of \textup{\pac{5,5}} is satisfiable or not.
    \end{theorem}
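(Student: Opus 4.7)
theorem.}
The plan is to reduce a complete \pac{5,5} instance to an over-complete \pac{4,3} instance after one initial guess, and then invoke the previously established $n^{O(\log n)}$-time algorithm for over-complete \pac{4,3}. Let $\alpha$ be any (hypothetical) satisfying assignment. I would first pick an arbitrary vertex $v_i \in V$ and branch over the five possible values of $\alpha(v_i) \in \Sigma$; one of these branches corresponds to the correct value.

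In the correct branch, I would use the fact that $\pi_{ij}$ is a matching of size exactly $5$ on $\Sigma \times \Sigma$ with $|\Sigma|=5$: since every element of $\Sigma$ appears exactly once as the first coordinate in $\pi_{ij}$, the choice of $\alpha(v_i)$ determines exactly one forbidden label $\alpha_j \in \Sigma_j$ for every other vertex $v_j$. I would remove this unique $\alpha_j$ from $\Sigma_j$. After this reduction, every remaining vertex has $|\Sigma_j|=4$, and by the same reasoning as in \Cref{lem:43pacthree}, every clause incident to $v_i$ is now trivially satisfied, so $v_i$ can be discarded.

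Next I would verify that the remaining instance on $V \setminus \{v_i\}$ is an over-complete \pac{4,3} instance. Fix any pair $v_j, v_k \in V \setminus \{v_i\}$. The original matching $\pi_{jk}$ has size $5$; restricting to the new label sets $\Sigma_j \times \Sigma_k$ removes a pair $(a,b) \in \pi_{jk}$ only if $a$ is the removed label of $\Sigma_j$ or $b$ is the removed label of $\Sigma_k$. Since $\pi_{jk}$ is a matching, there is at most one pair of each type, so we delete at most two pairs, leaving a matching of size at least $3$ inside $\Sigma_j \times \Sigma_k$. Therefore the resulting instance satisfies the over-complete \pac{4,3} definition on vertex set $V \setminus \{v_i\}$.

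Finally I would bound the running time. We try at most $5$ guesses for $\alpha(v_i)$; each guess produces an over-complete \pac{4,3} instance that can be decided in $n^{O(\log n)}$ time by the preceding theorem. For correctness, if the original instance is satisfiable then some choice of $\alpha(v_i)$ is consistent with a satisfying $\alpha$ and by \Cref{lem:43pacthree} the reduced instance is satisfiable; conversely, any satisfying assignment to the reduced instance, extended by $\alpha(v_i)$ on $v_i$, satisfies the original. The main (and only nontrivial) obstacle is to ensure the matching-size accounting gives at least $3$ forbidden pairs per surviving pair of vertices, which the argument above handles cleanly because $\pi_{jk}$ being a matching limits the number of deletions per vertex-label removal to one per side. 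Total running time: $5 \cdot n^{O(\log n)} = n^{O(\log n)}$.
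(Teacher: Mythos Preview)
Your proposal is correct and matches the paper's proof essentially step for step: guess the value of one arbitrary vertex (branching over $|\Sigma|=5$ choices), use the fact that $|\pi_{ij}|=5$ to remove exactly one label from every other vertex, observe that each surviving pair loses at most two of its five matched pairs so at least three remain, and then invoke the over-complete \pac{4,3} algorithm. If anything, you supply slightly more detail on the correctness (the converse direction and the matching accounting) than the paper does.
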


    \section{Hardness}
\label{sec:hardness}
In this section, we prove various hardness results claimed in this paper.

\subsection{Hardness of \texorpdfstring{\ksat}{kSAT} on Dense Instances}
We start by showing that (\textsc{Min}-)\ksat on dense instances is almost as hard as general instances. 

\begin{claim}
For any constant $\eps > 0$, there is an approximation-preserving polynomial-time reduction from a general instance of \textup{\minksat} to an instance whose constraint graph has at least $(1 - \eps) \binom{n}{k}$ constraints.
\label{claim:hardness-dense}
\end{claim}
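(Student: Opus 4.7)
The plan is to generalize the reduction sketched for \mintwo in the introduction. Given a general instance $\Phi$ of \minksat with variable set $V$ of size $n_0$ and $m$ clauses, I introduce a set $D$ of $n' := \lceil n_0(\eps^{-1/k}-1)\rceil$ fresh ``dummy'' variables. The new instance $\Phi'$ on $V \cup D$ keeps every original clause and, for every $k$-subset $S \in \binom{V \cup D}{k}$ with $S \cap D \neq \emptyset$, adds one \ksat clause whose disjunction contains the positive literal of some fixed $d \in S \cap D$ together with arbitrary literals over the remaining $k-1$ variables of $S$. Since $k$ is constant, the total number of clauses is $O(n_0^k / \eps)$ and $\Phi'$ is built in $\poly(n_0)$ time.

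The first step is to verify density. Writing $N := n_0 + n'$, the clause count of $\Phi'$ is at least $\binom{N}{k} - \binom{n_0}{k}$. Using that $(N-i)/(n_0-i)$ is nondecreasing in $i$ (its derivative in $i$ equals $n'/(n_0-i)^2 \geq 0$), each factor in
\[
\binom{N}{k}\Big/\binom{n_0}{k} \;=\; \prod_{i=0}^{k-1} \frac{N-i}{n_0-i}
\]
is at least $N/n_0$, so the ratio is at least $(N/n_0)^k \geq 1/\eps$ by the choice of $n'$. This gives $\binom{n_0}{k} \leq \eps \binom{N}{k}$, so $\Phi'$ has at least $(1-\eps)\binom{N}{k}$ constraints as required.

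Next I would argue exact preservation of the optimum. Setting every variable in $D$ to \textbf{True} satisfies every newly added clause by construction, so extending any assignment of $\Phi$ in this way yields an assignment of $\Phi'$ with the same number of unsatisfied constraints; thus $\opt(\Phi') \leq \opt(\Phi)$. Conversely, since every original clause involves only $V$, restricting any assignment of $\Phi'$ to $V$ leaves the number of violated original clauses unchanged, giving $\opt(\Phi) \leq \opt(\Phi')$. In particular, the ``restrict to $V$'' map turns any $c$-approximate solution of $\Phi'$ into a $c$-approximate solution of $\Phi$, yielding the desired approximation-preserving reduction.

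There is essentially no main obstacle: the only nontrivial computation is the elementary binomial estimate above, and the whole proof should take only a few lines once stated carefully.
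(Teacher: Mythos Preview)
Your proposal is correct and follows essentially the same approach as the paper: add dummy variables, insert a \ksat clause containing a positive dummy literal on every $k$-subset touching the dummies, and observe that setting all dummies to True makes the optimal values coincide. Your density calculation is slightly more careful (using $\lceil n_0(\eps^{-1/k}-1)\rceil$ dummies and the explicit binomial ratio, rather than the paper's cruder $O(n_0/\eps)$ and one-line estimate), but the argument is otherwise identical.
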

\begin{proof}
Given a general instance of \minksat with variable set $V_0$ with $n_0 = |V_0|$, add $O(n_0 / \eps)$ dummy variables. For every $k$-set of variables $(v_1, \dots, v_k)$ including at least one dummy variable, create a dummy constraint $(v_1 \vee \dots \vee v_k)$. Output the original instance combined with the dummy variables and constraints. 
The number of constraints is at least $\binom{n_0 + n}{k} - \binom{n_0}{k} \geq (1 - \eps_0)\binom{n_0 + n}{k}$. 
For any assignment of the original variables, one can easily satisfy all the dummy constraints by setting all the dummy variables True. In the other direction, for any assignment of the new instance, changing all dummy variables to True will only satisfy more constraints, which will possibly violate only the original constraints. 
Therefore, the optimal values of the two instances are the same.
\end{proof}

\begin{remark}
Note that the above instance is {\em everywhere} dense in the sense that each variable is contained in at least $(1 - \eps)\binom{n}{k-1}$ constraints. Also, though we do not formally define high-dimensional expansion in this paper, we believe that this instance has almost maximum expansion in every definition of high-dimensional expansion. 
\end{remark}

\subsection{Hardness of \mincsp}
We prove that for many (but not all) CSPs, the hardness of exact optimization (which does not distinguish max/min) implies the hardness in complete instances. 
Given $k \geq 2, r \geq 2$ and the alphabet $\Sigma$ with $|\Sigma| = r$, and the constraint family $\Gamma = \{ P_1, \dots, P_t \}$ with each $P_i : \Sigma^k \to \{ \sat, \unsat \}$, recall that $\mincsp(\Gamma)$ is a CSP where each predicate is from $\Gamma$. 

\begin{theorem}
Suppose that \textup{$\mincsp(\Gamma)$} on general instances is NP-hard and there is a distribution $\calD$ supported on $\Gamma$ such that for every $\sigma \in \Sigma^k$, $\Pr_{P \sim \calD}[\sigma \notin P] = p$ for some $p \in (0, 1)$. Then, there is no polynomial-time algorithm for complete \textup{\minkcsp} unless $\mathbf{NP} \subseteq \mathbf{BPP}$. 
\label{thm:hardness-min}
\end{theorem}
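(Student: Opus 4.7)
The plan is to reduce general (NP-hard) $\mincsp(\Gamma)$ to exact complete $\mincsp$ by random padding. Given a general instance $I=(V,\cC)$ with $n=|V|$ variables and $m=|\cC|$ constraints, I will construct a random complete instance $I'$ on the same variable set: keep every original constraint, and for each of the $N^{\ast} := \binom{n}{k}-m$ remaining $k$-subsets $S \subseteq V$, independently draw a predicate $P_S \sim \calD$ and add the constraint $(S, P_S)$. The balanced hypothesis is precisely what makes this padding assignment-oblivious: for every fixed $\alpha : V \to \Sigma$, the number $V_\alpha$ of fillers that reject $\alpha$ is a sum of $N^{\ast}$ independent Bernoullis with parameter $p$, so $\E[V_\alpha]=pN^{\ast}$ is the same constant for every $\alpha$.

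Next I will apply a Chernoff bound together with a union bound over the $r^{n}=2^{O(n)}$ possible assignments to conclude that $|V_\alpha-pN^{\ast}| = O(\sqrt{N^{\ast}\cdot n})$ holds simultaneously for every $\alpha$ with probability $1-o(1)$ over the padding. Since the cost of $\alpha$ in $I'$ equals $c^{I}_\alpha + V_\alpha$, the concentration upgrades to
\[
\opt(I') \;=\; \opt(I) + pN^{\ast} \;\pm\; O(\sqrt{N^{\ast}\cdot n}),
\]
with high probability (the upper bound comes from plugging in an optimal assignment of $I$, and the lower bound from the uniform concentration of $V_\alpha$ combined with $c^{I}_{\alpha_0}\geq \opt(I)$ at the optimizer $\alpha_0$ of $I'$). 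Consequently, any polynomial-time exact algorithm for complete $\mincsp$, applied to $I'$ and then corrected by subtracting the known constant $pN^{\ast}$, yields in randomized polynomial time an additive $O(n^{(k+1)/2})$-approximation to $\opt(I)$.

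To finish, I will combine this estimator with the assumed NP-hardness of $\mincsp(\Gamma)$ to derive $\mathbf{NP}\subseteq\mathbf{BPP}$. The main obstacle is that plain NP-hardness yields only a unit YES/NO gap, whereas the padding error grows as $n^{(k+1)/2}$ for $k\geq 2$. I will address this by prepending a gap-amplification step inside $\mincsp(\Gamma)$ itself --- for instance, a direct-sum construction across many disjoint copies, a polynomial blow-up that uses predicates of $\Gamma$ to enforce equality within each cluster of duplicated variables, or an invocation of the standard PCP-based polynomial gap that every natural $\mincsp(\Gamma)$ enjoys --- pushing the YES-vs-NO separation well above the $O(n^{(k+1)/2})$ noise floor. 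Once the amplified hard instance is fed into the reduction, a single run of the hypothesized complete-$\mincsp$ algorithm distinguishes YES from NO with high probability, placing the NP-hard decision problem in $\mathbf{BPP}$.
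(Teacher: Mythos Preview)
Your random-padding construction on the same variable set and the Chernoff/union-bound analysis are correct, and they do yield an additive $O(n^{(k+1)/2})$ estimate of $\opt(I)$. The genuine gap is the final paragraph: none of your three proposed amplification devices closes the mismatch between the unit YES/NO gap and the $O(n^{(k+1)/2})$ noise under the stated hypotheses.
\begin{itemize}
\item \emph{Direct sum.} Taking $T$ disjoint copies gives $N=Tn$ variables and a YES/NO gap of only $T$, while after completing on $N$ variables the padding noise is $\Theta(N^{(k+1)/2})=\Theta((Tn)^{(k+1)/2})$. Since $(k+1)/2>1$ for $k\ge 2$, the noise outpaces the gap in $T$; the required inequality $T\gg (Tn)^{(k+1)/2}$ is impossible.
\item \emph{Equality gadgets in $\Gamma$.} The theorem assumes nothing about $\Gamma$ beyond NP-hardness and the balanced distribution $\calD$; there is no reason $\Gamma$ contains predicates that force two variables to be equal.
\item \emph{PCP-based polynomial gap.} This is an extra hypothesis not present in the statement; you cannot invoke it for ``every natural $\mincsp(\Gamma)$''.
\end{itemize}

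The paper's resolution is a variable blow-up $V':=V\times[t]$ with $t=\poly(n)$: each original constraint on $(v_1,\dots,v_k)$ is replicated to all $t^k$ index tuples $((v_1,i_1),\dots,(v_k,i_k))$, and every remaining $k$-set of $V'$ is filled from $\calD$. The key point is that \emph{no equality enforcement is needed}: for any assignment $\alpha$ on $V\times[t]$, the number of real constraints it violates, divided by $t^k$, equals the expected number of original constraints violated by the random projection sending each $v$ to $\alpha(v,i)$ for uniform $i\in[t]$, hence is at least $\opt(I)$. Thus the YES/NO gap becomes $t^k$ while the padding noise stays $O((tn)^{(k+1)/2})$; since $k>(k+1)/2$ for $k\ge 2$, a polynomial $t$ separates them. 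This averaging observation is the missing ingredient in your blow-up suggestion.
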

While the condition about $\calD$ is technical, it is easily satisfied by many CSPs like \ksat, \klin, \kand, and \ug. One notable CSP that does not meet this condition is $r\textsc{-Coloring}$ (not permutation avoiding), which is consistent with the fact that it is hard on general instances, but not on complete ones. 
\begin{proof}
Given a general instance for $\mincsp(\Gamma)$, described by $V$, $\calC$, and $\{ P_C \}_{C \in \calC}$, let $t = \poly(n)$ be a parameter to be determined. Let $m = |\calC|$. 
Our new instance is constructed as follows. 
\begin{itemize}
\item Variables: $V' := V \times [t]$. 
\item For each original constraint $C = (v_1, \dots, v_k)$ with $P_C$ and $i_1, \dots, i_k \in [t]$, create a constraint $((v_1, i_1), \dots, (v_k, i_k))$ with $P_C$. Call them {\em real} constraints, let $\calC'_r$ be the set of them, and $m'_r := |\calC'_r|$. 
\item For every other $k$-set $C$ of $V'$, independently sample $P_C$ from $\mathcal{D}$. Call them {\em dummy} constraints, let $\calC'_d$ be the set of them, and $m'_d := |\calC'_d|$. 
\end{itemize}
We would like to show that the number of dummy constraints unsatisfied is tightly concentrated over all assignments $\alpha$. 
Fix any assignment $\alpha : V' \times \Sigma$. Then the expected number of dummy constraints unsatisfied is $p m'_d$. The Chernoff bound ensures that for any $\eps > 0$, the probability that it deviates from the expected value by more than $\eps p m'_d$ is $e^{-\Omega(\eps^2 p m'_d)}$. 
Since $m'_d \in [\binom{n}{k-1} t^k , (nt)^k]$ and $p$ is a constant only depending on $\Gamma$, letting $\eps = 1/n^{k+1}$ ensures that $\eps p m'_d \leq t^k / n$ and $e^{-\eps^2 p m'_d} = e^{-\Omega(t^k / n^{k+3})}$. Since there are at most $e^{tn \log |\Sigma|}$ assignments, 
if we take $t = n^{5k}$, the union bound shows for any $\alpha$, the total number of dummy constraints satisfied by $\alpha$ does not deviate from the expected value $\eps p m'_d$ by more than $t^k / n$. 

For real constraints, given an assignment $\alpha$, let $f(\alpha)$ be the number of real constraints unsatisfied by $\alpha$. Note that $f(\alpha) / t^k$ is exactly the expected number of constraints of original constraints $\calC$ unsatisfied by the random assignment where each $v \in V$ chooses a random $i \in [t]$ and uses $\alpha(v_{i})$. 

Therefore, if the optimal value (number of unsatisfied constraints) in the original general instance is at most $q$, then there exists an assignment in the new instance (where each $v_{i}$ follows the assignment of $v \in V$) that unsatisfied at most $qt^k \pm t^k/n$. 
Otherwise, for any assignment $\alpha$, the number of unsatisfied constraints is at least $(q + 1)t^k \pm t^k/n$. Therefore, the hardness of the general version implies the hardness of the complete version. 
\end{proof}

\subsection{Hardness of \texorpdfstring{\krcsp{2}{4} and \krcsp{3}{3}}{error}}

In this subsection, we prove that \krcsp{2}{4}, 6-\textsc{PAC}, and \krcsp{3}{3} are NP-hard on complete instances, proving 
Theorem~\ref{thm:24csp}, Theorem~\ref{thm:hardness-6pac}, and Theorem~\ref{thm:threethreecsp} respectively.
At a high level, these three reductions start from the hardness of 3-\text{Coloring} and \threesat on general instances (which are \csp{(2,3)} and \csp{(3, 2)} respectively), {\em add the dummy label}, but {\em add constraints forcing the assignment not to use the dummy label}. 

First, the following hardness of \pac{4,1} implies the hardness of \krcsp{2}{4}, proving Theorem~\ref{thm:24csp}. 

\begin{theorem}
\textup{\pac{4, 1}} is NP-hard on complete instances.
\end{theorem}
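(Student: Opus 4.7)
The plan is to reduce from $3$-\textsc{Coloring} on general graphs (which is the canonical NP-hard \krcsp{2}{3}). Given an instance $G=(V,E)$, I will construct a complete \pac{4,1} instance $H$ on the vertex set $V' = V \cup \{z, a_0, a_1, a_2\} \cup \bigcup_{v\in V}\{s_v, v^0, v^1, v^2\}$, in which color $3$ plays the role of a dummy label. Following the high-level recipe described in this section, the two tasks are (i) to enforce the edge inequalities $\chi(u)\neq\chi(v)$ for $(u,v)\in E$ and (ii) to force the satisfying assignment not to use the dummy color on the main vertices.

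Goal (ii) is handled by the dedicated dummy anchor $z$: if $\alpha(z)=3$, the single-pair constraints $\pi(z,w)=(3,3)$ for every $w\in V\cup\{a_0,a_1,a_2\}\cup\{s_v\}_{v\in V}$ force $\alpha(w)\neq 3$. Goal (i) requires vertex replication, since a \pac{4,1} constraint can forbid only one color pair, while $\chi(u)\neq\chi(v)$ over $\{0,1,2\}$ corresponds to three forbidden pairs. I adopt the encoding ``copy $v^c$ is \emph{active} with color $c$ iff $\chi(v)=c$, otherwise \emph{inactive} with color $3$''. Each edge $(u,v)\in E$ is then guarded by the three diagonal constraints $\pi(u^c,v^c)=(c,c)$ for $c\in\{0,1,2\}$, which collectively rule out $\chi(u)=\chi(v)$. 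To make this encoding binding, I add three further families of constraints: the color-anchor constraints $\pi(v^c,a_k)=(k,k)$ for $k\neq c$ (which, when $\alpha(a_k)=k$, restrict $\alpha(v^c)\in\{c,3\}$); the within-vertex constraints $\pi(v^i,v^j)=(i,j)$ for $0\le i<j\le 2$ (which forbid two simultaneously active copies of $v$); and the summary-vertex constraints $\pi(s_v,v^c)=(c,3)$ (which, combined with $\alpha(s_v)\in\{0,1,2\}$, guarantee that the copy $v^{\alpha(s_v)}$ is active and hence at least one copy of $v$ is active). All remaining pairs of $V'$ receive a vacuous single-pair constraint chosen never to fire under the intended encoding, e.g., $\pi(x,y)=(c_x,c_y)$ with $c_x$ outside the intended color range of $x$.

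The main obstacle I expect is pinning $\alpha(z)=3$ and $\alpha(a_k)=k$ in every satisfying assignment, so that the anchor-based enforcement above actually kicks in. The plan is to combine color-permutation symmetry of the instance on $\{0,1,2,3\}$ with extra mutual constraints among the four anchors $\{z,a_0,a_1,a_2\}$ that rule out non-permutation colorings; once the four anchors are forced to take four pairwise-distinct colors, WLOG (after color relabeling) their joint assignment is exactly $(\alpha(z),\alpha(a_0),\alpha(a_1),\alpha(a_2)) = (3,0,1,2)$. Forcing four pairwise-distinct colors on an anchor quadruple using only one-forbidden-pair constraints is the delicate step and may require a small secondary gadget (or a constant-factor enumeration over the $4!$ possible anchor permutations). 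With the anchors pinned, correctness is direct: a $3$-coloring $\chi$ of $G$ lifts to a satisfying assignment of $H$ by setting $\alpha(v^{\chi(v)})=\chi(v)$, $\alpha(v^c)=3$ for $c\neq\chi(v)$, $\alpha(s_v)=\chi(v)$, $\alpha(a_k)=k$, and $\alpha(z)=3$; conversely, any satisfying assignment of $H$ yields a valid $3$-coloring by reading off $\chi(v):=\alpha(s_v)$, which by the binding constraints coincides with the unique active copy of each $v$.
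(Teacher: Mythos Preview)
Your proposal has a genuine gap at exactly the point you flag as ``the delicate step'': pinning the anchors. The entire soundness direction (satisfying assignment of $H$ $\Rightarrow$ $3$-coloring of $G$) hinges on every satisfying assignment having $\{z,a_0,a_1,a_2\}$ pairwise distinct, yet you never establish this. A direct attempt using only the $\binom{4}{2}=6$ pairwise \pac{4,1} constraints among the anchors cannot work: each such constraint removes only $4^{2}=16$ of the $4^{4}=256$ joint colorings of the anchors, so at least $160$ colorings survive, far more than the $24$ permutations. Your fallback of ``constant-factor enumeration over the $4!$ anchor permutations'' does not address this: the danger is not that the anchors take an \emph{unknown} permutation, but that in some satisfying assignment they are \emph{not} distinct at all. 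Concretely, if $\alpha(z)\neq 3$ then every constraint $\pi(z,w)=(3,3)$ is vacuous, so nothing forces $s_v\in\{0,1,2\}$; setting all $s_v=3$ then voids all the $\pi(s_v,v^c)=(c,3)$ constraints, after which taking every $v^c=3$ satisfies all your explicitly specified non-anchor constraints regardless of $G$. Whatever ``small secondary gadget'' you have in mind to prevent this must itself be a satisfiable complete \pac{4,1} instance whose every satisfying assignment makes four designated vertices pairwise distinct --- you have not built one, and doing so is essentially the whole difficulty of the theorem. (Minor additional issues: you include $V$ itself in $V'$ but never say what role or constraints the original vertices carry; and the color-permutation symmetry you invoke does not hold, since constraints like $\pi(z,w)=(3,3)$ and $\pi(v^c,a_k)=(k,k)$ are not invariant under permuting colors.)

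The paper sidesteps anchoring entirely. It replaces each original vertex by $t=O(1)$ copies and uses a \emph{probabilistic} gadget: random single-pair constraints inside each copy-block and between blocks, chosen so that (by the probabilistic method, union-bounding over $e^{O(t)}$ assignments) some fixed gadget has the property that every satisfying assignment gives each block a dominant color in $\{1,2,3\}$ (color $4$ cannot be dominant, nor can two colors tie), and edge-adjacent blocks get different dominant colors. Reading off the dominant color of each block then yields the $3$-coloring. The key idea you are missing is that one cannot rigidly pin single vertices in \pac{4,1}; instead one enforces \emph{statistical} structure across many copies so that the intended behavior emerges in any satisfying assignment.
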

\begin{proof}

We reduce from $3\textsc{-Coloring}$ (with the color set $\{ 1, 2, 3 \}$).
We want the following gadget, which is an instance of \pac{4, 1}. 
\begin{itemize}
\item Variable is $\{ u, v \} \times [t]$, for some $t = O(1)$ to be determined. Let $U := \{ u \} \times [t]$ and $V := \{ v \} \times [t]$. 

\item For any $i, j \in [t]$, pick a random $p \in [4]$ and put a constraint between $(u, i)$ and $(v, j)$ ruling out the assignment that gives $p$ to both $i$ and $j$.

\item For any pair $i < j$, with probability $1/2$, sample $p \neq q \in [3]$ and put a constraint ruling out $(u, i)$ getting $p$ and $(u, j)$ getting $q$. Otherwise, put a constraint ruling out $(u, i)$ and $(u, j)$ both getting $4$. 
Put the same constraint for $(v, i)$ and $(v, j)$ as well. 
\end{itemize}

This gives a distribution over instances $I$.
We show that at least one instance satisfies the following properties:
\begin{enumerate}
    \item $I$ is not satisfied by any assignment where either $U$ or $V$ has color $4$ appearing $0.1t$ times. 
    
    \item $I$ is not satisfied by any assignment where $U$ or $V$ has two different colors appearing at least $0.1t$ times. 
    
    \item If an assignment does not satisfy the above two conditions, it means that both $u$ and $v$ have one {\em dominant} color in $[3]$ that is used at least $0.7t$ times. If $u$ and $v$ have the same dominant color, then this assignment does not satisfy $I$. 

\end{enumerate}
Given an assignment to the variables, if it does not satisfy any of the above properties, the probability that it satisfies $I$ is at most $e^{-\Omega(t^2)}$. 
Since there are at most $e^{O(t)}$ assignments, for some constant $t = O(1)$, a fixed gadget $I$ with the above properties exists. 

Given a hard instance $G = (V, E)$ for $3\textsc{-Coloring}$, create an instance of \pac{4,1} whose vertices are $V \times [t]$ and put the above gadget for every $(u, v) \in E$. (The gadget inside $\{ u \} \times [t]$ does not depend on $v$ and can be created only once.) For any pair of variables that do not have a constraint, put the \pac{4,1} constraint ruling out both variables getting $4$. 

If there is a valid $3$-coloring of $V$, then its natural extension to $V \times [t]$ (i.e., for $v \in V$ and $i \in [t]$, $v_i$ is assigned the color of $v$) is a valid \pac{4, 1} assignment, and given any valid assignment for the \pac{4,1} instance, by choosing the dominant color for every $v \in V$ yields a valid $3$-coloring. 
\end{proof}

Another application of this strategy yields the hardness of 6-\textsc{PAC}.

\begin{theorem}
6-\textsc{PAC} is NP-hard on complete instances.
\label{thm:hardness-6pac}
\end{theorem}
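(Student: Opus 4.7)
The plan is to mimic the \pac{4,1} reduction just above, starting from $3\textsc{-Coloring}$. Identify the alphabet $[6]$ as $\{1,2,3\} \cup \{4,5,6\}$, thinking of $\{1,2,3\}$ as \emph{real} colors and $\{4,5,6\}$ as \emph{dummy} colors. Given an instance $G=(V,E)$ of $3\textsc{-Coloring}$, I will introduce $t = O(1)$ copies $U_v := \{v\} \times [t]$ of every vertex $v \in V$, construct a single randomized \pac{6,6} gadget on $\{u,v\} \times [t]$ using random perfect matchings of $[6]\times[6]$, and compose a fixed realization of this gadget on every pair of vertices in the complete graph on $V\times[t]$, specializing the distribution depending on whether $(u,v)\in E$ or not.

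Concretely, the gadget samples a random perfect matching for each of the $\Theta(t^2)$ variable pairs from one of two distributions. For a pair $((u,i),(v,j))$ with $(u,v)\in E$, draw a matching whose restriction to $\{1,2,3\}$ is the identity (forbidding $(a,a)$ for each real color) and whose restriction to $\{4,5,6\}$ is a uniformly random bijection into $[6]$; this enforces $\alpha(u,i)\ne \alpha(v,j)$ on real colors while still randomly forbidding dummy-involving pairs. For a pair inside a single $U_v$, or for $(u,v)\notin E$, draw the matching from a distribution which, with constant probability, forbids any specified dummy-involving pair and any specified pair of two distinct real colors within the same $U_v$. The target is that any satisfying assignment $\alpha$ must (1) use dummy colors on fewer than $0.1t$ copies of each $U_v$, (2) have a unique dominant real color $c(v)\in\{1,2,3\}$ appearing on at least $0.7t$ copies of each $U_v$, and (3) satisfy $c(u)\ne c(v)$ on every $(u,v)\in E$. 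Given (1)--(3), satisfying assignments are in bijection with proper $3$-colorings of $G$.

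The analysis is then structurally identical to the \pac{4,1} argument. Any fixed assignment to the $2t$ gadget variables which violates one of the three properties is killed by each of the $\Theta(t^2)$ independent random matchings with constant probability (any specified pair lies in a uniformly random element of $S_6$ with probability $1/6$), so the assignment survives the whole gadget only with probability $e^{-\Omega(t^2)}$. A union bound over the $6^{2t}=e^{O(t)}$ gadget assignments shows that for a sufficiently large constant $t$ a deterministic gadget satisfying (1)--(3) exists; substituting it into the complete-instance reduction finishes the proof. The main technical obstacle is the combinatorial design of these matching distributions: since a \pac{6,6} forbidden set is a full size-$6$ perfect matching (rather than a single pair as in \pac{4,1}), the distribution must be tuned so that every honest extension $\alpha(v,i)=c(v)$ of a valid $3$-coloring avoids every sampled matching deterministically, while the distribution retains enough randomness to give the required $\Omega(1)$-killing probability for bad assignments. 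I expect that fixing the sub-matching on $\{1,2,3\}$ rigidly (e.g., the identity for edge gadgets and a carefully chosen derangement for intra-$U_v$ gadgets) while randomizing only the three matching edges incident to $\{4,5,6\}$ provides exactly the right trade-off between these two requirements.
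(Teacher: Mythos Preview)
Your high-level plan is exactly the paper's: reduce from $3\textsc{-Coloring}$, take $t=O(1)$ copies per vertex, split $[6]$ into real colors $\{1,2,3\}$ and dummies $\{4,5,6\}$, and design a randomized gadget so that properties (1)--(3) hold by a $e^{-\Omega(t^2)}$-versus-$e^{O(t)}$ union bound. The gap is in your speculated implementation. You propose to \emph{fix} the sub-matching on $\{1,2,3\}$ to a single derangement for intra-$U_v$ pairs and randomize only the edges incident to $\{4,5,6\}$. But once the $\{1,2,3\}$-part is a fixed derangement, the remaining three matching edges must lie in $\{4,5,6\}\times\{4,5,6\}$, so no amount of randomization there can ever forbid an ordered real pair like $(2,1)$. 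Concretely, with the derangement $\{(1,2),(2,3),(3,1)\}$, the assignment giving color $2$ to $(v,i)$ for $i\le t/2$ and color $1$ for $i>t/2$ satisfies \emph{every} intra-$U_v$ constraint deterministically (all cross pairs have colors $(2,1)$), so Property (2) fails. This directly contradicts your own earlier requirement that the intra distribution forbid ``any specified pair of two distinct real colors'' with constant probability; that requirement is correct, but it forces the real part to be random.

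The paper inverts your choice: for intra-$U_v$ pairs it \emph{randomizes} the $\{1,2,3\}$-part over the two derangements of $[3]$ (so each ordered $(a,b)$ with $a\ne b\in\{1,2,3\}$ is forbidden with probability $1/2$, giving Property (2)) and \emph{fixes} the $\{4,5,6\}$-part to the identity $\{(4,4),(5,5),(6,6)\}$ (so any repeated dummy is killed deterministically, giving a stronger Property (1)). Edge pairs get the full identity on $[6]$, making Property (3) deterministic, and non-edges get a fixed matching $\{(1,4),(2,5),(3,6),(4,1),(5,2),(6,3)\}$ that any all-real assignment satisfies vacuously. Thus only Property (2) requires the probabilistic argument. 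Your outline becomes correct once you move the randomness from the dummy block to the real block.
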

\begin{proof}

We reduce from $3\textsc{-Coloring}$ (with the color set $\{ 1, 2, 3 \}$).
We want the following gadget, which is an instance of 6-\textsc{PAC}.
\begin{itemize}
\item Variable is $\{ u, v \} \times [t]$, for some $t = O(1)$ to be determined. Let $U := \{ u \} \times [t]$ and $V := \{ v \} \times [t]$. 

\item For any $i, j \in [t]$, between $(u, i)$ and $(v, j)$, put the constraint ruling out $\{ ( a, a) \}_{a \in [6]}$.

\item For any pair $i < j$, between $(u, i)$ and $(u, j)$, and put a constraint ruling out the pairs $\{ (1, a), (2, b), (3, c), (4, 4), (5, 5), (6, 6) \}$ where $a, b, c \in [3]$ are sampled as three distinct numbers conditioned on $a \neq 1, b \neq 2, c \neq 3$. Put the same constraint for $(v, i)$ and $(v, j)$ as well. 
\end{itemize}

This gives a distribution over instances $I$.
We show that at least one instance satisfies the following properties:
\begin{enumerate}
    \item $I$ is not satisfied by any assignment where either $U$ or $V$ has color $4$, $5$, or $6$ appearing at least two times. 
    
    \item $I$ is not satisfied by any assignment where $U$ or $V$ has two different colors appearing at least $0.1t$ times. 
    
    \item If an assignment does not satisfy the above conditions, it means that both $u$ and $v$ have one {\em dominant} color in $[3]$ that is used at least $0.7t$ times. If $u$ and $v$ have the same dominant color, then this assignment does not satisfy $I$. 
\end{enumerate}
Given an assignment to the variables, if it does not satisfy any of the above properties, the probability that it satisfies $I$ is at most $e^{-\Omega(t^2)}$. (Actually, Properties 1. and 3. are ensured deterministically.) Since there are at most $e^{O(t)}$ assignments, for some constant $t = O(1)$, a fixed gadget $I$ with the above properties exists. 

Given a hard instance $G = (V, E)$ for $3\textsc{-Coloring}$, create an instance of 6-PAC whose vertices are $V \times [t]$ and put the above gadget for every $(u, v) \in E$. (The gadget inside $\{ u \} \times [t]$ does not depend on $v$ and can be created only once.) For any pair of variables that do not have a constraint, put the 6-PAC constraint ruling out the pairs $\{ (1, 4), (2, 5), (3, 6), (4, 1), (5, 2), (6, 3) \}$. 

If there is a valid $3$-coloring of $V$, then its natural extension to $V \times [t]$ (i.e., for $v \in V$ and $i \in [t]$, $v_i$ is assigned the color of $v$) is a valid 6-PAC assignment, and given any valid assignment for the 6-PAC instance, by choosing the dominant color for every $v \in V$ yields a valid $3$-coloring. 
\end{proof}

The hardness of \krcsp{3}{3} follows almost the same strategy, starting from \threesat on general instances. 

\threethreecsp*
\begin{proof}
We reduce from $\threesat$, whose alphabet is $\{ 0, 1 \}$. 
We want the following gadget, which is an instance of \krcsp{3}{3} with alphabet $\{ 0, 1, 2 \}$. It is parameterized by a forbidden assignment $\sigma = (\sigma_u, \sigma_v, \sigma_z) \in \{ 0, 1 \}^3$.
\begin{itemize}
\item Variable is $\{ u, v, w \} \times [t]$, for some $t = O(1)$ to be determined. Let $U := \{ u \} \times [t]$, $V := \{ v \} \times [t]$, and $W := \{ w \} \times [t]$. 

\item For any $i, j, k \in [t]$, put the constraint on $((u, i), (v, j), (w, k))$ that rules out the assignment $\sigma$. 

\item For any pair $i_1 < i_2 < i_3$, 
\begin{itemize}
    \item With probability $1/2$, sample $p \in [3], q \in \{ 0, 1 \}$ and put the constraint on $((u, i_1)$, $(u, i_2)$, $(u, i_3))$ that rules out an assignment where $(u, i_p)$ gets $q$ and the other two get $1 - q$. 
    
    \item Otherwise, put the constraint on $((u, i_1), (u, i_2), (u, i_3))$ ruling out all variables getting $2$. 
    \item Put the same constraint for $((v, i_1), (v, i_2), (v, i_3))$ and $((w, i_1), (w, i_2), (w, i_3))$. 
\end{itemize}
\end{itemize}

This gives a distribution over instances $I$.
We show that at least one instance satisfies the following properties:
\begin{enumerate}
    \item $I$ is not satisfied by any assignment where $U$, $V$, or $W$ has label $2$ appearing at least $0.1t$ times. 
    
    \item $I$ is not satisfied by any assignment where $U$, $V$, or $W$ has two different labels appearing at least $0.1t$ times. 
    
    \item If an assignment does not satisfy the above condition, it means that all $u$, $v$, and $w$ have one {\em dominant} label in $\{ 0, 1 \}$ that is used at least $0.9t$ times. If the triple of these labels is equal to $\sigma$, then the assignment does not satisfy $I$. 
\end{enumerate}
Given an assignment to the variables, if it does not satisfy the first or second property above, the probability that it satisfies $I$ is at most $e^{-\Omega(t^3)}$. The final property is ensured by construction. Since there are at most $e^{O(t)}$ assignments, for some constant $t = O(1)$, a fixed gadget $I$ with the above properties exists. 

Given a hard instance $(V, \calC)$ for \threesat, create an instance of \krcsp{3}{3} whose variables are $V \times [t]$, and 
for every $\phi \in \calC$, put the above gadget parameterized by $\sigma \in \{ 0, 1 \}^3$ not satisfying $\phi$. 
For any triple of variables that do not have a constraint, put the constraint ruling out all variables getting $2$. 

If there is a satisfying assignment for the \threesat instance, then its natural extension to $V \times [t]$ (i.e., for $v \in V$ and $i \in [t]$, $v_i$ is assigned the label of $v$) is a satisfying \krcsp{3}{3} assignment, and given any satisfying assignment for the \krcsp{3}{3} instance, choosing the dominant label for every $v \in V$ yields a valid $3$-coloring. 
\end{proof}

    \bibliographystyle{alpha}
    \bibliography{Inputs/references.bib}

\newcommand{\etalchar}[1]{$^{#1}$}
\begin{thebibliography}{ADLVKK03}

\bibitem[ACMM05]{ACMM05}
Amit Agarwal, Moses Charikar, Konstantin Makarychev, and Yury Makarychev.
\newblock {$O(\sqrt{\log n})$} approximation algorithms for min uncut, min 2cnf deletion, and directed cut problems.
\newblock In {\em Proceedings of the thirty-seventh annual ACM symposium on Theory of computing}, pages 573--581, 2005.

\bibitem[ADLVKK03]{alon2003random}
Noga Alon, W~Fernandez De~La~Vega, Ravi Kannan, and Marek Karpinski.
\newblock Random sampling and approximation of {MAX-CSP}s.
\newblock {\em Journal of computer and system sciences}, 67(2):212--243, 2003.

\bibitem[AIM14]{aaronson2014multiple}
Scott Aaronson, Russell Impagliazzo, and Dana Moshkovitz.
\newblock Am with multiple merlins.
\newblock In {\em 2014 IEEE 29th Conference on Computational Complexity (CCC)}, pages 44--55. IEEE, 2014.

\bibitem[AJT19]{alev2019approximating}
Vedat~Levi Alev, Fernando~Granha Jeronimo, and Madhur Tulsiani.
\newblock Approximating constraint satisfaction problems on high-dimensional expanders.
\newblock In {\em 2019 IEEE 60th Annual Symposium on Foundations of Computer Science (FOCS)}, pages 180--201. IEEE, 2019.

\bibitem[AKK95]{arora1995polynomial}
Sanjeev Arora, David Karger, and Marek Karpinski.
\newblock Polynomial time approximation schemes for dense instances of {NP}-hard problems.
\newblock In {\em Proceedings of the twenty-seventh annual ACM symposium on Theory of computing}, pages 284--293, 1995.

\bibitem[AKK{\etalchar{+}}08]{arora2008unique}
Sanjeev Arora, Subhash~A Khot, Alexandra Kolla, David Steurer, Madhur Tulsiani, and Nisheeth~K Vishnoi.
\newblock Unique games on expanding constraint graphs are easy.
\newblock In {\em Proceedings of the fortieth annual ACM symposium on Theory of computing}, pages 21--28, 2008.

\bibitem[BBB{\etalchar{+}}19]{ban2019ptas}
Frank Ban, Vijay Bhattiprolu, Karl Bringmann, Pavel Kolev, Euiwoong Lee, and David~P Woodruff.
\newblock A {PTAS} for $\ell_p$-low rank approximation.
\newblock In {\em Proceedings of the Thirtieth Annual ACM-SIAM Symposium on Discrete Algorithms}, pages 747--766. SIAM, 2019.

\bibitem[BBK{\etalchar{+}}21]{bafna2021playing}
Mitali Bafna, Boaz Barak, Pravesh~K Kothari, Tselil Schramm, and David Steurer.
\newblock Playing unique games on certified small-set expanders.
\newblock In {\em Proceedings of the 53rd Annual ACM SIGACT Symposium on Theory of Computing}, pages 1629--1642, 2021.

\bibitem[BCAH{\etalchar{+}}23]{bakshi2023quasi}
Ainesh Bakshi, Vincent Cohen-Addad, Samuel~B Hopkins, Rajesh Jayaram, and Silvio Lattanzi.
\newblock A quasi-polynomial time algorithm for multi-dimensional scaling via {LP} hierarchies.
\newblock {\em arXiv preprint arXiv:2311.17840}, 2023.

\bibitem[BFdLVK03]{bazgan2003polynomial}
Cristina Bazgan, W~Fernandez~de La~Vega, and Marek Karpinski.
\newblock Polynomial time approximation schemes for dense instances of minimum constraint satisfaction.
\newblock {\em Random Structures \& Algorithms}, 23(1):73--91, 2003.

\bibitem[BHHS11]{barak2011subsampling}
Boaz Barak, Moritz Hardt, Thomas Holenstein, and David Steurer.
\newblock Subsampling mathematical relaxations and average-case complexity.
\newblock In {\em Proceedings of the twenty-second annual ACM-SIAM symposium on Discrete Algorithms}, pages 512--531. SIAM, 2011.

\bibitem[BRS11]{BRS11}
Boaz Barak, Prasad Raghavendra, and David Steurer.
\newblock Rounding semidefinite programming hierarchies via global correlation.
\newblock In {\em 2011 ieee 52nd annual symposium on foundations of computer science}, pages 472--481. IEEE, 2011.

\bibitem[CAFG{\etalchar{+}}24]{cohen2023ptas}
Vincent Cohen-Addad, Chenglin Fan, Suprovat Ghoshal, Euiwoong Lee, Arnaud de~Mesmay, Alantha Newman, and Tony~Chang Wang.
\newblock A {PTAS} for $\ell_0$-low rank approximation: Solving dense {CSP}s over reals.
\newblock In {\em Proceedings of the 2024 Annual ACM-SIAM Symposium on Discrete Algorithms (SODA)}, 2024.

\bibitem[CALLN23]{CLLN23}
Vincent Cohen-Addad, Euiwoong Lee, Shi Li, and Alantha Newman.
\newblock Handling correlated rounding error via preclustering: A 1.73-approximation for correlation clustering.
\newblock In {\em 2023 IEEE 64th Annual Symposium on Foundations of Computer Science (FOCS)}, pages 1082--1104. IEEE, 2023.

\bibitem[CALN22]{CLN22}
Vincent Cohen-Addad, Euiwoong Lee, and Alantha Newman.
\newblock Correlation clustering with {S}herali-{A}dams.
\newblock In {\em 2022 IEEE 63rd Annual Symposium on Foundations of Computer Science (FOCS)}, pages 651--661. IEEE, 2022.

\bibitem[CCAL{\etalchar{+}}24]{CCLLNV24}
Nairen Cao, Vincent Cohen-Addad, Euiwoong Lee, Shi Li, Alantha Newman, and Lukas Vogl.
\newblock Understanding the cluster {LP} for correlation clustering.
\newblock In {\em Proceedings of the 56th Annual ACM SIGACT Symposium on Theory of Computing}, 2024.

\bibitem[CKR05]{calinescu2005approximation}
Gruia Calinescu, Howard Karloff, and Yuval Rabani.
\newblock Approximation algorithms for the 0-extension problem.
\newblock {\em SIAM Journal on Computing}, 34(2):358--372, 2005.

\bibitem[CMSY15]{CMSY15}
Shuchi Chawla, Konstantin Makarychev, Tselil Schramm, and Grigory Yaroslavtsev.
\newblock Near optimal {LP} rounding algorithm for correlation clustering on complete and complete k-partite graphs.
\newblock In {\em Proceedings of the forty-seventh annual ACM symposium on Theory of computing}, pages 219--228, 2015.

\bibitem[COCF10]{coja2010efficient}
Amin Coja-Oghlan, Colin Cooper, and Alan Frieze.
\newblock An efficient sparse regularity concept.
\newblock {\em SIAM Journal on Discrete Mathematics}, 23(4):2000--2034, 2010.

\bibitem[DHK{\etalchar{+}}21]{demaine2021multidimensional}
Erik Demaine, Adam Hesterberg, Frederic Koehler, Jayson Lynch, and John Urschel.
\newblock Multidimensional scaling: Approximation and complexity.
\newblock In {\em International Conference on Machine Learning}, pages 2568--2578. PMLR, 2021.

\bibitem[dlVKKV05]{de2005tensor}
W~Fernandez de~la Vega, Marek Karpinski, Ravi Kannan, and Santosh Vempala.
\newblock Tensor decomposition and approximation schemes for constraint satisfaction problems.
\newblock In {\em Proceedings of the thirty-seventh annual ACM symposium on Theory of computing}, pages 747--754, 2005.

\bibitem[dlVKM07]{de2007linear}
Wenceslas~Fernandez de~la Vega and Claire Kenyon-Mathieu.
\newblock Linear programming relaxations of maxcut.
\newblock In {\em Proceedings of the eighteenth annual ACM-SIAM symposium on Discrete algorithms}, pages 53--61. Citeseer, 2007.

\bibitem[Fed94]{feder1994network}
Tom{\'a}s Feder.
\newblock Network flow and 2-satisfiability.
\newblock {\em Algorithmica}, 11:291--319, 1994.

\bibitem[FK96]{frieze1996regularity}
Alan Frieze and Ravi Kannan.
\newblock The regularity lemma and approximation schemes for dense problems.
\newblock In {\em Proceedings of 37th conference on foundations of computer science}, pages 12--20. IEEE, 1996.

\bibitem[FLP16]{fotakis2016sub}
Dimitris Fotakis, Michail Lampis, and Vangelis Paschos.
\newblock Sub-exponential approximation schemes for {CSP}s: From dense to almost sparse.
\newblock In {\em 33rd Symposium on Theoretical Aspects of Computer Science (STACS 2016)}, pages 37--1, 2016.

\bibitem[GS11]{GS11}
Venkatesan Guruswami and Ali~Kemal Sinop.
\newblock Lasserre hierarchy, higher eigenvalues, and approximation schemes for graph partitioning and quadratic integer programming with psd objectives.
\newblock In {\em 2011 IEEE 52nd Annual Symposium on Foundations of Computer Science}, pages 482--491. IEEE, 2011.

\bibitem[H{\aa}s01]{Hastad01}
Johan H{\aa}stad.
\newblock Some optimal inapproximability results.
\newblock {\em Journal of the ACM (JACM)}, 48(4):798--859, 2001.

\bibitem[JQST20]{jeronimo2020unique}
Fernando~Granha Jeronimo, Dylan Quintana, Shashank Srivastava, and Madhur Tulsiani.
\newblock Unique decoding of explicit $\varepsilon$-balanced codes near the gilbert-varshamov bound.
\newblock In {\em 2020 IEEE 61st Annual Symposium on Foundations of Computer Science (FOCS)}, pages 434--445. IEEE, 2020.

\bibitem[JST21]{jeronimo2021near}
Fernando~Granha Jeronimo, Shashank Srivastava, and Madhur Tulsiani.
\newblock Near-linear time decoding of ta-shma’s codes via splittable regularity.
\newblock In {\em Proceedings of the 53rd Annual ACM SIGACT Symposium on Theory of Computing}, pages 1527--1536, 2021.

\bibitem[Kho02]{Khot02}
Subhash Khot.
\newblock On the power of unique 2-prover 1-round games.
\newblock In {\em Proceedings of the thiry-fourth annual ACM symposium on Theory of computing}, pages 767--775, 2002.

\bibitem[KKMO07]{KKMO07}
Subhash Khot, Guy Kindler, Elchanan Mossel, and Ryan O’Donnell.
\newblock Optimal inapproximability results for {MAX-CUT} and other 2-variable {CSP}s?
\newblock {\em SIAM Journal on Computing}, 37(1):319--357, 2007.

\bibitem[KPRT97]{klein1997approximation}
Philip~N Klein, Serge~A Plotkin, Satish Rao, and Eva Tardos.
\newblock Approximation algorithms for {S}teiner and directed multicuts.
\newblock {\em Journal of Algorithms}, 22(2):241--269, 1997.

\bibitem[KS09]{KS09}
Marek Karpinski and Warren Schudy.
\newblock Linear time approximation schemes for the {G}ale-{B}erlekamp game and related minimization problems.
\newblock In {\em Proceedings of the forty-first annual ACM symposium on Theory of computing}, pages 313--322, 2009.

\bibitem[KSTW01]{khanna2001approximability}
Sanjeev Khanna, Madhu Sudan, Luca Trevisan, and David~P Williamson.
\newblock The approximability of constraint satisfaction problems.
\newblock {\em SIAM Journal on Computing}, 30(6):1863--1920, 2001.

\bibitem[Lee19]{lee2019partitioning}
Euiwoong Lee.
\newblock Partitioning a graph into small pieces with applications to path transversal.
\newblock {\em Mathematical Programming}, 177(1):1--19, 2019.

\bibitem[MdMMN23]{meot2023voting}
Antoine M{\'e}ot, Arnaud de~Mesmay, Moritz M{\"u}hlenthaler, and Alantha Newman.
\newblock Voting algorithms for unique games on complete graphs.
\newblock In {\em Symposium on Simplicity in Algorithms (SOSA)}, pages 124--136. SIAM, 2023.

\bibitem[MM15]{manurangsi2015approximating}
Pasin Manurangsi and Dana Moshkovitz.
\newblock Approximating dense max 2-{CSP}s.
\newblock {\em Approximation, Randomization, and Combinatorial Optimization. Algorithms and Techniques}, page 396, 2015.

\bibitem[MR16]{manurangsi2016birthday}
Pasin Manurangsi and Prasad Raghavendra.
\newblock A birthday repetition theorem and complexity of approximating dense csps.
\newblock {\em arXiv preprint arXiv:1607.02986}, 2016.

\bibitem[MS08]{mathieu2008yet}
Claire Mathieu and Warren Schudy.
\newblock Yet another algorithm for dense max cut: go greedy.
\newblock In {\em Proceedings of the nineteenth annual ACM-SIAM symposium on Discrete algorithms}, pages 176--182, 2008.

\bibitem[OGT13]{oveis2013new}
Shayan Oveis~Gharan and Luca Trevisan.
\newblock A new regularity lemma and faster approximation algorithms for low threshold rank graphs.
\newblock In {\em International Workshop on Approximation Algorithms for Combinatorial Optimization}, pages 303--316. Springer, 2013.

\bibitem[Sau72]{MR0307902}
N.~Sauer.
\newblock On the density of families of sets.
\newblock {\em J. Combinatorial Theory Ser. A}, 13:145--147, 1972.

\bibitem[She72]{shelah1972combinatorial}
Saharon Shelah.
\newblock A combinatorial problem; stability and order for models and theories in infinitary languages.
\newblock {\em Pacific Journal of Mathematics}, 41(1):247--261, 1972.

\bibitem[Yar14]{yaroslavtsev2014going}
Grigory Yaroslavtsev.
\newblock Going for speed: Sublinear algorithms for dense $r$-{CSP}s.
\newblock {\em arXiv preprint arXiv:1407.7887}, 2014.

\bibitem[YZ14]{yoshida2014approximation}
Yuichi Yoshida and Yuan Zhou.
\newblock Approximation schemes via {S}herali-{A}dams hierarchy for dense constraint satisfaction problems and assignment problems.
\newblock In {\em Proceedings of the 5th conference on Innovations in theoretical computer science}, pages 423--438, 2014.

\end{thebibliography}

\end{document}